\declaretheorem[name=Theorem]{mytheorem}
\date{}
\journalname{}
\newcommand*{\cfg}[2]{\mathop{\mathrm{cfg}_{#1}\llbracket #2 \rrbracket}\nolimits}
\newcommand*{\bcfg}[2]{\mathop{\mathrm{cfg}_{#1}\bigl\llbracket #2 \bigr\rrbracket}\nolimits}
\newcommand*{\bodycfg}[2]{\mathop{\mathrm{cfg}^\mathrm{b}_{#1}\llbracket #2 \rrbracket}\nolimits}
\newcommand*{\gcfg}[2]{\mathop{\mathrm{cfg}^1_{#1}\llbracket #2 \rrbracket}\nolimits}
\newcommand*{\tv}[1]{\mathop{\mathrm{tv}_{#1}}\nolimits}
\newcommand*{\true}{1}
\newcommand*{\false}{0}
\newcommand*{\maybe}{\mathrm{?}}
\newcommand*{\CFG}{\mathrm{CFG}}
\newcommand*{\Id}{\mathrm{Id}}
\newcommand*{\id}{\mathrm{id}}
\newcommand*{\df}{\mathrm{df}}
\newcommand*{\cs}{\mathrm{cs}}
\newcommand*{\tb}{t_\mathrm{b}}
\newcommand*{\tc}{t_\mathrm{c}}
\newcommand*{\Ms}{M^\mathrm{s}}
\newcommand*{\Mg}{M^\mathrm{g}}
\newcommand*{\Ft}{\mathsf{ft}}
\newcommand*{\Ftout}{\mathsf{ft}_\mathsf{out}}
\newcommand*{\St}{\mathsf{st}}
\newcommand*{\Gt}{\mathsf{gt}}
\newcommand*{\Gtout}{\mathsf{gt}_\mathsf{out}}
\newcommand*{\Bp}{\mathsf{bp}}
\newcommand*{\Cp}{\mathsf{cp}}
\newcommand*{\Rp}{\mathsf{rp}}
\newcommand*{\apn}{\alpha}
\newcommand*{\napn}{\tau}
\newcommand*{\bapn}{\beta}
\newcommand*{\capn}{\chi}
\newcommand*{\gapn}{\gamma}
\newcommand*{\rapn}{\rho}
\newcommand*{\stapn}{\sigma}
\newcommand*{\ftapn}{\nu}
\newcommand*{\pgapn}{\phi}
\newcommand*{\orig}{\mathrm{orig}}
\newcommand*{\dapn}{\delta}
\newcommand*{\NPATH}{\mathrm{NPATH}}
\newcommand*{\npe}{\mathrm{NP_E}}
\newcommand*{\nps}{\mathrm{NP_S}}
\newcommand*{\apc}[2]{\mathop{\mathrm{apc}_{#1}\llbracket #2 \rrbracket}\nolimits}
\newcommand*{\bapc}[2]{\mathop{\mathrm{apc}_{#1}\bigl\llbracket #2 \bigr\rrbracket}\nolimits}
\newcommand*{\bodyapc}[2]{\mathop{\mathrm{apc}^\mathrm{b}_{#1}\llbracket #2 \rrbracket}\nolimits}
\newcommand*{\tp}[1]{\mathop{\mathbf{t}_{#1}}\nolimits}
\newcommand*{\fp}[1]{\mathop{\mathbf{f}_{#1}}\nolimits}
\newcommand*{\pp}[1]{\mathop{\mathbf{p}_{#1}}\nolimits}
\newcommand*{\ttp}[1]{\mathop{\mathbf{tt}_{#1}}\nolimits}
\newcommand*{\tfp}[1]{\mathop{\mathbf{tf}_{#1}}\nolimits}
\newcommand*{\ffp}[1]{\mathop{\mathbf{ff}_{#1}}\nolimits}
\newcommand*{\ppp}[1]{\mathop{\mathbf{pp}_{#1}}\nolimits}
\newcommand*{\fund}[3]{\mathord{#1}\colon#2\to#3}
\newcommand*{\pard}[3]{\mathord{#1}\colon#2\rightarrowtail#3}
\providecommand*{\Nset}{\mathbb{N}}            
\newcommand*{\stm}{\mathrm{stm}}
\newcommand*{\Stm}{\mathrm{Stm}}
\newcommand*{\Exp}{\mathrm{Exp}}
\newcommand*{\Lab}{\mathrm{Lab}}
\newcommand*{\cand}{\mathbin{\mathtt{\&\&}}}
\newcommand*{\cor}{\mathbin{\mathtt{||}}}
\newcommand*{\cqmk}{\mathrel{\mathtt{?}}}
\newcommand*{\ccol}{\mathrel{\mathtt{:}}}
\newcommand*{\cqmkcol}{\mathbin{\mathord{\mathtt{?}}\mathord{\mathtt{:}}}}
\newcommand*{\cnot}{\mathop{\mathtt{!}}\nolimits}
\newcommand*{\uop}{\mathop{\mathrm{uop}}}
\newcommand*{\bop}{\mathbin{\mathrm{bop}}}
\newcommand*{\samebody}{\mathrm{sb}}
\newcommand*{\pg}{\mathrm{pg}}
\newcommand{\defeq}{\mathrel{\mathord{:}\mathord{=}}}
\newcommand{\summary}[1]{\textrm{\textbf{\textup{#1}}}}
\newcommand*{\Cplusplus}{{C\nolinebreak[4]\hspace{-.05em}\raisebox{.4ex}{\tiny\bf ++}}}
\renewcommand{\emptyset}{\varnothing}
\newcommand*{\sseq}{\subseteq}
\newcommand*{\union}{\cup}
\newcommand{\sset}[2]{{\renewcommand{\arraystretch}{1.2}
                      \left\{\,#1 \,\left|\,
                               \begin{array}{@{}l@{}}#2\end{array}
                      \right.   \,\right\}}}
\newcommand{\st}{\mathrel{.}}
\newcommand*{\bnf}{\mathrel{\mathord{:}\mathord{:}\mathord{=}}}
\newcommand*{\stick}{\mathrel{|}}
\newcommand{\proofsec}[1]{\par\textbf{#1}:}
\begin{document}

\title{The ACPATH Metric:
       Precise Estimation of the Number of Acyclic Paths in C-like Languages
      }
\author{Roberto Bagnara
\and Abramo Bagnara
\and Alessandro Benedetti
\and Patricia M.\ Hill
}
\authorrunning{R.~Bagnara,
               A.~Bagnara,
               A.~Benedetti,
               P.~M.~Hill
}
\titlerunning{The ACPATH Metric}

\institute{R.~Bagnara
    \at Department of Mathematical, Physical and Computer Sciences
    University of Parma,
    Italy, \\
    \email{{\it name.surname}@unipr.it}
  \and
    R.~Bagnara, A.~Bagnara, A.~Benedetti, and P.~M.~Hill
    \at BUGSENG srl,
    \url{http://bugseng.com},
    Italy, \\
    \email{{\it name.surname}@bugseng.com}
\and
    A.~Benedetti
    \at University of Parma,
    Italy
}

\maketitle

\begin{abstract}
  \emph{NPATH} is a metric introduced by Brian A.~Nejmeh in \cite{Nejmeh88}
  that is aimed at overcoming some important limitations of McCabe's
  cyclomatic complexity metric.
  Despite the fact that the declared NPATH objective is to
  count the number of acyclic execution paths through a function,
  the definition given for the C~language in \cite{Nejmeh88}
  fails to do so even for very simple programs.
  We show that counting the number of acyclic paths in CFG
  is unfeasible in general.  Then we define a new metric
  for C-like languages, called \emph{ACPATH}, that allows to quickly
  compute a very good estimation of the number of acyclic execution
  paths through the given function.  We show that, if the function body
  does not contain backward gotos and does not contain jumps into a loop
  from outside the loop, then such estimation is actually exact.
  \subclass{68N30}
  \CRclass{D.2.8 \and D.2.5}
\end{abstract}

\section{Introduction}
\label{sec:introduction}

\emph{Software testing} is a process whereby software components or entire
systems are executed so as to gather information about their behavior.
Although a common expected outcome of software testing is the
identification of defects, testing and debugging are two quite different
processes \cite{Myers79}: while the latter is a development activity,
the former is one of the methodologies for software verification and
validation.

Despite the increasing adoption of formal methods and static verification
techniques, software testing is still the most used verification technique
in several industrial sectors costing as much as 50\% or
even 75\% of the total development costs \cite{HailpernS02}.

One of the problems of software testing is the need for adequate
\emph{test suites}, i.e., collections of so-called \emph{test cases},
each consisting of input and predicted output data.  Software testing
can only be considered to be acceptable as a verification methodology
when the available test suites exercise a significant portion of the
code and input space of the component under test.  Therefore, the
feasibility of meaningfully testing a system and its subsystems with a
test suite of manageable size and cost is a qualitative attribute,
called \emph{testability}, of the system/subsystems.

Testing at the unit level (a.k.a.\ \emph{unit testing}, i.e., testing
individual functions or small groups of functions) is of particular
importance as it often allows for the early detection of problems,
when the cost of fixing them is much lower than if they are found
during \emph{integration testing} (i.e., when the integration of
different units are tested as a whole to assess their ability to work
together).
Hence, an important reason for limiting the structural complexity of
software units is to facilitate unit testing,
i.e., to improve testability of the units by limiting the sizes
of the unit test suites and the intellectual effort of obtaining them.

The \emph{NPATH} metric was introduced by Brian A.~Nejmeh in \cite{Nejmeh88}
in order to automatically quantify the testability of individual procedures
or functions, yet addressing the shortcomings of McCabe \emph{cyclomatic
complexity}, another metric meant to quantify testability
\cite{Harrison84,McCabe76}.
According to \cite{Nejmeh88}, the shortcomings of cyclomatic complexity
are:\footnote{Cyclomatic complexity was criticized also by several
other authors:
see, e.g., \cite{GillK90,GillK91,Shepperd88}.}
\begin{itemize}
\item
  the number of acyclic paths  in the control-flow graph of a procedure
  varies from a linear to an exponential function of the cyclomatic complexity
  number \cite{Evangelist84b};  as the fraction of acyclic paths covered
  by a test suite is an important measure of adequacy of the test suite,
  it turns out that the cyclomatic complexity number has little correlation
  with the testing effort;
\item
  cyclomatic complexity does not distinguish between different kinds of
  control-flow structures (e.g., between conditional and iteration
  statements) whereas such distinction is important in the assessment
  of testability;
\item
  cyclomatic complexity does not take into account the way
  control-flow structures are possibly nested with one another
  (e.g., two disjoint while loops give rise to the same cyclomatic complexity
  number as two nested while loops) \cite{CurtisSMBL79};
  again, this distinction is relevant as far as testability is concerned.
\end{itemize}

While the declared intent of NPATH is to count the number of acyclic
paths through a function, the definition given for the C~language
in \cite{Nejmeh88} fails to do so, as shown by the following example:

\begin{example}
\label{ex:NPATH-buggy-original}
Consider the following C function:
\begin{verbatim}
  int f(int a, int b, int c, int d, int e) {
    if (a && b && c)
      return d ? 0 : 1;
    else
      return e ? 0 : 1;
  }
\end{verbatim}
The algorithm given in \cite{Nejmeh88} gives
$\mathrm{NPATH} = 2 + 2 + 2 = 6$,
but this is neither the number of possible paths within \verb+f()+
nor an upper bound to it.
In fact the number of possible paths is 8, corresponding to the following
combinations:
\begin{enumerate}
\item
\verb"a && b && c && d"
\item
\verb"a && b && c && !d"
\item
\verb"a && b && !c && e"
\item
\verb"a && b && !c && !e"
\item
\verb"a && !b && e"
\item
\verb"a && !b && !e"
\item
\verb"!a && e"
\item
\verb"!a && !e"
\end{enumerate}
\end{example}

Section~\ref{sec:ACPATH} presents a new metric for C-like
languages that demonstrably corresponds, under some conditions
that are often satisfied, to the number of acyclic paths through the function.

The plan of the paper is as follows:
Section~\ref{sec:preliminaries} introduces preliminary notions
and notations;
Section~\ref{sec:NPATH} recalls the NPATH metric for C-like
languages as defined in \cite{Nejmeh88} highlighting the difference
between what it is meant to measure and what it really measures;
Section~\ref{sec:ACPATH} presents the new ACPATH metric for C-like
languages;
Section~\ref{sec:implementation-experimental-evaluation}
presents the results of an experimental
evaluation that studies, on a number of real-world projects,
the relationship between ACPATH and NPATH;
Section~\ref{sec:conclusion} discusses
the contribution of the present paper,
some related work, and then concludes.

\section{Preliminaries}
\label{sec:preliminaries}

In this section we introduce the preliminary notions such as
\emph{control flow graphs} and \emph{acyclic paths}, notations
including the abstract C syntax used in the paper and some formal
definitions needed to prove the theoretical results of
Section~\ref{sec:ACPATH}.

\subsection{Control Flow Graphs}
\label{sec::control-flow-graphs}

A control flow graph (CFG) is an abstraction of the computation paths
of a procedure.
\begin{definition} \summary{(Control flow graph.)}
A \emph{control flow graph} $G$ is a triple $(N, A,s)$
where $(N, A)$ is a directed graph, 
hence $A \subseteq N \times N$,
and $s \in N$ is called the \emph{entry node of $G$}.
A node $n \in N$ such that $n$ has no successor in $A$
(i.e., for each $(x, y) \in A$ we have $x \neq n$)
is called an \emph{exit node of $G$}.
\end{definition}

A node in the graph represents either a basic block of code
(i.e., a sequence of statements where control flows from
the beginning to the end of the sequence) or a branch point
in the procedure.
The entry node represents the procedure's entry point and
each exit node represents an exit point of the procedure.
An arc represents possible flow control.

\subsection{Acyclic Paths in a CFG}
\label{sec::acyclic-paths}

An acyclic path in a CFG is a path from the entry node to a
target node that does not traverse an arc more than once.
\begin{definition} \summary{(Acyclic path.)}
An \emph{acyclic path} in a control flow graph $(N, A, s)$ is any
sequence of nodes of the form $n_0$, \dots,~$n_{k-1}$ such that
$n_0 = s$
and, if $M \defeq \bigl\{\, (n_{i-1}, n_i) \bigm| i = 1, \dots,~k \,\bigr\}$,
then $M \sseq A$ and $|M| = k$.
\end{definition}

Let $G = (N, A, s)$ be a CFG and $t \in N$ be a target node.
The number of acyclic paths in $G$ leading to $t$,
denoted by $\napn(G, t)$, can be computed as follows:
\begin{align}
  \napn(G, t)
    &\defeq
      \napn(s, A, t), \\
  \napn(n, A, t)
    &\defeq
      \begin{cases}
        1,
          &\text{if $n = t$;} \\
        \sum\limits_{(n, m) \in A} \napn\bigl(m, A \setminus \{ (n, m) \}, t\bigr),
          &\text{otherwise.}
      \end{cases}
\end{align}
If we denote by $e(G)$ the set of exit nodes of $G$,
the number of acyclic paths in $G$,
denoted by $\apn(G)$, is given by
\begin{equation}
  \apn(G)
    \defeq
  \sum\limits_{t \in e(G)} \napn(s, A, t).
\end{equation}

\subsection{C Abstract Syntax}
\label{sec::c-abstract-syntax}

The abstract syntax of expressions considered in this paper, which is inspired
by the one used in the \emph{Clang} C language family compiler front-end,
is approximated by the following grammar in BNF form:
\begin{align*}
  \Exp \ni
  E &\bnf x \stick k \stick \mathrm{ICE}
        \stick \cnot E_1 \stick \mathop{\mathtt{+}}E_1
        \stick \mathop{\mathtt{-}}E_1 \stick (E_1)
        \stick \mathop{(\mathrm{type})} E_1 \stick \uop E_1 \\
      &\stick E_1 \cand E_2
        \stick E_1 \cor E_2
        \stick E_1 , E_2
        \stick E_1 \cqmkcol E_2
        \stick E_1 \bop E_2
        \stick E_1 \cqmk E_2 \ccol E_3
\intertext{%
where $x$ is a variable, $k$ is an integer literal,
`$\mathrm{ICE}$' is any Integer Constant Expression, that is, a non-literal
expression that can be evaluated to a constant at compile time
(e.g., $3+4$),
`$\mathop{(\mathrm{type})}$' represents (implicit and explicit)
cast operators, `$\mathrm{uop}$' and `$\mathrm{bop}$' are any unary or binary
operators except those already considered.
The abstract syntax of commands is approximated by the following grammar:
}
  \Stm \ni
    S &\bnf E; \stick S_1 \; S_2 \stick \mathtt{return}
       \stick \mathtt{return} \; E
       \stick \mathtt{if} \; (E) \; S_1 \; \mathtt{else} \; S_2 \\
      &\stick \mathtt{if} \; (E) \; S_1
       \stick \mathtt{switch} \; (E) \; S
       \stick \mathtt{while} \; (E) \; S
       \stick \mathtt{do} \; S \; \mathtt{while} \; (E) \\
      &\stick \mathtt{for} \; (E_1; E_2; E_3) \; S
       \stick \mathtt{break}
       \stick \mathtt{continue}
       \stick \mathtt{goto} \; \id
       \stick L : S \stick \{S\}
       \stick \mathrm{stm}
\intertext{%
where $\mathrm{stm}$ generates any command except those already considered
and $L : S$ is a labeled statement:
}
    \Lab \ni
      L &\bnf \mathtt{} \; \mathtt{case} \; z \stick \mathtt{default} \stick \id
\end{align*}
where $L$ is a \emph{label}, $z$ is an ICE and $\id$ is a C identifier.

\subsection{From C Abstract Syntax To Control Flow Graphs}
\label{sec::from-c-abstract-syntax-to-control-flow-graphs}

For any procedure and hence any abstract command $\mathrm{Stm}$ that
represents it, the actual control flow and the corresponding CFG will
depend on the compiler and on the selected optimization level and
capabilities.
For the purposes of this paper, it suffices to define a notion of
``reference CFG''
and to restrict the possible optimization levels to three:
\begin{enumerate}
\label{enum:optimization}
\item[0]
\label{enum:optimization-0}
no optimization at all,
\item[1]
\label{enum:optimization-1}
branch removal via Boolean interpretation of each integer
literal, and
\item[2]
\label{enum:optimization-2}
branch removal via the Boolean interpretation of each ICE.
\end{enumerate}
Note that, if the metrics we are after are meant to measure
testability only, there is no difference between a constant literal and
an ICE.  An alternative point of view is that NPATH's purpose is to
evaluate also readability and maintainability and, in this case, an ICE
can be considered as an ordinary compound expression.  In this paper, we
wish to support both views, whence the parametrization
on the optimization level.

The following definition gives, by structural induction on the
abstract C syntax of Section~\ref{sec::c-abstract-syntax}, the
reference CFG for any function body. The definition can be skipped
unless the reader wishes to check the proofs of the theorems.
Appendix~\vref{app:example-reference-cfgs} provides several examples
showing, by means of figures, the definition at work.

\begin{definition} \summary{(Reference CFG for the C language.)}
\label{def:reference-cfg}
Let $\CFG$ denote the set of all CFGs where the nodes are natural numbers
and let $i \in \{ 0, 1, 2 \}$ denote the three optimization levels.

To define the reference CFG for expressions we first define the function
$\fund{\tv{i}}{\Exp}{\{ 0, 1, \maybe  \}}$ that returns, for each $E \in \Exp$,
a three-valued Boolean defined as follows, where $E \mapsto b$ means
``$E$ evaluates to $b$'':
\begin{equation}
  \tv{i}(E)
    \defeq
      \begin{cases}
        b,
          &\text{if $E \mapsto b$ and $i = 1$ and $E$ is an integer literal;} \\
        b,
          &\text{if $E \mapsto b$ and $i = 2$ and $E$ is an ICE;} \\
        \maybe,
          &\text{otherwise.}
      \end{cases}
\end{equation}

The function
\[
\fund{\cfg{i}{\cdot}}{\Exp \times \Nset^3}{\CFG \times \Nset}
\]
 is defined as follows: whenever
$\cfg{i}{E}(t, f, m) = \langle G, m' \rangle$, then
 $G = (N, A, s) \in\CFG$, where the nodes are
 $N \sseq \{ t, f \} \union [m, m'-1]$ and
$t$ (resp., $f$) is reached from $s$ if $E$ evaluates to
 true (resp., false).

\paragraph{Variables:}
\begin{equation}
\label{eq:cfg_i-variables}
  \cfg{i}{x}(t, f, m)
    \defeq
      \bigl\langle \bigl(\{m, t, f\}, \{ (m,t), (m,f) \}, m\bigr), m+1 \bigr\rangle.
\end{equation}

\paragraph{Constants:}
if $E = k$ or $E = \mathrm{ICE}$,
\begin{equation}
\label{eq:cfg_i-constants}
  \cfg{i}{E}(t, f, m)
    \defeq
      \begin{cases}
        \bigl\langle \bigl(\{t\}, \emptyset, t\bigr), m \bigr\rangle,
          &\text{if $\tv{i}(E) = 1$;} \\
        \bigl\langle \bigl(\{f\}, \emptyset, f\bigr), m \bigr\rangle,
          &\text{if $\tv{i}(E) = 0$;} \\
        \bigl\langle \bigl(\{m, t, f\}, \{ (m,t), (m,f) \}, m\bigr), m+1 \bigr\rangle,
          &\text{otherwise.}
      \end{cases}
\end{equation}

\paragraph{Logical negation:}
\begin{equation}
\label{eq:cfg_i-negExpr}
  \cfg{i}{\cnot E_1}(t, f, m)
    \defeq
      \cfg{i}{E_1}(f, t, m),
\end{equation}
where $E = \cnot E_1$.

\paragraph{Unary plus and minus, parentheses and cast expressions:}
\begin{equation}
\label{eq:cfg_i-specUnaryExpr}
\cfg{i}{E}(t, f, m)
    \defeq
      \cfg{i}{E_1}(t, f, m),
\end{equation}
where
$E \in \bigl\{ +E_1, -E_1, (E_1), (type)E_1 \bigr\}$.

\paragraph{Other unary operators:}
\begin{equation}
\label{eq:cfg_i-unaryExpr}
  \cfg{i}{\uop E_1}(t, f, m)
    \defeq
      \bigl\langle (N, A, s_1), m_1 \bigr\rangle,
\end{equation}
where $\uop$ is a unary operator not already considered,
$N \defeq N_1 \union \{m, t, f\}$,
$A \defeq A_1 \union \bigl\{(m,t), (m,f)\bigr\}$, and
$\cfg{i}{E_1}(m, m, m+1) = \bigl\langle (N_1, A_1, s_1), m_1 \bigr\rangle$.

\paragraph{Logical conjunction:}
\begin{equation}
\label{eq:cfg_i-conjExpr}
  \cfg{i}{E_1 \cand E_2}(t, f, m)
    \defeq
      \begin{cases}
        \bigl\langle \bigl(\{f\}, \emptyset, f\bigr), m \bigr\rangle,
          &\text{if $\tv{i}(E_1) = \false$,} \\
        \cfg{i}{E_2}(t, f, m),
          &\text{if $\tv{i}(E_1) = \true$,} \\
        \bigl\langle (N, A, s_1), m_2 \bigr\rangle,
          &\text{otherwise,}
      \end{cases}
\end{equation}
where
$N \defeq N_1 \union N_2$, $A \defeq A_1 \union A_2$,
$\cfg{i}{E_2}(t, f, m) = \bigl\langle (N_2, A_2, s_2), m_1 \bigr\rangle$, and
$\cfg{i}{E_1}(s_2, f, m_1) =  \bigl\langle (N_1, A_1, s_1), m_2 \bigr\rangle$.

\paragraph{Logical disjunction:}
\begin{equation}
\label{eq:cfg_i-disjExpr}
  \cfg{i}{E_1 \, \cor \, E_2}(t, f, m)
    \defeq
      \begin{cases}
        \bigl\langle \bigl(\{t\}, \emptyset, t\bigr), m \bigr\rangle,
          &\text{if $\tv{i}(E_1) = \true$,} \\
        \cfg{i}{E_2}(t, f, m),
          &\text{if $\tv{i}(E_1) = \false$,} \\
        \bigl\langle (N, A, s_1), m_2 \bigr\rangle,
          &\text{otherwise,}
      \end{cases}
\end{equation}
where
$N \defeq N_1 \union N_2$, $A \defeq A_1 \union A_2$,
$\cfg{i}{E_2}(t, f, m) = \bigl\langle (N_2, A_2, s_2), m_1 \bigr\rangle$ and
$\cfg{i}{E_1}(t, s_2, m_1) = \bigl\langle (N_1, A_1, s_1), m_2 \bigr\rangle$.

\paragraph{Comma operator:}
\begin{equation}
\label{eq:cfg_i-commaExpr}
  \cfg{i}{E_1, E_2}(t, f, m)
    \defeq
      \bigl\langle (N, A, s_1), m_2 \bigr\rangle,
\end{equation}
where
$N \defeq N_1 \union N_2$,
$A \defeq A_1 \union A_2$,
$\cfg{i}{E_2}(t, f, m) = \bigl\langle (N_2, A_2, s_2), m_1 \bigr\rangle$ and
$\cfg{i}{E_1}(s_2, s_2, m_1) = \bigl\langle (N_1, A_1, s_1), m_2 \bigr\rangle$.

\paragraph{Binary conditional operator:}
\begin{equation}
\label{eq:cfg_i-binCondExpr}
  \cfg{i}{E_1 \cqmkcol \; E_2}(t, f, m)
    \defeq
      \begin{cases}
        \bigl\langle \bigl(\{t\}, \emptyset, t\bigr), m \bigr\rangle,
          &\text{if $\tv{i}(E_1) = \true$,} \\
        \cfg{i}{E_2}(t, f, m),
          &\text{if $\tv{i}(E_1) = \false$,} \\
        \bigl\langle (N, A, s_1), m_2 \bigr\rangle,
          &\text{otherwise,}
      \end{cases}
\end{equation}
where
$N \defeq N_1 \union N_2$,
$A \defeq A_1 \union A_2$,
$\cfg{i}{E_2}(t, f, m) = \bigl\langle (N_2, A_2, s_2), m_1 \bigr\rangle$ and
$\cfg{i}{E_1}(t, s_2,m_1) = \bigl\langle (N_1, A_1, s_1), m_2 \bigr\rangle$.

\paragraph{Other binary operators:}
\begin{equation}
\label{eq:cfg_i-binExpr}
  \cfg{i}{E_1 \bop E_2}(t, f, m)
    \defeq
  \bigl\langle (N, A, s_1), m_2 \bigr\rangle,
\end{equation}
where
$N \defeq N_1 \union N_2 \union \{m, t, f\}$,
$A \defeq A_1 \union A_2 \union \bigl\{(m,t), (m,f)\bigr\}$, and
\begin{align*}
  \cfg{i}{E_2}(m, m, m+1) &= \bigl\langle (N_2, A_2, s_2), m_1 \bigr\rangle, \\
  \cfg{i}{E_1}(s_2, s_2, m_1) &= \bigl\langle (N_1, A_1, s_1), m_2 \bigr\rangle,
\end{align*}

\paragraph{Conditional operator:}
\begin{equation}
\label{eq:cfg_i-condExpr}
  \cfg{i}{E_1 \cqmk \, E_2 \ccol E_3}(t, f, m)
    \defeq
      \begin{cases}
        \cfg{i}{E_2}(t, f, m),
          &\text{if $\tv{i}(E_1) = \true$,} \\
        \cfg{i}{E_3}(t, f, m),
          &\text{if $\tv{i}(E_1) = \false$,} \\
        \bigl\langle (N, A, s_1), m_3 \bigr\rangle,
          &\text{otherwise,}
      \end{cases}
\end{equation}
where
$N \defeq N_1 \union N_2 \union N_3$,
$A \defeq A_1 \union A_2 \union A_3$ and
\begin{align*}
  \cfg{i}{E_2}(t, f, m) &= \bigl\langle (N_2, A_2, s_2), m_1 \bigr\rangle, \\
  \cfg{i}{E_3}(t, f, m_1) &= \bigl\langle (N_3, A_3, s_3), m_2 \bigr\rangle, \\
  \cfg{i}{E_1}(s_2, s_3,m_2) &= \bigl\langle (N_1, A_1, s_1), m_3 \bigr\rangle.
\end{align*}

Before defining the reference CFG for statements, we define a special form
that will be used for labeled statements:
\[
  \fund{\cfg{i}{\cdot}}
       {\Lab \times \Nset^2}
       {\CFG
         \times \wp\bigl((\Id \union \{\cs, \df\}) \times \Nset\bigr)
          \times \Nset}
\]
where $\Nset_\bot \defeq \Nset \union \{\bot\}$ and
$\Id$ denotes the set of identifiers in the C language.
If $\cfg{i}{L}(t, m) = \langle G, \Ms,  m' \rangle$,
then:
\begin{itemize}
\item
  $G = (N, A, s) \in \CFG$ and
  $N \sseq \{ t \} \union [m, m'-1]$;
\item
  $t \in N$ is reached if/when the execution of $L$ terminates;
\item
  $\Ms$ is a multimap associating elements of $\Id \union \{\cs, \df\}$
  (where $\cs$ incorporates all $\mathtt{case} \; z$, where $z$ is an ICE, and $\df$ stands for $\mathtt{default}$,
  respectively) to nodes, such that at most
  one occurrence of $\df$ is allowed:
  if $(\id, n) \in \Ms$, then $n$ is the node in $G$
  corresponding to a statement labeled with $\id$;
  if $(\cs, n) \in \Ms$, then $n$ is a node in $G$
  corresponding to a $\mathtt{case}$-labeled statement;
  if $(\df, n) \in \Ms$, then $n$ is the node in $G$
  corresponding to a $\mathtt{default}$-labeled statement;
\item
  $m,m' \in \Nset$ are, respectively, the lower and the upper bound of the nodes' labels introduced by $L$
\end{itemize}

\paragraph{Labels:}
\begin{equation}
\label{eq:cfg_i-labels}
  \bcfg{i}{L}(t, m)
    \defeq
      \begin{cases}
        \bigl\langle \bigl(\{m, t\},\bigl\{(m, t)\bigr\}, m\bigr), \bigl\{(\cs, m)\bigr\}, m+1 \bigl\rangle,
          &\text{if $L = \mathtt{case} \; z$,} \\
        \bigl\langle \bigl(\{m, t\},\bigl\{(m, t)\bigr\}, m\bigr), \bigl\{(\df, m)\bigr\}, m+1 \bigl\rangle,
          &\text{if $L = \mathtt{default}$,} \\
        \bigl\langle \bigl(\{m, t\},\bigl\{(m, t)\bigr\}, m\bigr), \bigl\{(\id, m)\bigr\}, m+1 \bigl\rangle,
          &\text{if $L = \id$.}
      \end{cases}
\end{equation}

We can now define the reference CFG for statements
\[
  \cfg{i}{\cdot} \colon
    \Stm \times \Nset \times \Nset_\bot^2 \times \Nset
    \rightarrow
      \CFG
      \times \wp\bigl((\Id \union \{\cs, \df\}) \times \Nset\bigr)
      \times \wp(\Id \times \Nset)
      \times \Nset.
\]
If $\cfg{i}{S}(t, \tb, \tc, m) = \langle G, \Ms, \Mg, m' \rangle$,
then:
\begin{itemize}
\item
  $G = (N, A, s) \in \CFG$ and
  $N \sseq \{ t , \tb, \tc\} \union [m, m'-1]$;
\item
  $t \in N$ is reached if/when the execution of $S$ terminates;
\item
  $\tb \in N$ is reached if/when the execution of $S$ terminates
  because a $\mathtt{break}$ has been executed;
\item
  $\tc \in N$ is reached if/when the execution of $S$ terminates
  because a $\mathtt{continue}$ has been executed;
\item
  $\Ms$ is a multimap for the labels in $S$, as defined earlier;
\item
  $\Mg$ is a map associating the identifiers of $\mathtt{goto}$ statements
  in $S$ to their target node in $G$;
\item
  $m,m' \in \Nset$ are, respectively, the lower and the upper bound
  of the nodes' labels introduced by $S$.
\end{itemize}

\paragraph{Expression statement:}
\begin{equation}
\label{eq:cfg_i-exprStat}
  \cfg{i}{E;}(t, \tb, \tc, m)
    \defeq
  \bigl\langle (N, A, s), \emptyset, \emptyset, m_1 \bigr\rangle, \\
\end{equation}
where
$\cfg{i}{E}(t, t, m) = \bigl\langle (N, A, s), m_1 \bigr\rangle$.

\paragraph{Sequential composition:}
\begin{equation}
\label{eq:cfg_i-seqStat}
  \cfg{i}{S_1 \; S_2}(t, \tb, \tc, m)
    \defeq
      \bigl\langle (N_1 \union N_2, A_1 \union A_2, s),  \Ms, \Mg, m_2 \bigr\rangle, \\
\end{equation}
where
$\Ms \defeq \Ms_1 \union \Ms_2$,
$\Mg \defeq \Mg_1 \union \Mg_2$,
and
\begin{align*}
  \cfg{i}{S_2}(t, \tb, \tc, m)
    &= \bigl\langle (N_2, A_2, s_2), \Ms_2, \Mg_2, m_1\bigr\rangle, \\
  \cfg{i}{S_1}(s_2, \tb, \tc, m_1)
    &= \bigl\langle (N_1, A_1, s), \Ms_1, \Mg_2, m_2 \bigr\rangle.
\end{align*}

\paragraph{Return statement:}
\begin{equation}
\label{eq:cfg_i-returnStat}
  \bcfg{i}{\mathtt{return}}(t, \tb, \tc, m)
    \defeq
   \bigl\langle \bigl(\{m\}, \emptyset, m\bigl), \emptyset, \emptyset, m+1  \bigr\rangle,
\end{equation}

\paragraph{Return with expression statement:}
\begin{equation}
\label{eq:cfg_i-returnExprStat}
  \bcfg{i}{\mathtt{return} \; E}(t, \tb, \tc, m)
    \defeq
   \bigl\langle (N, A, s), \emptyset, \emptyset, m_1  \bigr\rangle,
\end{equation}
where
$N \defeq N_E \union \{m\}$ and
$\cfg{i}{E}(m, m, m+1) = \bigl\langle (N_E, A, s), m_1 \bigr\rangle$.

\paragraph{Conditional statement:}
\begin{multline}
\label{eq:cfg_i-condStat}
  \bcfg{i}{\mathtt{if} \; (E) \; S_1 \; \mathtt{else} \; S_2}(t, \tb, \tc, m) \\
    \defeq
      \begin{cases}
        \cfg{i}{S_1}(t, \tb, \tc, m),
          &\text{if $\tv{i}(E) = \true \land \Ms_2 = \emptyset$,} \\
        \cfg{i}{S_2}(t, \tb, \tc, m),
          &\text{if $\tv{i}(E) = \false \land \Ms_1 = \emptyset$,} \\
        \bigl\langle (N, A, s), \Ms, \Mg, m_3 \bigr\rangle,
          &\text{otherwise,}
      \end{cases}
\end{multline}
where
$N \defeq N_E \union N_1 \union N_2 \union \bigl\{ m, m_1, t \bigr\}$,
$A \defeq A_E \union A_1 \union A_2 \union \bigl\{ (m, t), (m_1, t) \bigr\}$,
$\Ms \defeq \Ms_1 \union \Ms_2$,
$\Mg \defeq \Mg_1 \union \Mg_2$, and
\begin{align*}
  \cfg{i}{S_2}(m, \tb, \tc, m+1) &= \bigl\langle (N_2, A_2, s_2), \Ms_2, \Mg_2, m_1\bigr\rangle, \\
  \cfg{i}{S_1}(m_1, \tb, \tc, m_1+1) &= \bigl\langle (N_1, A_1, s_1), \Ms_1, \Mg_1, m_2 \bigr\rangle, \\
  \cfg{i}{E}(s_1, s_2, m_2) &= \bigl\langle (N_E, A_E, s), m_3 \bigr\rangle.
\end{align*}

\paragraph{One-armed conditional statement:}
\begin{multline}
\label{eq:cfg_i-one-armedCondStat}
  \bcfg{i}{\mathtt{if} \; (E) \; S_1}(t, \tb, \tc, m) \\
    \defeq
      \begin{cases}
        \cfg{i}{S_1}(t, \tb, \tc, m),
          &\text{if $\tv{i}(E) = \true$,} \\
        \bigl\langle (\{ t \}, \emptyset, t), \emptyset, \emptyset, m \bigr\rangle,
          &\text{if $\tv{i}(E) = \false \land \Ms = \emptyset$,} \\
        \bigl\langle (N, A, s), \Ms, \Mg, m_2 \bigr\rangle,
          &\text{otherwise,}
      \end{cases}
\end{multline}
where
$N \defeq N_E \union N_1  \union \bigl\{ m, t \bigr\}$,
$A \defeq A_E \union A_1  \union \bigl\{ (m, t) \bigr\}$, and
\begin{align*}
  \cfg{i}{S_1}(m, \tb, \tc, m+1)
    &= \bigl\langle (N_1, A_1, s_1), \Ms, \Mg, m_1 \bigr\rangle, \\
  \cfg{i}{E}(s_1, t, m_1)
    &= \bigl\langle (N_E, A_E, s), m_2 \bigr\rangle.
\end{align*}

\paragraph{Switch statement:}
\begin{multline}
\label{eq:cfg_i-switchStat}
  \bcfg{i}{\mathtt{switch} \; (E) \; S}(t, \tb, \tc, m) \\
    \defeq
      \begin{cases}
        \bigl\langle (N, A_1, s), \Ms, \Mg, m_2 \bigr\rangle,
          &\text{if $(\df, n) \in \Ms_1$,} \\
        \bigl\langle (N, A_2, s), \Ms, \Mg, m_2 \bigr\rangle,
          &\text{otherwise,}
      \end{cases}
\end{multline}
where
$N \defeq N_E \union N_S \union \{m, m_1\}$ and
\begin{align*}
  A_1
    &\defeq
      A_E \union A_S \union \{(m, t)\} \\
        &\union \bigl\{\,
                 (m_1, n)
               \bigm|
                 \exists l \in \{\cs, \df\} \st (l, n) \in \Ms_1
               \,\bigr\}, \\
  A_2 &\defeq A_1 \union \bigl\{(m_1, m)\bigr\}, \\
  \Ms
    &\defeq
      \Ms_1 \setminus \bigl\{\, (l, n) \bigm| l \in \{\cs, \df\} \,\bigr\}, \\
  \cfg{i}{S}(m, m, \tc, m+1)
    &= \bigl\langle (N_S, A_S, s_S), \Ms_1, \Mg, m_1\bigr\rangle, \\
  \cfg{i}{E}(m_1, m_1, m_1+1)
    &= \bigl\langle (N_E, A_E, s), m_2 \bigr\rangle.
\end{align*}

\paragraph{While statement:}
\begin{equation}
\label{eq:cfg_i-whileStat}
  \bcfg{i}{\mathtt{while} \; (E) \; S}(t, \tb, \tc, m)
    \defeq
      \bigl\langle (N, A, s_E), \Ms, \Mg, m_2 \bigr\rangle,
\end{equation}
where
$N \defeq N_E \union N_S \union \bigl\{ m, m_1 \}$,
$A \defeq A_E \union A_S \union \bigl\{ (m, s_E), (m_1, s_S) \bigr\}$,
and
\begin{align*}
  \cfg{i}{S}(m, t, s_E, m+1)
    &= \bigl\langle (N_S, A_S, s_S), \Ms, \Mg, m_1 \bigr\rangle, \\
  \cfg{i}{E}(m_1, t, m_1+1)
    &= \bigl\langle (N_E, A_E, s_E), m_2 \bigr\rangle.
\end{align*}

\paragraph{Do-while statement:}
\begin{equation}
\label{eq:cfg_i-doWhileStat}
  \bcfg{i}{\mathtt{do} \; S \; \mathtt{while} \; (E) }(t, \tb, \tc, m)
    \defeq
  \bigl\langle (N, A, s_S), \Ms, \Mg, m_2+1 \bigr\rangle
\end{equation}
where
$N \defeq N_E \union N_S \union \{ m_1, m_2 \}$,
$A \defeq A_E \union A_S \union \bigl\{ (m_1, s_E), (m_2, s_S) \bigr\}$,
and
\begin{align*}
  \cfg{i}{E}(m_2, t, m) &= \bigl\langle (N_E, A_E, s_E), m_1 \bigr\rangle, \\
  \cfg{i}{S}(m_1, t, s_E, m_1+1) &= \bigl\langle (N_S, A_S, s_S), \Ms, \Mg, m_2 \bigr\rangle.
\end{align*}

\paragraph{For statement:}
\begin{multline}
\label{eq:cfg_i-forStat}
  \bcfg{i}{\mathtt{for} \; ( E_1; E_2; E_3) \; S}(t, \tb, \tc, m) \\
    \defeq
  \bcfg{i}{E_1; \; \mathtt{while} \; (E_2) \; \{S \; E_3;\}}(t, \tb, \tc, m).
\end{multline}

\paragraph{Break statement:}
assuming $\tb \neq \bot$,
\begin{equation}
\label{eq:cfg_i-breakStat}
  \bcfg{i}{\mathtt{break}}(t, \tb, \tc, m)
    \defeq
      \bigl\langle
        (\{m, \tb\},\bigl\{(m,\tb)\bigr\}, m), \emptyset, \emptyset, m+1
      \bigr\rangle.
\end{equation}

\paragraph{Continue statement:}
assuming $\tc \neq \bot$,
\begin{equation}
\label{eq:cfg_i-continueStat}
  \bcfg{i}{\mathtt{continue}}(t, \tb, \tc, m)
    \defeq
      \bigl\langle
        (\{m, \tc\},\bigl\{(m,\tc)\bigr\}, m), \emptyset, \emptyset, m+1
      \bigr\rangle.
\end{equation}

\paragraph{Goto statement:}
\begin{equation}
\label{eq:cfg_i-gotoStat}
  \bcfg{i}{\mathtt{goto} \; \id}(t, \tb, \tc, m)
    \defeq
      \bigl\langle
        (\{m, t\},\bigl\{(m, t)\bigr\}, m), \emptyset, \bigl\{(\id, m)\bigr\}, m+1
      \bigr\rangle.
\end{equation}

\paragraph{Labeled statement:}
\begin{equation}
\label{eq:cfg_i-labeledStat}
  \bcfg{i}{L \ccol S}(t, \tb, \tc, m)
    \defeq
  \bigl\langle (N, A, m), \Ms, \Mg, m_2 \bigr\rangle,
\end{equation}
where
$N \defeq N_L \union N_S$,
$A \defeq A_L \union A_S$,
$\Ms = \Ms_S \union \Ms_L$, and
\begin{align*}
  \cfg{i}{S}(t, \tb, \tc, m)
    &= \bigl\langle (N_S, A_S, s_S), \Ms_S, \Mg, m_1 \bigr\rangle, \\
  \cfg{i}{L}(s_S, m_1)
    &= \bigl\langle (N_L, A_L, s_L), \Ms_L, m_2 \bigr\rangle.
\end{align*}

\paragraph{Compound statement:}
\begin{equation}
\label{eq:cfg_i-compundStat}
  \bcfg{i}{\{S\}}(t, \tb, \tc, m)
    \defeq
   \cfg{i}{S}(t, \tb, \tc, m).
\end{equation}

\paragraph{Other statements:}
\begin{equation}
\label{eq:cfg_i-otherStat}
  \bcfg{i}{\{\mathrm{stm}\}}(t, \tb, \tc, m)
    \defeq
  \bigl\langle (\{t\},\emptyset, t), \emptyset, \emptyset, m \bigr\rangle.
\end{equation}

Finally, let $B \in \Stm$ be a full C function body:
the CFG constructed for $B$ with respect to optimization level $i$,
denoted by $\bodycfg{i}{B}$, is given by
\begin{equation}
  \bodycfg{i}{B} \defeq (N, A, s),
\end{equation}
where
\(
  \cfg{i}{B}(0, \bot, \bot, 1)
    = \bigl\langle (N, A_B, s), \Ms, \Mg, m_1 \bigr\rangle
\)
and $A$ is obtained from $A_S$ by adding the arcs corresponding
to \texttt{goto} statements, namely:
\begin{multline*}
  A \defeq A_B \\
    \union
      \bigl\{\,
        (n_1, n_2)
      \bigm|
        \exists s, l
          \st
            s = (\id_s, n_1) \in \Mg
          \land
            l = (\id_l, n_2) \in \Ms
          \land
            \id_s = \id_l
      \,\bigr\}.
\end{multline*}
\end{definition}

In the sequel, we will refer to the overloaded function
$\gcfg{i}{\cdot} \defeq \pi_1 \circ \cfg{i}{\cdot}$ where
$\pi_1 \defeq \lambda x_1, \dots x_n \st x_1$ is the first projection
of a variable number $n \geq 1$ of arguments.
In other words, for a program phrase $P \in \Exp \union \Stm$,
$\gcfg{i}{P}(\ldots)$ denotes the graph component
computed by $\cfg{i}{P}(\ldots)$.

\section{The NPATH Metric}
\label{sec:NPATH}

The definition of the NPATH metric for the C language, extracted from
\cite{Nejmeh88} and adapted to the grammar given in
Section~\ref{sec::c-abstract-syntax}, is given in
Tables~\ref{tab:NPATH-expressions} (for expressions)
and~\ref{tab:NPATH-statements} (for statements).
\begin{table}[ht]
\caption{Inductive definition of function $\npe$}
\label{tab:NPATH-expressions}
\centering
\begin{tabular}{l||c}
$E$ & $\npe(E)$ \\
\hhline{=::=}
$x$ & \multirow{2}*{$0$} \\
\hhline{-||~}
$c$ \\
\hhline{-||-}
$\cnot E_1$ & \multirow{6}*{$\npe(E_1)$} \\
\hhline{-||~}
$+E_1$ \\
\hhline{-||~}
$-E_1$ \\
\hhline{-||~}
$(E_1)$ \\
\hhline{-||~}
$(type)E_1$ \\
\hhline{-||~}
$\uop E_1$ \\
\hhline{-||-}
$E_1 \cand E_2$ & \multirow{2}*{$\npe(E_1) + \npe(E_2) + 1$} \\
\hhline{-||~}
$E_1 \cor E_2$ \\
\hhline{-||-}
$E_1, \; E_2$ & \multirow{2}*{$\npe(E_1) + \npe(E_2)$} \\
\hhline{-||~}
$E_1 \cqmkcol E_2$ \\
\hhline{-||-}
$E_1 \bop E_2$ & $\npe(E_1) + \npe(E_2)$ \\
\hhline{-||-}
$E_1 \cqmk E_2 \ccol E_3$ & $\npe(E_1) + \npe(E_2) + \npe(E_3) + 2$
\end{tabular}
\end{table}
\begin{table}[ht]
\caption{Inductive definition of function $\nps$}
\label{tab:NPATH-statements}
\centering
\begin{tabular}{l||c}
$S$ & $\nps(S)$ \\
\hhline{=::=}
$E;$ & $\npe(E)$ \\
\hhline{-||-}
$S_1 \; S_2$ & $\nps(S_1) \nps(S_2)$ \\
\hhline{-||-}
$\mathtt{return}$ & $1$ \\
\hhline{-||-}
$\mathtt{return} \; E$ & $max\bigl(1, \npe(E)\bigr)$ \\
\hhline{-||-}
$\mathtt{if} \; (E) \; S_1 \; \mathtt{else} \; S_2$ & $\npe(E) + \nps(S_1) + \nps(S_2)$ \\
\hhline{-||-}
$\mathtt{if} \; (E) \; S_1$ & $\npe(E) + \nps(S_1) + 1$ \\
\hhline{-||-}
$\mathtt{switch} \; (E) \; S_B$ &
$\npe(E) + \sum_{i = 1}^{k}\nps(S_i) + \nps(S_d)$ \\
\hhline{-||-}
$\mathtt{while} \; (E) \; S_1$ & \multirow{2}*{$\npe(E) + \nps(S_1) + 1$} \\
\hhline{-||~}
$\mathtt{do} \; S_1 \; \mathtt{while} \; (E)$  \\
\hhline{-||-}
$\mathtt{for} \; (E_1; E_2; E_3) \; S$ & $\npe(E_1) + \npe(E_2) + 
\npe(E_3) + \nps(S) + 1$ \\
\hhline{-||-}
$\mathtt{break}$ & \multirow{2}*{$1$} \\
\hhline{-||~}
$\mathtt{continue}$ \\
\hhline{-||-}
$\mathtt{goto} \; \mathrm{id}$ & $1$ \\
\hhline{-||-}
$\mathtt{L \ccol  S_1}$ & \multirow{2}*{$\nps(S_1)$} \\
\hhline{-||~}
$\{S_1\}$ \\
\hhline{-||-}
$\mathrm{stm}$ & $1$
\end{tabular}
\end{table}
Note that the syntax used in \cite{Nejmeh88} imposes strong
limits on the structure
of \texttt{switch} statements, hence the definition
given in Table~\ref{tab:NPATH-statements} is only valid
if $S_B$ has the form
\(
     \mathtt{case} \; n_1 \ccol S_1
  \; \mathtt{case} \; n_2 \ccol S_2
  \cdots
    \mathtt{case} \; n_k \ccol S_k
  \; \mathtt{default} \ccol S_d
\).\footnote{The ACPATH metric that we will define in
Section~\ref{sec:ACPATH} has no such limitation.}

The introduction of NPATH in~\cite{Nejmeh88} is motivated
by a convincing argument about the advantages
of counting the number of acyclic paths in order to estimate
the path complexity of a function.
One would assume that the definition of NPATH given in~\cite{Nejmeh88}
would provide a way of counting the number of acyclic paths but,
as we have already seen in Example~\vref{ex:NPATH-buggy-original},
this is not the case.

One of the main problems of NPATH is that,
as shown by Example~\ref{ex:NPATH-buggy-original},
in the conditional (resp., loop) statements, the number of acyclic paths
in the controlling expressions and in the construct's branches (resp., the body)
compound in a multiplicative, not in an additive way.
For the conditional, each acyclic path in one branch can be combined with
each acyclic path in the controlling expression that directs control flow
into that branch.

We now provide further examples where NPATH either underestimates
or overestimates the number of acyclic paths in the CFG of a C function.

\begin{example}
\label{ex:NPATH-buggy-while}
Consider the C function
\begin{verbatim}
  int f(int a, int b, int c, int d) {
    while(a || (b && c && d) ) {
      ... /* no branching statements here */
    }
  }
\end{verbatim}
We have $\NPATH = 3$, but the possible acyclic paths are $6$,
corresponding to the following combinations, where the ellipsis separates
the values of $\mathtt{a}$, $\mathtt{b}$, $\mathtt{c}$ and $\mathtt{d}$
before and after the first execution of the while body:
\begin{enumerate}
\item
\verb"a ... !a && b && c && !d"
\item
\verb"a ... !a && b && !c"
\item
\verb"a ... !a && !b"
\item
\verb"!a && b && c && !d"
\item
\verb"!a && b && !c"
\item
\verb"!a && !b"
\end{enumerate}
The problem shown by this example is that NPATH does not consider
the backward jump caused by while statement at the end of the execution
of the body.
In order to correctly compute the number of acyclic paths,
in addition to the paths that do not execute the while body,
we must consider the paths that first evaluate the guard to true,
whereby the body is executed, and then evaluate to false.
\end{example}

\begin{example}
\label{ex:NPATH-buggy-switch}
Consider the C function
\begin{verbatim}
  int f(int a, int b, int c) {
    switch (a) {
      case 1: b ? 0 : 1;
      default: return c ? 0 : 1;
    }
  }
\end{verbatim}
We have $\NPATH = 2 + 2 = 4$,
but the possible acyclic paths are 6,
corresponding to the following combinations:
\begin{enumerate}
\item
\verb"a = 1 && b && c"
\item
\verb"a = 1  && b && !c"
\item
\verb"a = 1 && !b && c"
\item
\verb"a = 1 && !b && !c"
\item
\verb"a <> 1 && c"
\item
\verb"a <> 1 && !c"
\end{enumerate}
Here the problem is that NPATH does not correctly capture
the syntax and semantics of the C \texttt{switch} statement.
In the function above, if $\mathtt{a}$ is equal to $1$,
control passed to the \verb|case 1| branch and,
after the execution of its range, since it does not contain
any \texttt{break} statement, the default range is executed.
In other words, NPATH does not account for so-called \emph{fall-through}
in C \texttt{switch} statements.
\end{example}

\begin{example}
\label{ex:NPATH-buggy-return-break-continue}
Consider the following C functions:
\begin{verbatim}
  void f(int a, int b, int c, int d, int e) {
    do {
      if (a)
        break/continue/return;
      if(b)
        ... /* no branching statements here */
      else
        ... /* no branching statements here */
    } while (c)
  }
\end{verbatim}
We have $\NPATH = 4 + 1 = 5$,
but there are only $3$ acyclic paths are instead only $3$ paths
for the \texttt{break} and the \texttt{return} cases,
corresponding to the following:
\begin{enumerate}
\item
\verb"a"
\item
\verb"!a, b, !c"
\item
\verb"!a, !b, !c"
\end{enumerate}
And there are only $2$ acyclic paths for the \texttt{continue} case,
corresponding to the following:
\begin{enumerate}
\item
\verb"!a, b, !c"
\item
\verb"!a, !b, !c"
\end{enumerate}
In these examples NPATH overstates the number of acyclic paths
because it does not distinguish \texttt{return}, \texttt{continue}
and \texttt{break} statements
from statements that do not affect control flow:
in all three cases the while body execution is abandoned
if $\mathtt{a}$ evaluates to true.
\end{example}

\begin{example}
\label{ex:NPATH-buggy-do-while}
Independently from the considerations illustrated by the previous
example, NPATH can overstate the number of acyclic paths for
$\mathtt{do-while}$ loops.  The simplest example is the idiomatic
\begin{verbatim}
  do { S } while (0)
\end{verbatim}
which is commonly used as a macro body so that macro calls can be
terminated with a semicolon without introducing a null statement,
while embedding $\mathtt{S}$ into a compound statement.
If $\mathtt{S}$ is a single basic block, we have $\NPATH \geq 2$
but there is only $1$ acyclic path.
\end{example}

Given that NPATH does not count acyclic paths one might think:
let us make the compiler build the CFG, and then let us count
how many acyclic path it contains from the entry node $s$ to any
exit node.
Unfortunately, this is unfeasible for general graphs.\footnote{We
are grateful to Charles Colbourn for indicating the following reduction
to us.}

\begin{mytheorem}
\label{thm:counting-s-T-acyclic-paths-is-sharp-P-complete}
Consider a directed graph $G = (N, A)$ with entry node $s$
and exit nodes in set $T$.
Counting $s-T$ acyclic paths in $G$
is $\mathsf{\sharp P}$-complete.
\end{mytheorem}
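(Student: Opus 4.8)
\medskip
\noindent\textbf{Proof strategy.}
The statement has two halves --- membership in $\mathsf{\sharp P}$ and $\mathsf{\sharp P}$-hardness --- which I would prove separately. Membership is routine: I would describe a nondeterministic polynomial-time Turing machine that, on input $\langle G, s, T\rangle$, builds an acyclic path out of $s$ one node at a time, keeping the set of arcs already used, at each step nondeterministically choosing an unused outgoing arc of the current node, and halting-and-accepting as soon as the current node belongs to $T$ (and rejecting if it gets stuck before reaching $T$). Since an acyclic path uses no arc twice the machine halts within $|A|$ steps, hence in polynomial time, and its accepting computations are in bijection with the $s$--$T$ acyclic paths of $G$; therefore that number of paths equals the number of accepting computations of a polynomial-time nondeterministic machine, which is exactly what membership in $\mathsf{\sharp P}$ requires.

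For hardness I would reduce from a problem already known to be $\mathsf{\sharp P}$-complete, the most convenient being Valiant's theorem that counting the \emph{node-simple} $s$--$t$ paths of a directed graph is $\mathsf{\sharp P}$-complete. Given such an instance $(G, s, t)$, split every node $v$ into an in-copy $v'$ and an out-copy $v''$ joined by a fresh arc $v' \to v''$, rerouting every arc that entered $v$ into $v'$ and every arc that left $v$ out of $v''$. In the resulting graph $G'$ an acyclic (i.e.\ arc-simple) path cannot traverse $v' \to v''$ twice, so it visits each original node at most once; hence the acyclic paths of $G'$ from $s'$ to $t'$ correspond bijectively to the node-simple $s$--$t$ paths of $G$, which settles the single-target case. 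A general target set $T$ is then reduced to this case by adding one fresh exit node $t^\star$ with no outgoing arcs and, for each $t \in T$, an arc from $t$ to $t^\star$ (through a private intermediate vertex if one insists that these arcs be distinct); this relates the two counts in an immediate way. Both constructions are polynomial-time and count-preserving, so the reductions compose.

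The one step that genuinely requires work, if one wants the argument self-contained rather than citing Valiant, is Valiant's theorem itself. The standard route is a parsimonious reduction from the permanent of a $0$--$1$ matrix $M$, equivalently from counting perfect matchings of a bipartite graph with vertex classes $\{u_1, \dots, u_n\}$ and $\{w_1, \dots, w_n\}$: build a ``backbone'' $s = a_0, a_1, \dots, a_n, t$ and, for every pair $(i,j)$ with $M_{ij} = 1$, attach private vertices $b_{ij}, c_{ij}$ and arcs $a_{i-1} \to b_{ij} \to w_j \to c_{ij} \to a_i$, the vertices $w_1, \dots, w_n$ being shared across all backbone segments. A path that faithfully walks the backbone selects, in segment $i$, some $j$ with $M_{ij} = 1$ and runs through $w_j$, and simplicity forces these choices to be pairwise distinct, i.e.\ to form a perfect matching; the converse is clear. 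The obstacle is that each shared $w_j$ has in-degree and out-degree larger than one, so a priori a simple path could enter $w_j$ from segment $i$ and leave it into segment $i'$ with $i' > i$, short-circuiting backbone nodes and corresponding to no matching. Ruling out such spurious paths forces one to replace each $w_j$ by a small ``consistent-routing'' gadget --- one that the path can enter at most once and that makes the exit port coincide with the entry port --- and then to re-establish the bijection. Constructing that directed gadget and checking the resulting correspondence is the delicate part, so I would settle it first (or invoke Valiant's result directly) before assembling the reductions above.
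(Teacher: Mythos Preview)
Your hardness argument is essentially identical to the paper's: collapse $T$ to a single exit node, then apply the node-splitting construction $v \mapsto (v_{\mathrm{in}}, v_{\mathrm{out}})$ to reduce from Valiant's $\mathsf{\sharp P}$-complete problem of counting node-simple $s$--$t$ paths. The paper omits the (routine) membership argument and simply cites \cite{Valiant79} rather than re-deriving it, so your permanent-to-paths sketch, while interesting, is not needed here.
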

\begin{proof}
First, we can assume that $G$ only has one exit node $t$.
If it has more, then introduce a new node $t$ and place a directed arc
from each exit node to $t$;
this does not change the number of paths to an exit node.
Form a new directed graph $G' = (N', A')$ with
\begin{align*}
  N'
    &\defeq
      \bigl\{\, x_\mathrm{in}, x_\mathrm{out} \bigm| x \in N \,\bigr\}, \\
  A'
    &\defeq \bigl\{\, (x_\mathrm{in}, x_\mathrm{out}) \bigm| x \in N \,\bigr\}
    \union \bigl\{\, (x_\mathrm{out}, y_\mathrm{in}) \bigm| (x, y) \in A \,\bigr\}.
\end{align*}
A path from $s$ to $t$ in $G$ that repeats no nodes corresponds to a path
in $G'$ from $s_\mathrm{in}$ to $t_\mathrm{out}$ that repeats no arcs.
Indeed, every path in $G'$ alternates
arcs of the form $(x_\mathrm{out}, y_\mathrm{in})$
with arcs of the form $(y_\mathrm{in}, y_\mathrm{out})$.
Using node $y$ at most once in $G$ is the same as using arc
$(y_\mathrm{in}, y_\mathrm{out})$ at most once in $G'$.

This reduces the problem of counting paths with no repeated nodes
to that of counting ones with no repeated arcs,
and the former is $\mathsf{\sharp P}$-complete \cite{Valiant79}.
Then the latter is also $\mathsf{\sharp P}$-complete.
\qed
\end{proof}
Note that the $\mathsf{\sharp P}$-complete problems are at least as difficult
as the $\mathsf{NP}$-complete problems.
Indeed, the existence of a polynomial-time algorithm for solving
a $\mathsf{\sharp P}$-complete problem would imply
$\mathsf{P} = \mathsf{NP}$.

\section{The ACPATH Metric}
\label{sec:ACPATH}

In this section we present \emph{ACPATH}. This is a new metric for
C-like languages that, in contrast to the NPATH metric,
corresponds to the exact number of acyclic paths through any function
with no backjumps\footnote{A \emph{backjump}
  is a \texttt{goto} statement that jumps to a labeled statement
  that precedes it.}
and no jumps into some early-terminating loops.
Note that, as most coding standards disallow
such \texttt{goto} statements, in practice,
backjumps are rarely used in critical code.
For instance,
MISRA~C, the most influential C~coding standard~\cite{MISRA-C-2023},
has an \emph{advisory} rule forbidding
all \texttt{goto} statements and a \emph{required} rule forbidding
backjumps; so, while a forward \texttt{goto} can be used without
justification, a backjump requires a formal deviation.
Other \emph{required} MISRA guidelines prevent jumping or switching
into blocks from the outside of them, whether such blocks are loop
bodies or not, thereby preventing the occurrence of the second condition.

In Section~\ref{sec::ACPATH-expressions}, we present
an algorithms that counts the acyclic paths through expressions.
In Section~\ref{sec::ACPATH-statements}, we deal with
the more complex task of counting paths through statements.
All the algorithms presented in Sections~\ref{sec::ACPATH-expressions}
and~\ref{sec::ACPATH-statements}, are parametric
with respect to an optimization level.  As formally stated at the end of
the section, all the algorithms are correct for each optimization
level.

\subsection{Execution Paths Through Expressions}
\label{sec::ACPATH-expressions}

To deal with expressions, we introduce three functions:
$\tp{i}$, $\fp{i}$ and $\pp{i}$.
For each optimization level $i \in \{ 0, 1, 2\}$ and each
$E \in \Exp$ that is evaluated at optimization level $i$:
\begin{itemize}
\item
  $\tp{i}(E)$ counts the number of execution paths
  through $E$ that may evaluate to true;
\item
  $\fp{i}(E)$ counts the number of execution paths
  through $E$ that may evaluate to false;
\item
  $\pp{i}(E)$ counts the total number of possible execution paths through $E$.
\end{itemize}
It is important to stress that here we are dealing with path counting
with respect to a reference CFG and without any semantic inference apart
from those encoded in the optimization level.
Hence, when we say that a path through $E$ \emph{may} evaluate to true,
we mean that a path exists in the reference CFG for the considered
optimization level, and that the same optimization level does not
allow concluding that the path evaluates to false.

\begin{definition} \summary{($\tp{i}$, $\fp{i}$, $\pp{i}$.)}
  \label{def:t-f-p}
The functions
$\fund{\tp{i}}{\Exp}{\Nset}$,
$\fund{\fp{i}}{\Exp}{\Nset}$ and
$\fund{\pp{i}}{\Exp}{\Nset}$
are inductively defined, for each $i \in \{ 0, 1, 2\}$ and $E \in \Exp$,
as per \textup{Table~\ref{tab:algorithm-expression-cases}}.
\end{definition}

\begin{sidewaystable}
\caption{Inductive definition of
         $\fund{\tp{i}}{\Exp}{\Nset}$,
         $\fund{\fp{i}}{\Exp}{\Nset}$ and
         $\fund{\pp{i}}{\Exp}{\Nset}$}
\label{tab:algorithm-expression-cases}

\centering
\begin{tabular}{l|c|c|c}
$E$ & $\tp{i}(E)$ & $\fp{i}(E)$ & $\pp{i}(E)$ \\
\hline
$x$ & $1$ & $1$ & $1$ \\
\hline
$c$ if $\tv{i}(c)=\maybe$ & $1$ & $1$ & $1$ \\
$c$ if $\tv{i}(c)=\true$ & $1$ & $0$ & $1$ \\
$c$ if $\tv{i}(c)=\false$ & $0$ & $1$ & $1$ \\
\hline
$\cnot E_1$
  & $\fp{i}(E_1)$
  & $\tp{i}(E_1)$
  & $\pp{i}(E_1)$ \\
\hline
$+E_1$ & \multirow{4}*{$\tp{i}(E_1)$} & \multirow{4}*{$\fp{i}(E_1)$} & \multirow{4}*{$\pp{i}(E_1)$} \\
\cline{1-1}
$-E_1$ \\
\cline{1-1}
$(E_1)$ \\
\cline{1-1}
$(type)E_1$ \\
\hline
$\uop E_1$ & $\pp{i}(E_1)$ & $\pp{i}(E_1)$ & $\pp{i}(E_1)$ \\
\hline
$E_1 \cand E_2$
  & $\tp{i}(E_1) \tp{i}(E_2)$
  & $\fp{i}(E_1) + \tp{i}(E_1) \fp{i}(E_2)$
  & $\fp{i}(E_1) + \tp{i}(E_1) \pp{i}(E_2)$ \\
\hline
$E_1 \cor E_2$
  & $\tp{i}(E_1) + \fp{i}(E_1) \tp{i}(E_2)$
  & $\fp{i}(E_1) \fp{i}(E_2)$
  & $\tp{i}(E_1) + \fp{i}(E_1) \pp{i}(E_2)$ \\
\hline
$E_1 , E_2$
  & $\pp{i}(E_1) \tp{i}(E_2)$
  & $\pp{i}(E_1) \fp{i}(E_2)$
  & $\pp{i}(E_1) \pp{i}(E_2)$ \\
\hline
$E_1 \cqmkcol E_2$
  & $\tp{i}(E_1) + \fp{i}(E_1) \tp{i}(E_2)$
  & $\fp{i}(E_1) \fp{i}(E_2)$
  & $\tp{i}(E_1) + \fp{i}(E_1) \pp{i}(E_2)$ \\
\hline
$E_1 \bop E_2$
  & $\pp{i}(E_1) \pp{i}(E_2)$
  & $\pp{i}(E_1) \pp{i}(E_2)$
  & $\pp{i}(E_1) \pp{i}(E_2)$ \\
\hline
$E_1 \cqmk E_2 \ccol E_3$
  & $\tp{i}(E_1) \tp{i}(E_2)+\fp{i}(E_1) \tp{i}(E_3)$
  & $\tp{i}(E_1) \fp{i}(E_2)+\fp{i}(E_1) \fp{i}(E_3)$
  & $\tp{i}(E_1) \pp{i}(E_2)+\fp{i}(E_1) \pp{i}(E_3)$ \\
\end{tabular}
\end{sidewaystable}

In order to deal with acyclic paths induced by while loops,
we also need functions
$\ttp{i}$, $\tfp{i}$, $\ffp{i}$ and $\ppp{i}$;
for each $i \in \{ 0, 1, 2\}$ and each $E \in \Exp$:
\begin{itemize}
\item
  $\ttp{i}(E)$ (resp., $\ffp{i}(E)$) counts the number of ways
  in which the expression $E$ can be traversed twice at optimization
  level $i$, where both evaluation paths may lead to true (resp., false)
  and they do not share any arc;
\item
  $\tfp{i}(E)$ counts the number of ways in which the expression $E$
  can be traversed twice at optimization level $i$, where the two
  evaluations may lead to different Boolean values (i.e., one to true
  and the other to false), and the two traversals do not share any arc;
\item
  $\ppp{i}(E)$ counts the total number of possible ways in which the
  expression $E$ can be traversed twice at optimization level $i$,
  where the two traversals do not share any arc.
\end{itemize}

\begin{definition} \summary{($\ttp{i}$, $\tfp{i}$, $\ffp{i}$, $\ppp{i}$.)}
  \label{def:tt-ff-pp}
The functions
$\fund{\ttp{i}}{\Exp}{\Nset}$,
$\fund{\tfp{i}}{\Exp}{\Nset}$,
$\fund{\ffp{i}}{\Exp}{\Nset}$ and
$\fund{\ppp{i}}{\Exp}{\Nset}$
are inductively defined, for each $i \in \{ 0, 1, 2\}$ and $E \in \Exp$,
as per
\textup{Tables~\ref{tab:algorithm-expression-cases-tt}--\ref{tab:algorithm-expression-cases-pp}}.
\end{definition}

\begin{table}
\centering
\caption{Inductive definition of $\fund{\ttp{i}}{\Exp}{\Nset}$}
\label{tab:algorithm-expression-cases-tt}

\begin{tabular}{l|c}
$E$ & $\ttp{i}$ \\
\hline
$x$ & $0$ \\
\hline
$c$ if $\tv{i}(c)=\maybe$ & $0$ \\
$c$ if $\tv{i}(c)=\true$ & $1$ \\
$c$ if $\tv{i}(c)=\false$ & $0$ \\
\hline
$\cnot E_1$
  & $\ffp{i}(E_1)$ \\
\hline
$+E_1$ & \multirow{4}*{$\ttp{i}(E_1)$} \\
\cline{1-1}
$-E_1$ \\
\cline{1-1}
$(E_1)$ \\
\cline{1-1}
$(type)E_1$ \\
\hline
$\uop E_1$ & $0$ \\
\hline
$E_1 \cand E_2$
  & $\ttp{i}(E_1) \ttp{i}(E_2)$ \\
\hline
$E_1 \cor E_2$
  & $\ttp{i}(E_1) + 2 \tfp{i}(E_1) \tp{i}(E_2) + \ffp{i}(E_1) \ttp{i}(E_2)$ \\
\hline
$E_1 , E_2$
  & $\ppp{i}(E_1) \ttp{i}(E_2)$ \\
\hline
$E_1 \cqmkcol E_2$
  & $\ttp{i}(E_1) + 2 \tfp{i}(E_1) \tp{i}(E_2) + \ffp{i}(E_1) \ttp{i}(E_2)$ \\
\hline
$E_1 \bop E_2$
  & $0$ \\
\hline
$E_1 \cqmk E_2 \ccol E_3$
  & $\ttp{i}(E_1) \ttp{i}(E_2) + 2 \tfp{i}(E_1) \tp{i}(E_2) \tp{i}(E_3) + \ffp{i}(E_1) \ttp{i}(E_3)$ \\
\end{tabular}
\end{table}

\begin{table}
\centering
\caption{Inductive definition of $\fund{\tfp{i}}{\Exp}{\Nset}$}
\label{tab:algorithm-expression-cases-tf}

\begin{tabular}{l|c}
$E$ & $\tfp{i}(E)$ \\
\hline
$x$ & $1$ \\
\hline
$c$ if $\tv{i}(c)=\maybe$ & $1$ \\
$c$ if $\tv{i}(c)=\true$ & $0$ \\
$c$ if $\tv{i}(c)=\false$ & $0$ \\
\hline
$\cnot E_1$
  & $\tfp{i}(E_1)$ \\
\hline
$+E_1$ & \multirow{4}*{$\tfp{i}(E_1)$} \\
\cline{1-1}
$-E_1$ \\
\cline{1-1}
$(E_1)$ \\
\cline{1-1}
$(type)E_1$ \\
\hline
$\uop E_1$ & $\ppp{i}(E_1)$ \\
\hline
$E_1 \cand E_2$
  & $\tfp{i}(E_1) \tp{i}(E_2) + \ttp{i}(E_1) \tfp{i}(E_2)$ \\
\hline
$E_1 \cor E_2$
  & $\tfp{i}(E_1) \fp{i}(E_2) + \ffp{i}(E_1) \tfp{i}(E_2)$ \\
\hline
$E_1 , E_2$
  & $\ppp{i}(E_1) \tfp{i}(E_2)$ \\
\hline
$E_1 \cqmkcol E_2$
  & $\tfp{i}(E_1) \fp{i}(E_2) + \ffp{i}(E_1) \tfp{i}(E_2)$ \\
\hline
$E_1 \bop E_2$
  & $\ppp{i}(E_1) \ppp{i}(E_2)$ \\
\hline
$E_1 \cqmk E_2 \ccol E_3$
  & \(
      \begin{aligned}
      &\ttp{i}(E_1) \tfp{i}(E_2) + \ffp{i}(E_1) \tfp{i}(E_3) \\
    + &\tfp{i}(E_1) \bigl(\tp{i}(E_2) \fp{i}(E_3) + \fp{i}(E_2) \tp{i}(E_3)\bigr)
      \end{aligned}
    \)
\end{tabular}
\end{table}

\begin{table}
\centering
\caption{Inductive definition of function $\fund{\ffp{i}}{\Exp}{\Nset}$}
\label{tab:algorithm-expression-cases-ff}

\begin{tabular}{l|c}
$E$ & $\ffp{i}(E)$ \\
\hline
$x$ & $0$ \\
\hline
$c$ if $\tv{i}(c)=\maybe$ & $0 $ \\
$c$ if $\tv{i}(c)=\true$ & $0$ \\
$c$ if $\tv{i}(c)=\false$ & $1$\\
\hline
$\cnot E_1$ & $\ttp{i}(E_1)$ \\
\hline
$+E_1$ & \multirow{4}*{$\ffp{i}(E_1)$}  \\
\cline{1-1}
$-E_1$ \\
\cline{1-1}
$(E_1)$ \\
\cline{1-1}
$(type)E_1$ \\
\hline
$\uop E_1$ & $0$ \\
\hline
$E_1 \cand E_2$
  & $\ffp{i}(E_1) + 2 \tfp{i}(E_1) \fp{i}(E_2) + \ttp{i}(E_1) \ffp{i}(E_2)$ \\
\hline
$E_1 \cor E_2$
  & $\ffp{i}(E_1) \ffp{i}(E_2)$ \\
\hline
$E_1 , E_2$
  & $\ppp{i}(E_1) \ffp{i}(E_2)$ \\
\hline
$E_1 \cqmkcol E_2$
  & $\ffp{i}(E_1) \ffp{i}(E_2)$ \\
\hline
$E_1 \bop E_2$
  & $0$ \\
\hline
$E_1 \cqmk E_2 \ccol E_3$
  & $\ttp{i}(E_1) \ffp{i}(E_2) + 2 \tfp{i}(E_1) \fp{i}(E_2) \fp{i}(E_3) + \ffp{i}(E_1) \ffp{i}(E_3)$ \\
\hline
\end{tabular}
\end{table}

\begin{table}
\centering
\caption{Inductive definition of function $\fund{\ppp{i}}{\Exp}{\Nset}$}
\label{tab:algorithm-expression-cases-pp}

\begin{tabular}{l|c}
$E$ & $\ppp{i}(E)$ \\
\hline
$x$ & $0$ \\
\hline
$c$ if $\tv{i}(c)=\maybe$ & $0$ \\
$c$ if $\tv{i}(c)=\true$ & $1$ \\
$c$ if $\tv{i}(c)=\false$ & $1$ \\
\hline
$\cnot E_1$
  & $\ppp{i}(E_1)$ \\
\hline
$+E_1$ & \multirow{4}*{$\ppp{i}(E_1)$} \\
\cline{1-1}
$-E_1$ \\
\cline{1-1}
$(E_1)$ \\
\cline{1-1}
$(type)E_1$ \\
\hline
$\uop E_1$ & $0$ \\
\hline
$E_1 \cand E_2$
  & $\ffp{i}(E_1) + 2 \tfp{i}(E_1) \pp{i}(E_2) + \ttp{i}(E_1) \ppp{i}(E_2)$ \\
\hline
$E_1 \cor E_2$
  & $\ttp{i}(E_1) + 2 \tfp{i}(E_1) \pp{i}(E_2) + \ffp{i}(E_1) \ppp{i}(E_2)$ \\
\hline
$E_1 , E_2$
  & $\ppp{i}(E_1) \ppp{i}(E_2)$ \\
\hline
$E_1 \cqmkcol E_2$
  & $\ttp{i}(E_1) + 2 \tfp{i}(E_1) \pp{i}(E_2) + \ffp{i}(E_1) \ppp{i}(E_2)$ \\
\hline
$E_1 \bop E_2$
  & $0$ \\
\hline
$E_1 \cqmk E_2 \ccol E_3$
  & $\ttp{i}(E_1) \ppp{i}(E_2) + 2 \tfp{i}(E_1) \pp{i}(E_2) \pp{i}(E_3) + \ffp{i}(E_1) \ppp{i}(E_3)$ \\
\end{tabular}
\end{table}

\subsection{Execution Paths Through Statements}
\label{sec::ACPATH-statements}

We first consider a labeled statement $L \ccol S$: the paths that
reach $S$ are those that ``fall to $L$ from above'' plus those
that ``switch to $L$'' if $L$ is a case or default label, or ``go to
$L$'' if $L$ is an identifier label in a \texttt{goto} statement. The (possibly
decorated) symbols $\Ft$, $\St$ and $\Gt$ will be used as mnemonics
for the number of paths that \emph{fall through}, \emph{switch to} and
\emph{go to} $L$, respectively.  In the sequel, if $\Id$ is the set of
identifier labels in a function, $\Gt$ will be a partial function
$\pard{\Gt}{\Id}{\Nset}$, mapping any label identifier $\id \in \Id$
to the cumulative number of paths that reach all the $\mathtt{goto} \;
\id$ statements in the function that occur before the labeled
statement, $\id \ccol S$.

\begin{definition} \summary{($\fund{\apc{i}{}}{\Lab \times \Nset \times \Nset \times (\Id \rightarrowtail \Nset)}{\Nset}$.)}
Let $L$ be the label for a labeled statement,
and $\Ft$, $\St$ and $\Gt$ be as defined above.
Then $\apc{i}{L}(\Ft, \St, \Gt)$ is defined as follows:

\proofsec{Case label}
\begin{equation}
\label{eq:apc_i-caseLabel}
  \apc{i}{\mathtt{case} \; n}(\Ft, \St, \Gt)
    \defeq
      \Ft + \St.
\end{equation}

\proofsec{Default label}
\begin{equation}
\label{eq:apc_i-defaultLabel}
  \apc{i}{\mathtt{default}}(\Ft, \St, \Gt)
    \defeq
      \Ft + \St.
\end{equation}

\proofsec{Identifier label}
\begin{equation}
\label{eq:apc_i-idLabel}
  \apc{i}{\id}(\Ft, \St, \Gt)
    \defeq
      \Ft + \Gt(\id).
\end{equation}
\end{definition}

We assume that a function terminates with an empty statement $\epsilon$ so that
each non-empty statement $S$ in a function has a successor $S_1$.
In order to count the total number of paths that reach $S_1$,
we introduce the (overloaded) function $\apc{i}{}$.
If $\Ft$, $\St$, $\Gt$ are described as above, then
$\apc{i}{S}(\Ft, \St, \Gt)$ computes:
\begin{itemize}
\item $\Ft_{out}$: the number of acyclic paths that that
  ``fall through $S_1$ from above'';
\item $\Bp$: the cumulative sum of the acyclic paths that lead to
  \texttt{break} nodes that terminate the execution of $S$,
  i.e., \texttt{break} nodes
  that are not in \texttt{switch} or loop statements in $S$;
\item $\Cp$: the cumulative sum of the acyclic paths that lead to
  \texttt{continue} nodes that terminate the execution of $S$, i.e.,
  \texttt{continue} nodes that are not in loop statements in $S$;
\item $\Rp$: the cumulative sum of the acyclic paths that lead to
  \texttt{return} nodes in $S$;
\item
  $\Gt_{out}$: a partial function
  $\pard{\Gt}{\Id}{\Nset}$, mapping any label identifier $\id$
  to the cumulative number of paths that reach
  all the $\mathtt{goto} \; \id$ statements that occur
  before $S_1$.
\end{itemize}

Let $\pard{\Gt,\Gt_1,\Gt_2}{\Id}{\Nset}$.
The function $\pard{\Gt[n/\id]}{\Id}{\Nset}$
is given, for each $x \in \Id$, by
\begin{align*}
  \Gt[n/\id](x)
    &\defeq
      \begin{cases}
        n,      &\text{if $x = \id$,} \\
        \Gt(x), &\text{otherwise;}
      \end{cases}
\intertext{%
in addition, $\pard{(\Gt_1 + \Gt_2)}{\Id}{\Nset}$ is given,
for each $x \in \Id$, by
}
  (\Gt_1 + \Gt_2)(x) &\defeq \Gt_1(x) + \Gt_2(x).
\end{align*}

\begin{definition} \summary{($\fund{\apc{i}{}}{\Stm \times \Nset \times \wp(\Id \times \Nset) }{\Nset^4 \times \wp(\Id \times \Nset)}$.)}
\label{def:apc_i-Stat}
We define the function
\[
  \fund{\apc{i}{}}{\Stm \times \Nset \times \wp(\Id \times \Nset) }{\Nset^4 \times \wp(\Id \times \Nset)}
\]
as follows:

\proofsec{Expression statement}
\begin{equation}
\label{eq:apc_i-exprStat}
\apc{i}{E;}(\Ft, \St, \Gt)
\defeq
(\pp{i}(E)\Ft, 0, 0, 0, \Gt).
\end{equation}

\proofsec{Sequential composition}
\begin{equation}
\label{eq:apc_i-seqStat}
\apc{i}{S_1 \; S_2}(\Ft, \St, \Gt)
\defeq
(\Ft_2, \Bp, \Cp, \Rp, \Gt_2),
\end{equation}
where
$\Bp = \Bp_1 + \Bp_2$,
$\Cp = \Cp_1 + \Cp_2$,
$\Rp = \Rp_1 + \Rp_2$,
\begin{align*}
  \apc{i}{S_1}(\Ft, \St, \Gt) &= (\Ft_1, \Bp_1, \Cp_1, \Rp_1, \Gt_1), \\
  \apc{i}{S_2}(\Ft_1, \St, \Gt_1) &= (\Ft_2, \Bp_2, \Cp_2, \Rp_2, \Gt_2).
\end{align*}

\proofsec{Return statement}
\begin{equation}
\label{eq:apc_i-returnStat}
\apc{i}{\mathtt{return}}(\Ft, \St, \Gt)
\defeq
(0, 0, 0, \Ft, \Gt).
\end{equation}

\proofsec{Return with expression statement}
\begin{equation}
\label{eq:apc_i-returnExprStat}
\apc{i}{\mathtt{return} \; E}(\Ft, \St, \Gt)
\defeq
(0, 0, 0, \pp{i}(E) \Ft, \Gt).
\end{equation}

\proofsec{Conditional statement}
\begin{equation}
\label{eq:apc_i-condStat}
\bapc{i}{\mathtt{if} \; (E) \; S_1 \; \mathtt{else} \; S_2}(\Ft, \St, \Gt)
\defeq
(\Ft_{out}, \Bp, \Cp, \Rp, \Gt_2),
\end{equation}
where we have
$\Ft_{out} = \Ft_1 + \Ft_2$,
$\Bp = \Bp_1 + \Bp_2$,
$\Cp = \Cp_1 + \Cp_2$,
$\Rp = \Rp_1 + \Rp_2$,
\begin{align*}
  \apc{i}{S_1}(\tp{i}(E) \Ft, \St, \Gt) &= (\Ft_1, \Bp_1, \Cp_1, \Rp_1, \Gt_1), \\
  \apc{i}{S_2}(\fp{i}(E) \Ft, \St, \Gt_1) &= (\Ft_2, \Bp_2, \Cp_2, \Rp_2, \Gt_2).
\end{align*}

\proofsec{One-armed conditional statement}
\begin{equation}
\label{eq:apc_i-one-armedCondStat}
\bapc{i}{\mathtt{if} \; (E) \; S_1}(\Ft, \St, \Gt)
\defeq
(\Ft_{out}, \Bp_1, \Cp_1, \Rp_1, \Gt_1),
\end{equation}
where we have
\begin{align*}
\Ft_{out} &= \Ft_1 + \fp{i}(E), \\
\apc{i}{S_1}(\tp{i}(E) \Ft, \St, \Gt) &= (\Ft_1, \Bp_1, \Cp_1, \Rp_1, \Gt_1).
\end{align*}

\proofsec{Switch statement}
\begin{equation}
\label{eq:apc_i-switchStat}
\bapc{i}{\mathtt{switch} \; (E) \; S}(\Ft, \St, \Gt)
   \defeq
     \begin{cases}
        (\Ft_1, 0, \Cp_S, \Rp_S, \Gt_S),
          &\text{if $d(S)$,} \\
        (\Ft_2, 0, \Cp_S, \Rp_S, \Gt_S),
          &\text{otherwise,}
      \end{cases}
\end{equation}
where we have
\begin{align*}
\Ft_1 &= \Ft_S + \Bp_S, \\
\Ft_2 &= \Ft_S + \Bp_S + \pp{i}(E) \Ft, \\
\apc{i}{S}(0, \pp{i}(E) \Ft, \Gt) &= (\Ft_S, \Bp_S, \Cp_S, \Rp_S, \Gt_S),
\end{align*}
and $d(S)$ is true if and only if $S$ contains a default label out
of all inner \texttt{switch}.

\proofsec{While statement}
\begin{equation}
\label{eq:apc_i-whileStat}
\bapc{i}{\mathtt{while} \; (E) \; S}(\Ft, \St, \Gt)
\defeq (\Ftout, 0, 0, \Rp_S, \Gt_S)
\end{equation}
where we have
\begin{align*}
\Ftout &= \fp{i}(E)\Ft + \Bp_S\tp{i}(E) + \bigl(\Ft_S + \Cp_S\bigr) \tfp{i}(E), \\
\apc{i}{S}(\Ft, \St, \Gt) &= (\Ft_S, \Bp_S, \Cp_S, \Rp_S, \Gt_S).
\end{align*}

\proofsec{Do-while statement}
\begin{equation}
\label{eq:apc_i-doWhileStat}
\bapc{i}{\mathtt{do} \; S \; \mathtt{while} \; (E)}(\Ft, \St, \Gt)
\defeq
(\Ft_{out}, 0, 0, \Rp_S, \Gt_S),
\end{equation}
where we have
\begin{align*}
\Ft_{out} &= \fp{i}(E) \Ft_S + \Bp_S, \\
\apc{i}{S}(\Ft, \St, \Gt) &= (\Ft_S, \Bp_S, \Cp_S, \Rp_S, \Gt_S).
\end{align*}

\proofsec{For statement}
\begin{equation}
\label{eq:apc_i-forStat}
\bapc{i}{\mathtt{for} \; (E_1; E_2; E_3) \; S}(\Ft, \St, \Gt)
\defeq
\bapc{i}{E_1 \; \mathtt{while} \; (E_2) \; \{S \; E_3;\}}(\Ft, \St, \Gt).
\end{equation}

\proofsec{Break statement}
\begin{equation}
\label{eq:apc_i-breakStat}
\apc{i}{\mathtt{break}}(\Ft, \St, \Gt)
\defeq
(0, \Ft, 0, 0, \Gt).
\end{equation}

\proofsec{Continue statement}
\begin{equation}
\label{eq:apc_i-continueStat}
\apc{i}{\mathtt{continue}}(\Ft, \St, \Gt)
\defeq
(0, 0, \Ft, 0, \Gt).
\end{equation}

\proofsec{Goto statement}
\begin{equation}
\label{eq:apc_i-gotoStat}
  \apc{i}{\mathtt{goto} \; \id}(\Ft, \St, \Gt)
    \defeq
      \bigl(0, 0, 0, 0, \Gt\bigl[\bigl(\Gt(\id) + \Ft\bigr)/\id\bigr]\bigr).
\end{equation}

\proofsec{Labeled statement}
\begin{equation}
\label{eq:apc_i-labelStat}
\apc{i}{L \ccol  S}(\Ft, \St, \Gt)
\defeq
(\Ft_S, \Bp, \Cp, \Rp, \Gt_{out}),
\end{equation}
where we have
\begin{align*}
\apc{i}{L}(\Ft, \St, \Gt) &= \Ft_L, \\
\apc{i}{S}(\Ft_L, \St, \Gt) &= (\Ft_S, \Bp, \Cp, \Rp, \Gt_{out}).
\end{align*}

\proofsec{Compound statement}
\begin{equation}
\label{eq:apc_i-compStat}
\bapc{i}{\{S\}}(\Ft, \St, \Gt)
  \defeq
\apc{i}{S}(\Ft, \St, \Gt).
\end{equation}

\proofsec{Other statements}
\begin{equation}
\label{eq:apc_i-otherStat}
\apc{i}{\mathrm{stm}}(\Ft, \St, \Gt)
  \defeq
(\Ft, 0, 0, 0, \Gt).
\end{equation}
\end{definition}

It is clear from its definition that, for each function body $B \in \Stm$,
$\bodyapc{i}{B}$ can be computed with a single traversal of $B$.
It is also easy to prove the following:
\begin{proposition}
\label{prop:any-graph-in-C}
Let $G = (N, A, s)$ be a directed graph with entry node $s$ and exit nodes
in set $T$.  Then there exists a C function body $B \in \Stm$
of size $O(A)$
such that, for each $i \in \{ 0, 1, 2 \}$,
$\apn\bigl(\bodycfg{i}{B}\bigr) = \apn(G)$.
\end{proposition}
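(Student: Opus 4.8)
The plan is to build $B$ by directly \emph{transcribing} the digraph into a C function that uses nothing but one variable $x$, labels, \texttt{goto}s and \texttt{return}s, and then to read off $\apn\bigl(\bodycfg{i}{B}\bigr)$ from Definition~\ref{def:reference-cfg}. As a first step I would normalize $G$: assuming $T\neq\emptyset$, adjoin a fresh node $t$ together with an arc from every node of $T$ to $t$ (this adds only $O(|A|)$ arcs, leaves $\apn(G)$ unchanged, and makes $t$ the unique exit), and discard the nodes not reachable from $s$. From now on $\apn(G)=\napn(G,t)$.

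Then, for each node $v$ of the normalized $G$, let $p_v$ be its in-degree and $w_1,\dots,w_k$ its out-neighbours in some fixed order. If $v\neq t$, the block for $v$ is
\[
  L_v^{(1)}\ccol L_v^{(2)}\ccol\cdots\ccol L_v^{(p_v)}\ccol
  \mathtt{if}\;(x)\;\mathtt{goto}\;M_{w_1};\ \ldots\ \mathtt{if}\;(x)\;\mathtt{goto}\;M_{w_{k-1}};\ \mathtt{goto}\;M_{w_k};
\]
and the block for $t$ is $L_t^{(1)}\ccol\cdots\ccol L_t^{(p_t)}\ccol\mathtt{return}$; here $M_{w}$ inside the block for $v$ denotes $L_w^{(j)}$, where $(v,w)$ is the $j$-th incoming arc of $w$, so that each relay label is targeted by exactly one \texttt{goto}. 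The body $B$ is the concatenation of all these blocks with the block for $s$ first. Two observations make this workable: (i) since every controlling expression is the bare variable $x$, $\tv{i}(x)=\maybe$ for all $i\in\{0,1,2\}$, hence $\cfg{i}{\cdot}$ and therefore $\bodycfg{i}{B}$ are literally the same for the three optimization levels, so it suffices to verify the identity for one $i$; and (ii) the total number of relay labels and the total number of $\mathtt{goto}$s both equal $|A|$, so $|B|=O(|A|)$.

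The verification that $\apn\bigl(\bodycfg{i}{B}\bigr)=\apn(G)$ is then an induction on Definition~\ref{def:reference-cfg} applied to the (entirely explicit) syntax of $B$. The stack of relay labels is what makes re-entrant nodes behave: unfolding the labeled-statement clause shows that the label node of each $L_w^{(j)}$ has a direct arc to the first node of $w$'s block, so that node acquires exactly $p_w$ distinct incoming arcs, one for each incoming arc of $w$ in $G$, and hence can be re-entered in a no-repeated-arc path exactly as often as $w$ can be re-entered in $G$. Tracking the test nodes of the cascades then yields, for each acyclic path of $\bodycfg{i}{B}$, a corresponding acyclic $s$-to-$t$ path of $G$ and conversely, while the exit nodes of $\bodycfg{i}{B}$ are exactly the images of the exit node of $G$.

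The one point that genuinely needs care — and, I expect, the bulk of the write-up — is that Definition~\ref{def:reference-cfg} attaches to every \texttt{goto} node an extra ``fall-through'' arc to its lexically following statement. These arcs have no counterpart in $G$, and a careless layout would let them manufacture extra acyclic paths into the \texttt{return} node (or into the next block, producing a spurious ``spine'' threading all blocks). So the real work is to choose the relative placement of the blocks, of the cascades, and of the terminating \texttt{return} so that every such fall-through arc is absorbed — i.e.\ each acyclic path through it already occurs among the paths coming from $G$ — and then to recheck that the correspondence above survives with those arcs present. Once this is done, the statement combined with Theorem~\ref{thm:counting-s-T-acyclic-paths-is-sharp-P-complete} shows that computing $\apn\bigl(\bodycfg{i}{\cdot}\bigr)$ is $\mathsf{\sharp P}$-hard.
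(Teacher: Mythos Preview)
Your construction differs from the paper's, which uses one \texttt{switch} per non-exit node,
\[
  \id_{n_i}\ccol\ \mathtt{switch}\,(x_{n_i})\ \bigl\{\ \mathtt{case}\ 1\ccol\ \mathtt{goto}\ \id_{n_{i,1}};\ \ldots;\ \mathtt{case}\ p_i\ccol\ \mathtt{default}\ccol\ \mathtt{goto}\ \id_{n_{i,p_i}};\ \bigr\},
\]
with a single label per node and no relay labels, and then declares the verification ``straightforward.'' So your if-cascade is a genuinely different encoding. The relay-label device, however, does not do what you claim. Unfolding clause~\eqref{eq:cfg_i-labeledStat} on $L_v^{(1)}\ccol\cdots\ccol L_v^{(p_v)}\ccol S$ yields a \emph{chain} of label nodes $\ell_1\to\ell_2\to\cdots\to\ell_{p_v}\to s_S$, not $p_v$ parallel arcs into $s_S$: each application of the labeled-statement clause makes the new label node point to the entry of what it labels, and for $L_v^{(j)}$ that entry is the next label node $\ell_{j+1}$, not the cascade entry. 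Hence any two visits to $v$'s block must share the arc $(\ell_{p_v},s_S)$ and the whole cascade downstream, so an acyclic path of $G$ that revisits a node cannot be mirrored in $\bodycfg{i}{B}$. The paper's single-label encoding has exactly the same limitation; neither handles arbitrary $G$. This does not matter for the application with Theorem~\ref{thm:counting-s-T-acyclic-paths-is-sharp-P-complete}, since the graph $G'$ built there has every ``in'' node of out-degree~$1$, so arc-simple already forces node-simple --- but your bijection argument does not go through for arbitrary $G$ as written.

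On the spurious fall-through from \texttt{goto}: you are right that clause~\eqref{eq:cfg_i-gotoStat} literally equips each \texttt{goto} node with an arc to its lexical successor, in addition to the target arc added at the end of Definition~\ref{def:reference-cfg}. But ``absorption by placement'' cannot work: that arc is distinct from the target arc, so any path using it is a genuinely new acyclic path unless the lexical successor coincides with the target, which you cannot arrange for all gotos simultaneously; and the grammar offers no statement that acts as a sink for all three optimization levels (e.g.\ $\mathtt{while}\,(1)\,;$ still has an exit branch when $i=0$). The paper's \texttt{switch} encoding is equally affected and simply does not discuss the issue; since clause~\eqref{eq:apc_i-gotoStat} sets $\Ftout=0$ for \texttt{goto}, the fall-through arc in~\eqref{eq:cfg_i-gotoStat} appears to be an artefact of the reference-CFG definition rather than intended. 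Under the evident correction, your cascade works without any placement tricks --- but only for graphs whose acyclic paths are node-simple.
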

\begin{proof}
Assume $s = n_1$ and $T = \{ t_1, \dots, t_h \}$.
Let $\bigl\{ \{ n_1, \dots, n_k \}, \{ t_1, \dots t_h \} \bigr\}$
be a partition of $N$, and let
$I = \{ \id_{n_1}, \dots, \id_{n_k}, \id_{t_1}, \dots \id_{t_h} \}$
be a set of C identifiers in one-to-one correspondence with $N$.
Then define
\[
  B
    \defeq
      \id_{n_1}: S_{n_1}; \dots; \id_{n_k}: S_{n_k};
      \id_{t_1}: S_{t_1}; \dots; \id_{t_h}: S_{t_h}
\]
where:
\begin{itemize}
\item
for each $i = 1$, \dots,~$k$,
\begin{align*}
  S_{n_i} &\defeq \mathtt{switch} \; (x_{n_i}) \; S'_{n_i} \\
\intertext{%
where $x_{n_i}$ is a variable and,
if $\bigl\{ (n_i, n_{i,1}), \dots, (n_i, n_{i,p_i}) \bigr\}$
is the subset of $A$ containing all arcs leaving $n_i$,
then
}
  S'_{n_i}
    &\defeq
     \mathtt{case} \; 1 \ccol \mathtt{goto} \; \id_{n_{i,1}};
     \;
     \cdots
     \; ; \;
     \mathtt{case} \; p_i \ccol
     \mathtt{default} \ccol \mathtt{goto} \; \id_{n_{i,p_i}};
\end{align*}
\item
for each $j = 1$, \dots,~$h$, $S_{t_j} = \mathtt{return}$.
\end{itemize}
Checking that $\bodycfg{i}{B}$ has the same number of acyclic paths
as $G$ is straightforward.
\qed
\end{proof}

Theorem~\ref{thm:counting-s-T-acyclic-paths-is-sharp-P-complete},
together with Proposition~\ref{prop:any-graph-in-C} and the fact that
$\bodyapc{i}{B}$ computes the number of acyclic paths in
$\bodycfg{i}{B}$ in time linearly proportional to the size of $B$,
implies that $\bodyapc{i}{B}$ cannot be exact for all
function bodies $B \in \Stm$.
However, it is correct for a very large class of function bodies,
informally characterized as follows (a formal definition is given
in Appendix~\ref{app:proofs}).

\begin{definition} \summary{(Controlled function body (informal).)}
\label{defn:controlled-function-body-informal}
Let $B \in \Stm$ be a full C function body.
We call $B$ a \emph{controlled function body} if it satisfies
the following properties:
\begin{itemize}
\item
  it does not contain any backjump;
\item
  if a loop in $B$ can terminate its execution by means of
  $\mathtt{goto}$, $\mathtt{break}$ or $\mathtt{return}$ statements, then no $\mathtt{goto}$ or
  $\mathtt{switch}$ statement in $B$ will
  jump or \texttt{switch} into the loop from outside.
\end{itemize}
\end{definition}

The following examples show minimal function bodies that are not
controlled.
\begin{example}
Here we jump into the loop from outside
via $\mathtt{goto}$ and the loop exits via $\mathtt{break}$:
\begin{verbatim}
  void f(int x) {
    goto l1;
    while (x) {
      break;
     l1: ... /* no branching statements here */
    }
}
\end{verbatim}
The value for ACPATH is $1$, but there are $2$ acyclic paths:
the path that jumps into the loop and then evaluates the guard to
false, and the path that jumps into the loop, then evaluates
the guard to true and exits via the $\mathtt{break}$ statement.

Here we jump into the loop from outside
via $\mathtt{switch}$ and the loop exits via $\mathtt{return}$:
\begin{verbatim}
  void g(int x, int y) {
    switch (x) {
        do {
          return;
      case 0: ... /* no branching statements here */
        } while (y)
    }
  }
\end{verbatim}
The value of ACPATH is $2$, but there are $3$ acyclic paths:
the path that switches to the end of the function,
the path that switches in the loop and evaluates the loop guard false,
and the path that switches into the loop, then evaluates the loop
guard to true, then exits via $\mathtt{return}$ statement.
\end{example}

\begin{definition} \summary{($\fund{\bodyapc{i}{}}{\Stm}{\Nset}$.)}
\label{def:apc_i-Fullbody}
Let $B \in \Stm$ be a full C function body, and $l(B) \in \wp(\Id)$
the set of labels in $B$; then the number of acyclic paths through $B$ with
respect to optimization level $i$, denoted by $\bodyapc{i}{B}$, is
given by
\begin{equation}
\label{eq:apc_i-Fullbody-eq}
  \bodyapc{i}{B} \defeq \Ft_{out} + \Rp,
\end{equation}
where
$\Gt = \bigl\{\, (\id, 0) \bigm| \id \in l(B) \,\bigr\}$ and
$\apc{i}{B}(1, 0, \Gt) = (\Ft_{out}, \Bp, \Cp, \Rp, \Gt_{out})$.
\end{definition}

\begin{restatable}{mytheorem}{apniscorrect}
\label{thm:apn-correctness}
Let $B \in \Stm$ be a controlled function body.
Then
\[
  \bodyapc{i}{B} = \apn\bigl(\bodycfg{i}{B}\bigr).
\]
\end{restatable}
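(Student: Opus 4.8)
The plan is to proceed by structural induction, but the statement is too weak to serve as its own inductive hypothesis, so I would first prove two lemmas about expressions and then a strengthened invariant about statements. The first lemma states that the functions of Definition~\ref{def:t-f-p} count acyclic paths through the reference CFG of an expression: whenever $\cfg{i}{E}(t,f,m)=\langle(N,A,s),m'\rangle$ with $t$, $f$ distinct and disjoint from $[m,m'-1]$, the values $\tp{i}(E)$, $\fp{i}(E)$, $\pp{i}(E)$ are the numbers of acyclic paths in $(N,A,s)$ from $s$ to $t$, to $f$, and to $\{t,f\}$, respectively. The second lemma states that the functions of Definition~\ref{def:tt-ff-pp} count ordered pairs of acyclic paths of that CFG that share no arc, with the endpoint pattern indicated by the name ($\ttp{i}$: both to $t$; $\tfp{i}$: the first to $t$ and the second to $f$; $\ffp{i}$: both to $f$; $\ppp{i}$: no constraint). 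Both are proved by a routine induction on $E$: for every connective the sub-CFGs of the immediate subexpressions occupy pairwise disjoint node intervals, hence pairwise disjoint arc sets, so a path through $E$ decomposes uniquely into at most one path through each relevant subexpression, and a pair of arc-disjoint paths through $E$ restricts to arc-disjoint pairs on the subexpressions. The only entries needing care are $\cor$, $\cand$, $\cqmkcol$ and the ternary operator, where a path of a given truth value may or may not traverse a given subexpression: this produces the sums of products in Table~\ref{tab:algorithm-expression-cases}, and the factors $2$ in the tables for $\ttp{i}$, $\tfp{i}$, $\ffp{i}$, $\ppp{i}$ count the two ways of assigning the two members of a pair to the two alternative routes.

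\textbf{The statement invariant.}
The heart of the proof is a structural induction on $S\in\Stm$ showing that $\apc{i}{S}$ computes acyclic-path counts in the reference CFG of the whole controlled body $B$ in which $S$ occurs. Fixing that occurrence, with $\cfg{i}{S}(t,\tb,\tc,m)=\langle(N_S,A_S,s),\Ms,\Mg,m'\rangle$ and $\apc{i}{S}(\Ft,\St,\Gt)=(\Ft_{out},\Bp,\Cp,\Rp,\Gt_{out})$, the invariant reads, informally: if $\Ft$ is the number of acyclic paths from the entry of $B$ to $s$ that use no arc internal to $S$'s sub-CFG, if $\St$ is the analogous count along the dispatch arc of an enclosing $\mathtt{switch}$ (when $S$ can be switched into), and if each $\Gt(\id)$ is the cumulative such count over the $\mathtt{goto}\;\id$ statements occurring before $S$, then $\Ft_{out}$ counts the acyclic paths into the node following $S$, the numbers $\Bp$, $\Cp$, $\Rp$ count those that reach $\tb$, $\tc$ and a $\mathtt{return}$ node ``through $S$'', and $\Gt_{out}(\id)$ extends $\Gt(\id)$ by the paths reaching the $\mathtt{goto}\;\id$ statements inside $S$. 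I would then check this clause by clause against Definition~\ref{def:apc_i-Stat}: expression statements, sequential composition, returns, compound statements and ``other'' statements are immediate from the expression lemmas and the induction hypothesis; the conditional and $\mathtt{switch}$ cases use disjointness of the node intervals, with $\mathtt{switch}$ folding the escaping-$\mathtt{break}$ count $\Bp_S$ into the fall-through and the predicate $d(S)$ recording whether control may also reach the successor directly from the last case. The labeled-statement and $\mathtt{goto}$ clauses are where the absence of backjumps is used: since $\apc{i}{\cdot}$ is applied in a single left-to-right pass, every $\mathtt{goto}\;\id$ has already contributed to $\Gt(\id)$ by the time the matching label is processed, which is exactly what \eqref{eq:apc_i-idLabel} and \eqref{eq:apc_i-gotoStat} assume; a backjump would process the $\mathtt{goto}$ after the label and break this accounting.

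\textbf{The loop cases --- the main obstacle.}
The hard part, and the reason the invariant has to be stated as strongly as above, is the $\mathtt{while}$, $\mathtt{do}$--$\mathtt{while}$ and (derived) $\mathtt{for}$ cases. For $\mathtt{while}$ the key observation is that an acyclic path entering the loop uses the arc $(m_1,s_S)$ of construction~\eqref{eq:cfg_i-whileStat} --- the only arc carrying control from the loop head into the body --- at most once, and, by the defining property of a controlled body (no $\mathtt{goto}$ or $\mathtt{switch}$ enters a loop from outside once that loop can exit via $\mathtt{goto}$, $\mathtt{break}$ or $\mathtt{return}$), there is no other way to re-enter the body or to come back to the guard once control has left through a $\mathtt{break}$ or $\mathtt{return}$. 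Hence the guard is traversed once (control leaves immediately, or enters the body and leaves via $\mathtt{break}$ or $\mathtt{return}$) or exactly twice, the first traversal to true and the second to false; as both guard traversals lie in the same sub-CFG they must be arc-disjoint, and the number of such ordered pairs is precisely $\tfp{i}(E)$. Matching this case analysis against the three summands of $\Ft_{out}$ in \eqref{eq:apc_i-whileStat} --- leave immediately ($\fp{i}(E)\Ft$), enter the body and $\mathtt{break}$ out ($\Bp_S\tp{i}(E)$), enter the body and then fall through or $\mathtt{continue}$ before leaving ($(\Ft_S+\Cp_S)\tfp{i}(E)$) --- together with the fact that the path-count outputs of $\apc{i}{S}$ are linear in $\Ft$, which lets the per-unit body counts be scaled by the number of ways of reaching the body, closes the case; $\mathtt{do}$--$\mathtt{while}$ is handled analogously, keeping in mind that the body precedes the guard, and $\mathtt{for}$ is reduced to $\mathtt{while}$ as in \eqref{eq:cfg_i-forStat}. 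I expect this loop analysis --- pinning down exactly how many times the body and the guard may appear on an acyclic path, and matching each resulting case to a term of the formula --- to be the most laborious and error-prone step.

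\textbf{Conclusion.}
Finally I would instantiate the statement invariant at $B$ itself, with $\Ft=1$ (the unique empty prefix reaching the function entry), $\St=0$, and $\Gt\equiv 0$, getting $\apc{i}{B}(1,0,\Gt)=(\Ft_{out},\Bp,\Cp,\Rp,\Gt_{out})$. Since a full function body has a single exit node, reached either by control falling off the end of $B$ or by a $\mathtt{return}$, the number of acyclic paths in $\bodycfg{i}{B}$ equals $\Ft_{out}+\Rp$, which by Definition~\ref{def:apc_i-Fullbody} is $\bodyapc{i}{B}$; this is the claimed identity.
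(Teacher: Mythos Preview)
Your proposal follows essentially the same architecture as the paper's proof: two expression lemmas (the paper's Lemmas~\ref{lm:expr-count} and~\ref{lm:expr-double-count}), a strengthened statement invariant proved by structural induction (Lemma~\ref{lm:stat-paths}), and the final instantiation at the function body (Corollary~\ref{cor:body-paths}). The only organizational difference is that the paper factors out two small auxiliary results you absorb inline: a separate lemma for labels (Lemma~\ref{lm:label-paths}) and an explicit ``controlled-body'' lemma (Lemma~\ref{lm:controlled-function-body}) stating that any acyclic path touching a loop body must pass through the loop's synthetic \texttt{enter} or \texttt{exit} node---this is what underpins your claim that the guard is traversed at most twice. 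One small slip: a full function body does not have a single exit node (each \texttt{return} creates its own, in addition to node~$0$), but your conclusion $\Ft_{out}+\Rp=\apn(\bodycfg{i}{B})$ is unaffected.
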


The proof of Theorem~\ref{thm:apn-correctness} is in Appendix A.

\section{Implementation and Experimental Evaluation}
\label{sec:implementation-experimental-evaluation}

This section reports on a study of the relationship between the metric
introduced in this paper, ACPATH, and NPATH \cite{Nejmeh88}.
As we have already seen, they are not equivalent from the theoretical point
of view: we will now show that they are not equivalent also from the
point of view of their practical application.

\subsection{Implementation}
\label{sec::implementation}

The ACPATH and NPATH metrics (and many others) have been implemented
in ECLAIR, a powerful platform for the automatic analysis, verification,
testing and transformation of C, \Cplusplus{} and Java source code,
as well as Java bytecode.%
\footnote{\url{http://bugseng.com/products/eclair}}
In particular, for assessing the complexity of software, ECLAIR
provides comprehensive code metrics that can be accumulated
over a single function, translation unit, program or even the whole
project.
The ACPATH algorithm has been implemented in ECLAIR as a metric over
a complete function body, the optimization level $i \in \{ 0, 1, 2 \}$
being a parameter of the analysis.
Although the ACPATH metric is only fully specified and verified here for C
code, the implementation is designed to handle both C and \Cplusplus{} user
code, i.e., fully instantiated preprocessed code.
The same holds true for the implementation of NPATH.
Apart from the generalization to \Cplusplus{}, the implementation of both
metrics closely follows the definitions given in this paper.
The implementation language is a very high-level logical description
language that is automatically translated to executable code.
The implementation of NPATH is around 300 lines long whereas 550 lines
are sufficient to implement ACPATH.

\subsection{Sampled Functions}
\label{sec::sampled-functions}

The experimental evaluation was conducted on $61$ C projects,
for a total of $35284$ functions: the majority of such projects
involve safety- or mission-critical functionality, mainly from the
automotive sector, with projects from other domains (aerospace, railway and
medical appliances), some operating system kernels and some, non-critical
open-source projects.

The condition about the absence of backjumps is largely satisfied:
only $19$ C functions ($0.05\%$) have one or more backjumps.
We are currently instrumenting ECLAIR in order to count the number
of functions that do not satisfy the other conditions
of Definition~\ref{defn:controlled-function-body-informal}.
However, we expect the number of functions/methods to which
Theorem~\ref{thm:apn-correctness} does not apply to be very small,
if not negligible.

\subsection{Statistical Analysis}
\label{sec::statistical-analysis}

Many studies on software metrics place reliance upon
Pearson linear correlation coefficient $r$,
which, however, assumes the variables are approximately normally
distributed \cite{Shepperd88}.
This is definitely not our case, as sample skewness of our
data is around~$200$.
A common methodology for skew reduction is transformation:
in our case sample skew drops to around $50$ after taking the logarithm
of the metric values.  A further drop to around $1$ is obtained by
adding~$1$ and taking the logarithm again, that is, by transforming
the data with the function $\log\bigl(1 + \log(x)\bigr)$:
Figure~\ref{fig:scatter-plot} shows the scatter plot after the transformations.
\begin{figure}[ht]
\centering
\includegraphics[width=10cm]{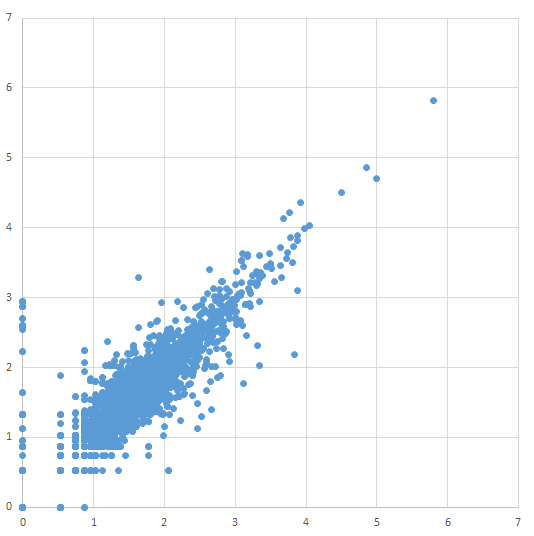}
\caption{Scatter plot:
         $\log\bigl(1 + \log(\mathrm{ACPATH})\bigr)$, $x$ axis, vs
         $\log\bigl(1 + \log(\mathrm{NPATH})\bigr)$, $y$ axis}
\label{fig:scatter-plot}
\end{figure}
The figure shows that there is good, but not absolute correlation,
with Pearson correlation coefficient $r \simeq 0.98$.
The average error is $\mu \simeq 0$, i.e., NPATH overestimations
approximately balance the underestimations,
but the standard deviation of the error, $\sigma \simeq 0.12$, confirms,
taking into account the $\log\bigl(\log(\cdot)\bigr)$ transformation,
that estimation errors can be quite large.
To give an idea what this means, let us report this statistics
back from the $\log\bigl(1 + \log(x)\bigr)$ to the $x$ scale
and let us focus on values of ACPATH in the range $[1, 26]$.
In Figure~\ref{fig:line-graph} the red, blue and green lines
represent $\mu$, $\mu - \sigma$ and $\mu + \sigma$, respectively.
In Figure~\ref{fig:bar-graph} we report, for each value of ACPATH
in the range $[1, 26]$, the computed values for NPATH, where we discarded
the tails, $16\%$ on each side, so that each bar represents $68\%$
of the samples (this matches the range $[\mu - \sigma, \mu + \sigma]$
of Figure~\ref{fig:line-graph}).
It can be seen that, while the error committed by NPATH is low
for functions with number of acyclic paths below $10$, the error
can become rather large even for slightly more complex functions,
and their distribution is quite faithfully described by the values
of $\mu$ and $\sigma$ obtained as described above.

\begin{figure}[p]
\centering
\includegraphics[width=10cm]{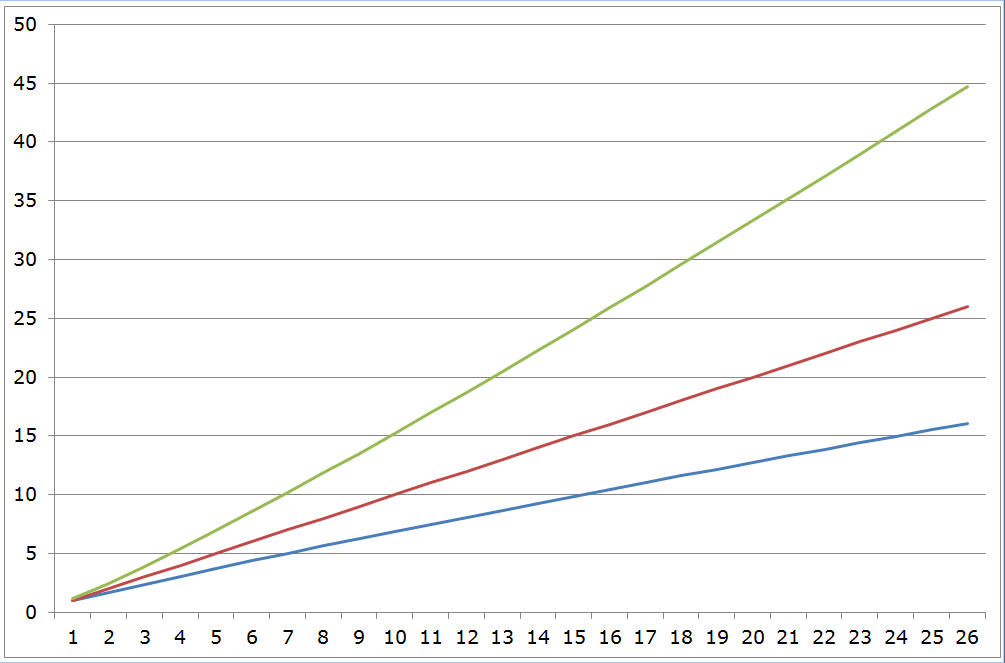}
\caption{ACPATH value, $x$ axis, vs NPATH predicted value distribution,
         $y$ axis}
\label{fig:line-graph}
\end{figure}
\begin{figure}[p]
\centering
\includegraphics[width=10cm]{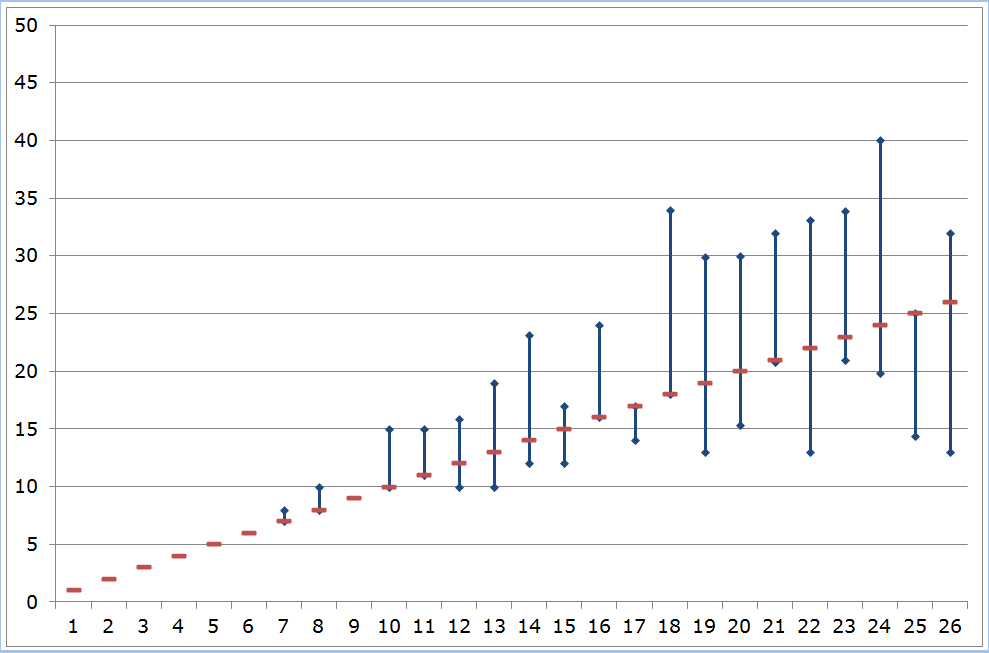}
\caption{ACPATH value, $x$ axis, vs NPATH, $y$ axis:
         each bar contains the central $68\%$ of the samples}
\label{fig:bar-graph}
\end{figure}

\subsection{Analysis with Respect to Commonly Used Thresholds}
\label{sec::analysis-with-thresholds}

The statistical analysis of the previous section does not take into
account the fact that organizations that enforce the adoption of sound
software engineering principles place strong limitations on the
maximum number of acyclic paths that a function may have.
Other elements for the comparison can thus be given with
reference to the recommended thresholds, under the assumption that
such thresholds were actually meant to apply to the number of acyclic
paths and not the measure NPATH, that is neither a lower bound nor an
upper bound for them even for trivial programs.

In \cite{Nejmeh88} a threshold value of $200$ is recommended
based on studies conducted at AT\&T Bell Laboratories.
The HIS\footnote{HIS,
Herstellerinitiative Software, is an interest group set up by
Audi, BMW, Daimler, Porsche, and Volkswagen in order to join forces
on methods of software design and quality assurance for microprocessor
based control units.}
Software Test Working Group,
in the document defining the \emph{HIS metrics} \cite{HIS-SCM-2008},
which are widespread in the automotive and other industrial
sectors, is stricter and recommends not to exceed $80$.
In our experiment, we found that the ACPATH threshold of $80$
is respected by $95\%$ of the C functions,
whereas the threshold of $200$ is respected by $97\%$ of the C functions.
The number of functions that would be miscategorized with respect to
violating or complying with the thresholds is rather small:
\begin{align*}
\mathrm{ACPATH} &>    80  & \mathrm{NPATH} &\leq 80  & \#\mathrm{f} &= 195 \\
                &\leq 80  &                &>    80  &              &= 281 \\
                &>    200 &                &\leq 200 &              &= 152 \\
                &\leq 200 &                &>    200 &              &= 182 \\
\end{align*}
which also shows that, for ``borderline'' functions,
NPATH errs on the ``non-compliant'' side more often than it errs
on the ``compliant'' side.
The differences can be very large though, here are the worst cases
we have found in our experimentation:
\begin{align*}
  \mathrm{ACPATH} &= 1,          &\mathrm{NPATH} &= 67108864, \\
  \mathrm{ACPATH} &= 597781,     &\mathrm{NPATH} &= 21, \\
  \mathrm{ACPATH} &= 2329612972, &\mathrm{NPATH} &= 130.
\end{align*}
The first such worst-case result concerns a function
that contains $26$ macro invocations that expand to
as many $\mathtt{do} \; S \; \mathtt{while} \, (0)$ statements,
where $S$ is a basic block; so while ACPATH gives $1$,
i.e., the correct number of acyclic paths, NPATH results
in $2^{26} = 67108864$.

\afterpage{\clearpage}

\section{Conclusion}
\label{sec:conclusion}

Path complexity is a program complexity measure that can be
used to assess the testability of units, that is, functions
or methods, depending on the programming language.
It is recognized that this is a much better estimator of testability than
cyclomatic complexity \cite{Harrison84,McCabe76}, which was
found to perform no better that LSLOCs (Logical Source Lines of Code)
in this regard \cite{Nejmeh88,Shepperd88}.

As the number of paths in a unit can be unbounded, in \cite{Nejmeh88}
it was proposed to use the number of acyclic paths as a proxy.
In the same paper, the NPATH metric for C programs was proposed
with a claim that it would count the number of acyclic
paths for programs without \texttt{goto} statements \cite[page~192]{Nejmeh88}:
\begin{quote}
NPATH is a measure that is more closely related to the
number of acyclic execution paths through a function.
In particular, the NPATH measure differs from the actual
number of acyclic execution paths by the number
of acyclic execution paths resulting from \texttt{goto} statements.
\end{quote}
In reality, the syntax for the $\mathtt{switch}$ statement
is significantly more restricted than what is actually
permitted by the definition of the C programming language
so, while not acknowledged explicitly in \cite[page~192]{Nejmeh88},
this is another restriction of the NPATH metric.

The starting point for this paper was the experimental discovery that,
in fact, the number of acyclic execution paths through a function
can differ from NPATH enormously.
Indeed, the difference can be seen even for very simple examples
and these small variations can compound in a multiplicative way to
very large differences.

We then asked whether or not the number of acyclic paths could be
computed exactly by working directly on the control flow graph of the function
rather than on the syntax of its body.
Unfortunately, we discovered that the problem of counting the
acyclic paths in a graph is $\mathsf{\sharp P}$-complete,
which leaves little hope for an efficient algorithm.
If one wants simply to know whether the number of acyclic paths
is below or above a certain (small) threshold, then a possibility
is to enumerate all the acyclic paths exhaustively or until the
threshold is attained.  For a CFG $G = (N, A, s)$ this can be done in time
$O\bigl(|A| \cdot \Delta(G)\bigr)$ per enumerated acyclic path,
where $\Delta(G)$ is the maximum degree
of $G$,\footnote{The \emph{degree} of a node is
the number of arcs incident to the node.
In the CFG of a C function, the maximum degree is essentially given
by the maximum number of cases in $\mathtt{switch}$ statements.}
using Johnson's algorithm \cite{Johnson75} on the
line graph of $G$.\footnote{The \emph{line graph}
of a directed graph $G = (N, A)$ is the directed graph $L(G)$ whose node
set is $A$ and arcs the set of adjacent arcs in $A$.}
This is probably still not efficient enough and, moreover, such an
approach does not allow to differentiate between functions that
violate the threshold.

In this paper, we defined a new metric, called ACPATH, for C-like languages:
even though we defined it formally only for the C programming language,
we have extended it to \Cplusplus{}.
The metric can be computed very efficiently with a single traversal
of the abstract syntax tree of the function.
Moreover, we have proved that ACPATH does correspond to the number
of acyclic paths under two conditions: (1) absence of backjumps;
(2) absence of jumps into a loop whenever there are jumps out of the same loop,
terminating the loop in ways unrelated to the evaluation of its guard.
We proved that, if condition (1) is removed, then the existence
of an efficient algorithm to compute the number of acyclic paths
would imply $\mathsf{P} = \mathsf{NP}$.
The study of what can be done if condition (2) is removed or weakened
is one direction for future work.

We thus proposed ACPATH as a natural successor and replacement for NPATH:
the former solves most of the problems of the latter while retaining
the existence of a very efficient algorithm to compute it,
with only a minor increase in the complexity of the definition.
Moreover, ACPATH has been proved correct on counting the number
of acyclic paths for most programs that are written in practice,
and the cases where ACPATH is not exact are easy to detect.
Exactness of ACPATH on many functions allowed us to conduct
experiments, using the ECLAIR software verification platform,
on the adequacy of NPATH, both from a statistical
and from a more pragmatic point of view.

A different approach to the estimation of execution path complexity has been
proposed in \cite{BangAB15}, called \emph{asymptotic path complexity}.
Using techniques from algebraic graph theory and linear algebra, the
authors show how to obtain a closed form upper bound to
$\mathrm{path}(n)$ ---the number of (possibly cyclic) paths in the
unit of length at most $n$--- of the form $n$, $n^2$, $n^3$ and so on,
or $b^n$ for some exponential base $b$.
The bound is only valid to the limit, i.e., as $n$ goes to infinity.

For future work, we plan to formalize the extension to \Cplusplus{}
and to other languages, Java source code and bytecode to start with.
A \Cplusplus{} extension is already implemented in ECLAIR but its correctness
has not yet been formally proved.
The only language feature that requires special care in the generalization
to \Cplusplus{} is structured exception handling.
We believe that an extension for Java can then easily be derived
from the one to \Cplusplus: additionally, as Java has no
$\mathtt{goto}$ statements and the syntax of $\mathtt{switch}$ is more
restrictive than in \Cplusplus{} so that it is not possible to
jump into loops, we conjecture ACPATH for Java will always be exact.
The generalization to Java bytecode is more problematic, as it depends
on the ability to reconstruct loops, which is nontrivial especially
if obfuscation techniques have been used.

\begin{acknowledgements}
We are grateful to Charles Colbourn (Arizona State University, U.S.A.)
for indicating us the reduction in the proof of
Theorem~\ref{thm:counting-s-T-acyclic-paths-is-sharp-P-complete},
and to Silvano Martello (University of Bologna, Italy) and
Yves Crama  (University of Li\`{e}ge, Belgium) for helping us finding
what we were looking for.
Thanks to Alessio Conte and Andrea Marino (University of Pisa, Italy),
for the discussions we had on algorithms for the enumeration
of acyclic paths in generic directed graphs.
Our gratitude goes also to Francesco Morandin (University of Parma, Italy),
for his invaluable help with the statistical analysis of the experimental data.
\end{acknowledgements}

\clearpage
\appendix

\ESM{Appendices}

\section{Technical Proofs}
\label{app:proofs}

We now show that, for each optimization level~$i$
and each expression $E$, the following hold:
\begin{itemize}
\item
  $\tp{i}(E)$ (resp., $\fp{i}(E)$) is the number of acyclic paths in $\gcfg{i}{E}(t, f, m)$ leading to $t$ (resp., $f$);
\item
  $\pp{i}(E)$ is the total number of acyclic paths in $\gcfg{i}{E}(t, f, m)$.
\end{itemize}

  Note that in all the
  proofs, a ``path to~$s$'' means a path leading to~$s$., ``pair of paths'' always means a pair of acyclic paths that do
  not share an arc
\begin{lemma}
\label{lm:expr-count}
Let $i \in \{ 0, 1, 2 \}$ be an optimization level,
$E \in \Exp$ be an expression, and let
$m, t, f \in \Nset$ be such that $m > \max\{t, f\}$.
Then:
\begin{align}
  \tp{i}(E) &= \napn\bigl(\gcfg{i}{E}(t, f, m), t\bigr), \\
  \fp{i}(E) &= \napn\bigl(\gcfg{i}{E}(t, f, m), f\bigr), \\
  \pp{i}(E) &= \apn\bigl(\gcfg{i}{E}(t, t, m)\bigr).
\end{align}
\end{lemma}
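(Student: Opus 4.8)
The plan is to argue by structural induction on $E$, proving the three identities simultaneously: the value $\pp{i}$ of a sub-expression is needed to compute $\tp{i}$ and $\fp{i}$ of its parent (as already in the $\uop$, $\bop$ and comma cases), so a simultaneous induction is the natural setting. The inductive statement must carry the freshness hypothesis $m > \max\{t, f\}$; before applying the induction hypothesis to a sub-expression one checks that the node counter passed to that sub-expression still strictly exceeds the target labels passed to it, which is immediate since every $\cfg{i}{\cdot}$ call of Definition~\ref{def:reference-cfg} introduces only labels in $[m, m'-1]$ while the external targets stay below $m$. A convenient preliminary is the (easy, same-case-split) remark that every $\gcfg{i}{E}(t, f, m)$ is acyclic, so that $\napn$ and $\apn$ on such graphs are just counts of directed paths from the entry to the relevant sink and the arc-deletion bookkeeping in the recurrences of Section~\ref{sec::acyclic-paths} can be ignored.

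The base cases $E = x$ and $E = c$ (the latter split according to whether $\tv{i}(c)$ is $\true$, $\false$ or $\maybe$) are read off directly from the one- or two-node reference graphs and match the corresponding rows of Table~\ref{tab:algorithm-expression-cases}. For the inductive step I would group the productions by the shape of their CFG construction. For the \emph{transparent} unary forms (unary plus and minus, parentheses, casts, and $\cnot E_1$ with the roles of $t$ and $f$ exchanged) the reference CFG is literally that of $E_1$ and the identities transfer verbatim. For the \emph{opaque} forms $\uop E_1$ and $E_1 \bop E_2$ the CFG adds a single fresh branch node $m$ with arcs to $t$ and to $f$, the sub-CFGs being built with coincident true/false targets (such as $\cfg{i}{E_1}(m,m,m+1)$), so the quantity that enters is the total path count $\pp{i}$ supplied by the third induction hypothesis; moreover the two branch arcs $(m,t)$ and $(m,f)$ collapse into one exactly when $t = f$, which is precisely what makes the common tabulated value $\pp{i}$ come out right for all three functions at once. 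For the short-circuit binary operators $E_1\cand E_2$, $E_1\cor E_2$, $E_1\cqmkcol E_2$, the comma $E_1, E_2$, and the ternary $E_1 \cqmk E_2 \ccol E_3$, the clauses in which $\tv{i}(E_1)$ is resolved reduce outright to an induction hypothesis, and in the remaining clause the CFG is a sequential gluing of $G_1 = \cfg{i}{E_1}(\ldots)$ onto $G_2 = \cfg{i}{E_2}(t,f,m)$ (and, for the ternary, onto $G_3$ as well), with the start node $s_2$ of $G_2$ fed to $E_1$ as one of its targets. One then shows that every acyclic path of the glued graph leading to $t$ (resp.\ to $f$) decomposes uniquely as an acyclic path in $G_1$ reaching one of its targets, followed---only if that target was $s_2$---by an acyclic path in $G_2$ reaching $t$ (resp.\ $f$); since the interiors of the sub-CFGs occupy disjoint label ranges ($m < m_1 \le m_2$) and $s_2, t, f$ are sinks of every sub-CFG that touches them, these pieces share no arc, so the counts multiply and add exactly as in Table~\ref{tab:algorithm-expression-cases}. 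The third identity follows from the same decomposition after identifying $t$ with $f$.

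The step I expect to be the real obstacle is the family of degenerate sub-cases inside the short-circuit and ternary constructions: when a sub-expression is a compile-time constant --- possible at optimization levels $1$ and $2$ --- its CFG is a single node, so the gluing node $s_2$ (or $s_3$) may coincide with $t$ or with $f$, and node sets that are ``morally'' disjoint overlap; then some paths merge and some arcs coincide, and one must verify case by case that the tabulated formulas still evaluate correctly. What keeps this manageable is that these coincidences are mirrored by the degenerate rows of Table~\ref{tab:algorithm-expression-cases} (the $\tv{i}(c) \in \{\true,\false\}$ entries), and that the freshness hypothesis $m > \max\{t,f\}$ already rules out the only other possible clash --- a freshly allocated interior node hitting an external target --- so the case analysis, though lengthy, stays finite and mechanical.
\qed
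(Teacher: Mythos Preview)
Your plan is the paper's plan: structural induction on $E$, matching each clause of Definition~\ref{def:reference-cfg} against the corresponding row of Table~\ref{tab:algorithm-expression-cases}. The acyclicity remark and the grouping of cases by CFG shape are cleaner than the paper's linear walkthrough but not substantively different.

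You go beyond the paper in explicitly flagging the degenerate sub-cases---where a resolved constant makes the entry node $s_2$ (or $s_3$) of a sub-CFG coincide with one of the external targets---as the crux. The paper's handling of the $\tv{i}(E_1)=\maybe$ branch of~\eqref{eq:cfg_i-conjExpr} silently applies the induction hypothesis to $\gcfg{i}{E_1}(s_2,f,m_1)$ as though $s_2\neq f$, so your worry is well placed. Unfortunately your conclusion that ``the tabulated formulas still evaluate correctly'' does not survive the check you propose. Take $E = x \cand 0$ at level $i\ge 1$ with $t\neq f$: since $\tv{i}(x)=\maybe$ we fall into the ``otherwise'' branch, and since $\tv{i}(0)=\false$ we get $s_2=f$, so $\gcfg{i}{x}(s_2,f,m_1)=\gcfg{i}{x}(f,f,m)$ collapses to the two-node graph $(\{m,f\},\{(m,f)\},m)$ and the combined CFG has exactly one acyclic path to $f$. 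Yet Table~\ref{tab:algorithm-expression-cases} gives $\fp{i}(x\cand 0)=\fp{i}(x)+\tp{i}(x)\,\fp{i}(0)=1+1\cdot 1=2$, and likewise $\pp{i}(x\cand 0)=2$ against a single path in $\gcfg{i}{E}(t,t,m)$. So the degenerate case does not merely need care; under the definitions as written it is a counterexample. The gap is therefore not in your induction scheme---which is the paper's---but in the lemma itself at optimization levels $1$ and $2$, and neither your sketch nor the paper's proof can close it without amending Definition~\ref{def:reference-cfg} or Definition~\ref{def:t-f-p}.
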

\begin{proof}
The proof is by structural induction on $E$.

\proofsec{Variables}
By~Definition~\ref{def:t-f-p},
$\tp{i}(E) = \fp{i}(E) = \pp{i}(E) = 1$.
By~\eqref{eq:cfg_i-variables}, letting $s \in \{t,f\}$,  we have
\begin{align*}
  \napn\bigl(\gcfg{i}{E}(t, f, m), s\bigr)
  &= \napn\Bigl(\big(\{ m,t,f \}, \{ (m,t),(m,f) \}, m\bigr), s\Bigr) \\
  &= \napn\bigl(m, \{ (m,t), (m,f) \}, s\bigr) \\
  &= \napn\bigl(s, \{ (m, t), (m,f) \} \setminus \{(m,s)\}, s\bigr) = 1, \\
  \apn\bigl(\gcfg{i}{E}(t, t, m)\bigr)
  &= \apn\Bigl(\big(\{ m,t \}, \{ (m,t) \}, m\bigr)\Bigr) \\
  &= \napn\bigl(m, \{ (m,t), (m,f) \}, t\bigr) = 1.
\end{align*}

\proofsec{Constants}
There are three cases:
\begin{description}
\item[$\tv{i}(E) = \true$:]
  by~Definition~\ref{def:t-f-p},
  $\tp{i}(E) = 1$, $\fp{i}(E) = 0$ and $\pp{i}(E) = 1$;
  hence, by~\eqref{eq:cfg_i-constants},
  \begin{align*}
    \dapn\bigl(\gcfg{i}{E}(t, f, m), t\bigr)
    &= \dapn\Bigl(\big(\{ t \}, \{ \}, t\bigr), t\Bigr)
    = \dapn\bigl(t, \{ \}, t\bigr) = 1, \\
    \dapn\bigl(\gcfg{i}{E}(t, f, m), f\bigr)
    &= \dapn\Bigl(\big(\{ t \}, \{ \}, t\bigr), f\Bigr)
    = \dapn\bigl(t, \{ \}, f\bigr) = 0, \\
    \apn\bigl(\gcfg{i}{E}(t, t, m)\bigr)
    &= \apn\Bigl(\big(\{ t \}, \{ \}, t\bigr)\Bigr) = 1.
  \end{align*}
\item[$\tv{i}(E) = \false$:]
  the proof is similar to the previous case;
\item[$\tv{i}(E) = \maybe$:]
  the proof is similar to the case of variables.
\end{description}

\proofsec{Logical negation}
By~Definition~\ref{def:t-f-p},
$\tp{i}(!E) = \fp{i}(E)$ and $\pp{i}(!E) = \pp{i}(E)$ and,
by~\eqref{eq:cfg_i-negExpr},
$\gcfg{i}{E}(t, f, m) = \gcfg{i}{!E_1}(f, t, m)$. Hence, by the
inductive hypothesis,
\begin{alignat*}{3}
  \fp{i}(\cnot E_1) &=\tp{i}(E_1)
           &&= \napn\bigl(\gcfg{i}{E_1}(f, t, m), f\bigr)
           &&= \napn\bigl(\gcfg{i}{!E_1}(t, f, m), f\bigr), \\
  \tp{i}(\cnot E_1) &= \fp{i}(E_1)
           &&= \napn\bigl(\gcfg{i}{E_1}(f, t, m), t\bigr)
           &&= \napn\bigl(\gcfg{i}{!E}(t, f, m), t\bigr), \\
  \pp{i}(\cnot E_1) &= \pp{i}(E_1)
           &&= \apn\bigl(\gcfg{i}{E_1}(t, t, m)\bigr)
           &&= \apn\bigl(\gcfg{i}{!E_1}(t, t, m)\bigr).
\end{alignat*}

\proofsec{Unary plus, unary minus, parenthesis and cast operators}
By~Definition~\ref{def:t-f-p},
\[
\tp{i}(\uop E_1) = \tp{i}(E_1),\quad \fp{i}(\uop E_1) = \fp{i}(E_1),\quad
\pp{i}(\uop E_1) = \pp{i}(E_1)
\]
and,
by~\eqref{eq:cfg_i-specUnaryExpr},
\[
\gcfg{i}{\uop E_1}(t, f, m) = \gcfg{i}{E_1}(t, f, m).
\]
Hence the proof follows by induction.

\proofsec{Other unary operators}
By~Definition~\ref{def:t-f-p},
\begin{equation*}
  \tp{i}(\uop E_1) = \fp{i}(\uop E_1) = \pp{i}(\uop E_1) = \pp{i}(E_1).
\end{equation*}
Moreover, for each of the paths to~$m$ in
$\gcfg{i}{E_1}(m, m, m+1)$, there is one path to~$t$ passing through the arc
$(m, t)$ and one path to~$f$ passing through $(m, f)$ in
$\gcfg{i}{\uop E_1}(t, f, m)$, and one path to an exit node in
$\gcfg{i}{E_1}(t, t, m)$.
Hence by~\eqref{eq:cfg_i-unaryExpr},
\begin{align*}
 \apn\bigl(\gcfg{i}{E_1}(t, t, m)\bigr)
                  &= \napn\bigl(\gcfg{i}{\uop E_1}(t, f, m), t\bigr), \\
 \apn\bigl(\gcfg{i}{E_1}(t, t, m)\bigr)
                  &= \napn\bigl(\gcfg{i}{\uop E_1}(t, f, m), f\bigr), \\
  \apn\bigl(\gcfg{i}{E_1}(t, t, m)\bigr)
                  &= \apn\bigl(\gcfg{i}{\uop E_1}(t, t, m)\bigr).
\end{align*}
Hence the proof follows by induction.

\proofsec{Logical conjunction}
By~Definition~\ref{def:t-f-p},
\begin{equation}
  \label{eq:lm:expr-count:lc-t-f-p}
  \begin{aligned}
  \tp{i}(E_1 \cand E_2) &= \tp{i}(E_1) \tp{i}(E_2), \\
  \fp{i}(E_1 \cand E_2) &= \fp{i}(E_1) + \tp{i}(E_1) \fp{i}(E_2), \\
  \pp{i}(E_1 \cand E_2) &= \fp{i}(E_1) + \tp{i}(E_1) \pp{i}(E_2).
  \end{aligned}
\end{equation}
There are three cases:
\begin{description}
\item [$\tv{i}(E_1) = \false$:]
  then by~\eqref{eq:cfg_i-constants}, $\tp{i}(E_1) = 0$ and $\fp{i}(E_1) = 1$;
  also by~\eqref{eq:cfg_i-conjExpr},
  $\gcfg{i}{E_1 \cand E_2}(t, f, m) = \bigl(\{f\}, \emptyset, f\bigr)$,
  so there are no paths to~$t$ and one path to~$f$.
  Moreover, in control-flow graph
  $\gcfg{i}{E_1 \cand E_2}(t, t, m) = \bigl(\{t\}, \emptyset, t\bigr)$,
  there is one path to an exit node.
  Hence, by~\eqref{eq:lm:expr-count:lc-t-f-p},
  \begin{align*}
    \napn\bigl(\gcfg{i}{E_1 \cand E_2}(t, f, m), t\bigr) &= 0
    = \tp{i}(E_1 \cand E_2), \\
    \napn\bigl(\gcfg{i}{E_1 \cand E_2}(t, f, m), f\bigr) &= 1
    = \fp{i}(E_1 \cand E_2), \\
    \apn\bigl(\gcfg{i}{E_1 \cand E_2}(t, t, m)\bigr) &= 1
    = \pp{i}(E_1 \cand E_2).
\end{align*}

\item [$\tv{i}(E_1) = \true$:]
  then,
  by~\eqref{eq:cfg_i-constants}, $\tp{i}(E_1) = 1$ and $\fp{i}(E_1) = 0$;
  also, by~\eqref{eq:cfg_i-conjExpr},
  $\gcfg{i}{E_1 \cand E_2}(t, f, m) = \gcfg{i}{E_2}(t, f, m)$.
  Hence, by \eqref{eq:lm:expr-count:lc-t-f-p} and the inductive hypothesis:
\begin{align*}
  \tp{i}(E_1 \cand E_2)
    &= \tp{i}(E_2) \\
    &= \napn\bigl(\gcfg{i}{E_2}(t, f, m), t\bigr) \\
    &= \napn\bigl(\gcfg{i}{E_1 \cand E_2}(t, f, m), t\bigr), \\
  \fp{i}(E_1 \cand E_2)
    &= \fp{i}(E_2) \\
    &= \napn\bigl(\gcfg{i}{E_2}(t, f, m), f\bigr) \\
    &= \napn\bigl(\gcfg{i}{E_1 \cand E_2}(t, f, m), f\bigr), \\
  \pp{i}(E_1 \cand E_2)
    &= \pp{i}(E_2) \\
    &= \apn\bigl(\gcfg{i}{E_2}(t, t, m)\bigr) \\
    &= \apn\bigl(\gcfg{i}{E_1 \cand E_2}(t, t, m)\bigr).
\end{align*}

\item [$\tv{i}(E_1) = \maybe$:]
  using the notation in~\eqref{eq:cfg_i-conjExpr},
  since the exit node for true evaluation for
  graph $\gcfg{i}{E_1}(s_2, f, m_1)$
  is the entry node for graph $\gcfg{i}{E_2}(t, f, m)$,
  for each path to~$s_2$ in $\gcfg{i}{E_1}(s_2, f, m_1)$,
  there are $\napn\bigl(\gcfg{i}{E_2}(t, f, m), t\bigr)$ paths to
  $t$ and $\napn\bigl(\gcfg{i}{E_2}(t, f, m), f\bigr)$ paths
  to~$f$ in $\gcfg{i}{E_1 \cand E_2}(t, f, m)$,
  as well as $\apn\bigl(\gcfg{i}{E_2}(t, t, m)\bigr)$
  paths to an exit node in
  $\gcfg{i}{E_1 \cand E_2}(t, t, m)$. In addition, control-flow graph
  $\gcfg{i}{E_1}(s_2, f, m_1)$ has $\napn\bigl(\gcfg{i}{E_1}(s_2, f, m), f\bigr)$
  paths to~$f$.
  Hence, by~\eqref{eq:lm:expr-count:lc-t-f-p},~\eqref{eq:cfg_i-conjExpr} and the inductive hypothesis:
\begin{align*}
  \tp{i}(E_1 \cand E_2) &= \tp{i}(E_1) \tp{i}(E_2) \\
    &= \napn\bigl(\gcfg{i}{E_1}(s_2, f, m), s_2\bigr) \napn\bigl(\gcfg{i}{E_2}(t, f, m), t\bigr) \\
    &= \napn\bigl(\gcfg{i}{E_1 \cand E_2}(t, f, m), t\bigr), \\
  \fp{i}(E_1 \cand E_2) &= \fp{i}(E_1) + \tp{i}(E_1) \fp{i}(E_2) \\
    &= \napn\bigl(\gcfg{i}{E_1}(s_2, f, m), f\bigr) \\
    &\quad \mathord{} + \napn\bigl(\gcfg{i}{E_1}(s_2, f, m), s_2\bigr) \napn\bigl(\gcfg{i}{E_2}(t, f, m), f\bigr) \\
    &= \napn\bigl(\gcfg{i}{E_1 \cand E_2}(t, f, m), f\bigr), \\
  \pp{i}(E_1 \cand E_2) &= \fp{i}(E_1) + \tp{i}(E_1) \pp{i}(E_2) \\
    &= \napn\bigl(\gcfg{i}{E_1}(s_2, f, m), f\bigr) \\
    &\quad \mathord{} + \napn\bigl(\gcfg{i}{E_1}(s_2, f, m), s_2\bigr) \apn\bigl(\gcfg{i}{E_2}(t, t, m)\bigr) \\
    &= \apn\bigl(\gcfg{i}{E_1 \cand E_2}(t, t, m)\bigr).
\end{align*}
\end{description}

\proofsec{Logical disjunction}
Dual to the case of logical conjunction.

\proofsec{Comma operator}
Using the notation in~\eqref{eq:cfg_i-commaExpr},
as $\gcfg{i}{E_1}(s_2, s_2, m_1)$
has both the exit nodes equal
to the entry node of $\gcfg{i}{E_2}(t, f, m)$,
for each path leading to~$s_2$
in $\gcfg{i}{E_1}(s_2, s_2, m_1)$
there are $\napn\bigl(\gcfg{i}{E_2}(t, f, m), t\bigr)$ paths to~$t$
and $\napn\bigl(\gcfg{i}{E_2}(t, f, m), f\bigr)$ paths
to~$f$ in $\gcfg{i}{E_1, E_2}(t, f, m)$,
as well as $\apn\bigl(\gcfg{i}{E_2}(t, t, m)\bigr)$ paths
to an exit node in $\gcfg{i}{E_1, E_2}(t, t, m)$.
Hence, by Definition~\ref{def:t-f-p},~\eqref{eq:cfg_i-commaExpr}
and the inductive hypothesis:
\begin{align*}
  \tp{i}(E_1, E_2)
    &= \pp{i}(E_1) \tp{i}(E_2) \\
    &= \apn\bigl(\gcfg{i}{E_1}(s_2, s_2, m_1)\bigr) \napn\bigl(\gcfg{i}{E_2}(t, f, m), t\bigr) \\
    &= \napn\bigl(\gcfg{i}{E_1, E_2}(t, f, m), t\bigr), \\
  \fp{i}(E_1, E_2) &= \pp{i}(E_1) \fp{i}(E_2) \\
    &= \apn\bigl(\gcfg{i}{E_1}(s_2, s_2, m_1)\bigr) \napn\bigl(\gcfg{i}{E_2}(t, f, m), f\bigr) \\
    &= \napn\bigl(\gcfg{i}{E_1, E_2}(t, f, m), f\bigr), \\
  \pp{i}(E_1, E_2) &= \pp{i}(E_1) \pp{i}(E_2) \\
    &= \apn\bigl(\gcfg{i}{E_1}(s_2, s_2, m_1)\bigr) \apn\bigl(\gcfg{i}{E_2}(t, t, m)\bigr) \\
    &= \apn\bigl(\gcfg{i}{E_1, E_2}(t, t, m)\bigr).
\end{align*}

\proofsec{Binary conditional operator}
By Definition~\ref{def:t-f-p},
the functions $\tp{i}$, $\fp{i}$ and $\pp{i}$ are defined the same
for logical disjunction and the binary conditional operator.
Moreover, the definitions for logical disjunction in~\eqref{eq:cfg_i-disjExpr}
and the binary conditional operator in~\eqref{eq:cfg_i-binCondExpr}
are the same.
Hence the proof of this case is identical to the case of logical disjunction.

\proofsec{Other binary operators}
Using the notation in~\eqref{eq:cfg_i-binExpr},
as graph $\gcfg{i}{E_1}(s_2, s_2, m_1)$
has both exit nodes equal to the entry node
of graph $\gcfg{i}{E_2}(m, m, m+1)$,
for each path
to~$s_2$ in $\gcfg{i}{E_1}(s_2, s_2, m_1)$
there are $\apn\bigl(\gcfg{i}{E_2}(m, m, m+1)\bigr)$ paths to~$m$
in control-flow graph $\gcfg{i}{E_1 \bop E_2})(t, f, m)$.
Moreover, since $(m, t), (m, f) \in A$, each path to~$m$ can be extended to
paths to~$t$ and~$f$ in $\gcfg{i}{E_1 \bop E_2})(t, f, m)$
and an exit node in $\gcfg{i}{E_1 \bop E_2})(t, t, m)$.
Hence, by Definition~\ref{def:t-f-p},~\eqref{eq:cfg_i-binExpr}
and the inductive hypothesis:
\begin{align*}
  \tp{i}(E_1 \bop E_2) &= \pp{i}(E_1) \pp{i}(E_2) \\
    &= \apn\bigl(\gcfg{i}{E_1}(s_2, s_2, m_1)\bigr) \apn\bigl(\gcfg{i}{E_2}(m, m, m+1)\bigr) \\
    &= \napn\bigl(\gcfg{i}{E_1 \bop E_2}(t, f, m), t\bigr), \\
  \fp{i}(E_1 \bop E_2) &= \pp{i}(E_1) \pp{i}(E_2) \\
    &= \apn\bigl(\gcfg{i}{E_1}(s_2, s_2, m_1)\bigr) \apn\bigl(\gcfg{i}{E_2}(m, m, m+1)\bigr) \\
    &= \napn\bigl(\gcfg{i}{E_1 \bop E_2}(t, f, m), f\bigr), \\
  \pp{i}(E_1 \bop E_2) &= \pp{i}(E_1) \pp{i}(E_2) \\
    &= \apn\bigl(\gcfg{i}{E_1}(s_2, s_2, m_1)\bigr) \apn\bigl(\gcfg{i}{E_2}(m, m, m+1)\bigr) \\
    &= \apn\bigl(\gcfg{i}{E_1 \bop E_2}(t, t, m)\bigr).
\end{align*}

\proofsec{Conditional operator}
By Definition~\ref{def:t-f-p},
\begin{equation}
  \label{eq:lm:expr-count:co-t-f-p}
  \begin{aligned}
    \tp{i}(E_1 \cqmk E_2 \ccol E_3)
    &= \tp{i}(E_1) \tp{i}(E_2)+\fp{i}(E_1) \tp{i}(E_3) \\
    \fp{i}(E_1 \cqmk E_2 \ccol E_3)
    &= \fp{i}(E_1) \fp{i}(E_2)+\fp{i}(E_1) \fp{i}(E_3) \\
    \pp{i}(E_1 \cqmk E_2 \ccol E_3)
    &= \tp{i}(E_1) \pp{i}(E_2)+\fp{i}(E_1) \pp{i}(E_3).
  \end{aligned}
\end{equation}
There are three cases:
\begin{description}
\item[$\tv{i}(E_1) = \true$:]
  by~\eqref{eq:cfg_i-condExpr},
  \[
  \gcfg{i}{E_1 \cqmk E_2 \ccol E_3}(t, f, m) = \gcfg{i}{E_2}(t, f, m),
  \]
  and by~\eqref{eq:cfg_i-constants}, $\tp{i}(E_1) = 1$, and $\fp{i}(E_1) = 0$.
  Hence by~\eqref{eq:lm:expr-count:co-t-f-p}
  and the inductive hypothesis,
  \begin{align*}
  \tp{i}(E_1 \cqmk E_2 \ccol E_3)
    &= \tp{i}(E_2) \\
    &= \napn\bigl(\gcfg{i}{E_2}(t, f, m), t\bigr) \\
    &= \napn\bigl(\gcfg{i}{E_1 \cqmk E_2 \ccol E_3}(t, f, m), t\bigr), \\
  \fp{i}(E_1 \cqmk E_2 \ccol E_3)
    &= \fp{i}(E_2) \\
    &= \napn\bigl(\gcfg{i}{E_2}(t, f, m), f\bigr) \\
    &= \napn\bigl(\gcfg{i}{E_1 \cor E_2}(t, f, m), f\bigr), \\
  \pp{i}(E_1 \cqmk E_2 \ccol E_3)
    &= \pp{i}(E_2) \\
    &=\apn\bigl(\gcfg{i}{E_2}(t, t, m)\bigr) \\
  &= \apn\bigl(\gcfg{i}{E_1 \cqmk E_2 \ccol E_3}(t, t, m)\bigr).
\end{align*}
\item [$\tv{i}(E_1) = 0$:] symmetric to the previous case.
\item [$\tv{i}(E_1) = \maybe$:]
  Using the notation in~\eqref{eq:cfg_i-condExpr},
  as $\gcfg{i}{E_1}(s_2, s_3, m_2)$ has the exit node for true evaluation
  equal to the entry node of $\gcfg{i}{E_2}(t, f, m)$,
  for each path leading to~$s_2$
  in $\gcfg{i}{E_1}(s_2, s_3, m_2)$
  there are
  $\napn\bigl(\gcfg{i}{E_2}(t, f, m), t\bigr)$ paths to~$t$ and
  $\napn\bigl(\gcfg{i}{E_2}(t, f, m), f\bigr)$ paths to $f$
  in $\gcfg{i}{E_1 \cqmk E_2 \ccol E_3}(t, f, m)$,
  and
  $\apn\bigl(\gcfg{i}{E_2}(t, t, m)\bigr)$ paths to an exit node
  in $\gcfg{i}{E_1 \cqmk E_2 \ccol E_3}(t, t, m)$.

  Similarly, as the exit node
  for false evaluation in graph $\gcfg{i}{E_1}(s_2, s_3, m_2)$
  is the entry node of $\gcfg{i}{E_3}(t, f, m_1)$,
  for each path to~$s_3$ in $\gcfg{i}{E_1}(s_2, s_3, m_2)$,
  there are $\napn\bigl(\gcfg{i}{E_3}(t, f, m), t\bigr)$ paths to~$t$
  and $\napn\bigl(\gcfg{i}{E_3}(t, f, m), t\bigr)$ paths to~$f$
  in control-flow graph $\gcfg{i}{E_1 \cqmk E_2 \ccol E_3}(t, f, m)$;
  furthermore, there are $\apn\bigl(\gcfg{i}{E_3}(t, t, m)\bigr)$ paths
  to an exit node in $\gcfg{i}{E_1 \cqmk E_2 \ccol E_3}(t, t, m)$.
  Hence, by~\eqref{eq:lm:expr-count:co-t-f-p},~\eqref{eq:cfg_i-condExpr}
  and the inductive hypothesis:
\begin{align*}
  \tp{i}(E_1 \cqmk E_2 \ccol E_3)
    &= \tp{i}(E_1) \tp{i}(E_2) + \fp{i}(E_1) \tp{i}(E_3) \\
    &= \napn\bigl(\gcfg{i}{E_1}(s_2, s_3, m_2), s_2\bigr) \napn\bigl(\gcfg{i}{E_2}(t, f, m), t\bigr) \\
    & \quad \mathord{}
      + \napn\bigl(\gcfg{i}{E_1}(s_2, s_3, m_2), s_3\bigr) \napn\bigl(\gcfg{i}{E_3}(t, f, m), t\bigr) \\
    &= \napn\bigl(\gcfg{i}{E_1 \cqmk E_2 \ccol E_3}(t, f, m), t\bigr), \\
 \fp{i}(E_1 \cqmk E_2 \ccol E_3)
    &= \tp{i}(E_1) \fp{i}(E_2) + \fp{i}(E_1) \fp{i}(E_3) \\
    &= \napn\bigl(\gcfg{i}{E_1}(s_2, s_3, m_2), s_2\bigr) \napn\bigl(\gcfg{i}{E_2}(t, f, m), f\bigr) \\
    & \quad \mathord{}
      + \napn\bigl(\gcfg{i}{E_1}(s_2, s_3, m_2), s_3\bigr) \napn\bigl(\gcfg{i}{E_3}(t, f, m), f\bigr) \\
    &= \napn\bigl(\gcfg{i}{E_1 \cqmk E_2 \ccol E_3}(t, f, m), f\bigr), \\
  \pp{i}(E_1 \cqmk E_2 \ccol E_3)
    &= \tp{i}(E_1) \pp{i}(E_2) + \fp{i}(E_1) \pp{i}(E_3) \\
    &= \napn\bigl(\gcfg{i}{E_1}(s_2, s_3, m_2), s_2\bigr) \apn\bigl(\gcfg{i}{E_2}(t, t, m)\bigr) \\
    & \quad \mathord{}
      + \napn\bigl(\gcfg{i}{E_1}(s_2, s_3, m_2), s_3\bigr) \apn\bigl(\gcfg{i}{E_3}(t, t, m)\bigr) \\
      &= \apn\bigl(\gcfg{i}{E_1 \cqmk E_2 \ccol E_3}(t, t, m)\bigr). \qed
\end{align*}
\end{description}
\end{proof}

In order to capture the number of acyclic paths in CFGs containing
\emph{while} loops we will need another path-counting function.
\emph{While} loops are special in that the subgraphs corresponding to their
controlling expressions can be traversed twice: once on a path leading
to execution of the body and one leading to exit from the loop;
the two traversals should of course not share any arc.

\begin{definition} \summary{($\fund{\dapn}{\CFG\times\Nset^2}{\Nset}$.)}
\label{def:dapn}
Let $G = (N, A, s)$ be a CFG and $s_1, s_2 \in N$.
Then the number of pairs of acyclic paths in $G$ to~$s_2$ passing
by $s_1$, denoted by $\dapn(G, s_1, s_2)$, is given by:
\begin{align}
\dapn(G, s_1, s_2)
    &\defeq
      \dapn(s, s, A, s_1, s_2), \\
  \dapn(s, n, A, s_1, s_2)
    &\defeq
      \begin{cases}
        \napn(s, A, s_2),
          &\text{if $n = s_1$;} \\
        \sum\limits_{(n, m) \in A}
        \dapn\bigl(s, m, A \setminus \{ (n, m) \}, s_1, s_2\bigr),
          &\text{otherwise.}
      \end{cases}
\end{align}
\end{definition}

We now show, using Definitions~\ref{def:tt-ff-pp} and~\ref{def:dapn},
that for each optimization level~$i$ and each expression $E$, the
following hold:
\begin{itemize}
\item
  $\ttp{i}(E)$ is the number of pairs of acyclic paths in $G$
  to true passing by true;
\item
  $\tfp{i}(E)$ is the number of pairs of acyclic paths in $G$
  to false passing by true;
\item
  $\ffp{i}(E)$ is the number of pairs of acyclic paths in $G$
  to false passing by false;
\item
  $\ppp{i}(E)$ is the number of pairs of acyclic paths in $G$ that do
  not share an arc.
\end{itemize}
Note that in the following proofs,
a ``pair of paths'' will always means a pair of acyclic paths that do
not share an arc.

\begin{lemma}
\label{lm:expr-double-count}
Let $i \in \{ 0, 1, 2 \}$ be an optimization level,
$E \in \Exp$ be an expression, and let
$m, t, f \in \Nset$ be such that $m > \max\{t, f\}$.
Then:
\begin{align}
   \ttp{i}(E) &= \dapn\bigl(\gcfg{i}{E}(t, f, m), t, t\bigr), \\
   \tfp{i}(E) &= \dapn\bigl(\gcfg{i}{E}(t, f, m), t, f\bigr) = \dapn\bigl(\gcfg{i}{E}(t, f, m), f, t\bigr), \\
   \ffp{i}(E) &= \dapn\bigl(\gcfg{i}{E}(t, f, m), f, f\bigr), \\
   \ppp{i}(E) &= \dapn\bigl(\gcfg{i}{E}(t, t, m), t, t\bigr).
\end{align}
\end{lemma}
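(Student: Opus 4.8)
The plan is to prove Lemma~\ref{lm:expr-double-count} by structural induction on $E$, following exactly the case analysis used in the proof of Lemma~\ref{lm:expr-count} and relying on that lemma as a black box whenever a value of $\tp{i}$, $\fp{i}$ or $\pp{i}$ occurs in the formulas of Tables~\ref{tab:algorithm-expression-cases-tt}--\ref{tab:algorithm-expression-cases-pp}. Before the induction I would record two facts that are used uniformly. First, unfolding Definition~\ref{def:dapn}, $\dapn(G, s_1, s_2)$ counts ordered pairs $(P, Q)$ of acyclic paths in $G$, both starting at the entry node $s$, with $P$ ending at $s_1$, $Q$ ending at $s_2$, and $P$, $Q$ sharing no arc (they may share nodes); in particular, when $s_1 = s_2$ a pair and its reversal are counted separately, which accounts for the coefficients $2$ appearing in the tables, and the symmetry $\dapn(G, s_1, s_2) = \dapn(G, s_2, s_1)$ needed for $\tfp{i}$ is immediate from this reading. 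Second, the construction in Definition~\ref{def:reference-cfg} guarantees --- exactly as was already exploited in the proof of Lemma~\ref{lm:expr-count} --- that whenever the CFG of a compound expression is assembled from the CFGs of its immediate subexpressions, the component subgraphs share no arcs and share only the designated ``gluing'' nodes (an exit node of one identified with the entry node of the next, or a freshly introduced branch node); consequently an arc-disjoint pair of acyclic paths in the assembled CFG restricts to an arc-disjoint pair, or to a pair of arc-disjoint single paths, in each component independently, and these independent restrictions can be recombined freely.

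For the base cases (variables and the three constant subcases) I would compute directly from~\eqref{eq:cfg_i-variables} and~\eqref{eq:cfg_i-constants}: for a variable $x$ every path to $t$ must traverse the unique arc $(m,t)$, so two arc-disjoint paths cannot both lead to $t$, giving $\ttp{i}(x) = \ffp{i}(x) = \ppp{i}(x) = 0$, whereas one path through $(m,t)$ together with one through $(m,f)$ is the only arc-disjoint $t/f$ pair, giving $\tfp{i}(x) = 1$; the constant cases are either a one-node graph (hence all four counts are the obvious $0$ or $1$) or reduce verbatim to the variable computation. The pure pass-through cases ($\cnot E_1$, $+E_1$, $-E_1$, $(E_1)$, $(type)E_1$) are immediate from the equality of the relevant CFGs in~\eqref{eq:cfg_i-negExpr} and~\eqref{eq:cfg_i-specUnaryExpr} together with the inductive hypothesis (with $t$ and $f$ swapped for negation). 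For ``other unary operators'' $\uop E_1$ and ``other binary operators'' $E_1 \bop E_2$, the CFG routes the subexpression(s) into a fresh branch node $m$ whose two outgoing arcs are the only way to reach $t$ and the only way to reach $f$: hence no two arc-disjoint paths can share the same Boolean outcome, so $\ttp{i} = \ffp{i} = \ppp{i} = 0$, while an arc-disjoint $t/f$ pair amounts to an arc-disjoint pair of unrestricted traversals of the operand(s) followed by the two distinct arcs $(m,t)$, $(m,f)$, which by the factorization property is $\ppp{i}(E_1)$, respectively $\ppp{i}(E_1)\,\ppp{i}(E_2)$, matching the tables.

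The substantive cases are $\cand$, $\cor$, $\cqmkcol$ and $\cqmk E_2 \ccol E_3$. When $\tv{i}(E_1) \in \{\true, \false\}$ the CFG collapses to that of a single subexpression (or to a one-node graph), exactly as in Lemma~\ref{lm:expr-count}, and the claim follows from the inductive hypothesis. In the remaining subcase $\tv{i}(E_1) = \maybe$ I would argue as follows for, say, $E_1 \cand E_2$ (using the notation of~\eqref{eq:cfg_i-conjExpr}): each acyclic path from the entry of the assembled CFG either stays inside $E_1$ and ends at $f$, or traverses $E_1$ to its true-exit $s_2$, which is the entry of $E_2$, and then traverses $E_2$ to $t$ or to $f$; given a pair of arc-disjoint such paths, the factorization property lets me choose independently the arc-disjoint pair of $E_1$-portions (classified by their outcomes) and the arc-disjoint pair (or single path, or nothing) of $E_2$-portions; summing over the finitely many outcome combinations and invoking the inductive hypothesis and Lemma~\ref{lm:expr-count} yields precisely the entries for $\ttp{i}$, $\tfp{i}$, $\ffp{i}$, $\ppp{i}$ in the tables, the coefficient $2$ arising whenever the two members of the ordered pair follow different routes through $E_1$. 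The cases $\cor$ and $\cqmkcol$ are dual (respectively identical, since both the CFG equations~\eqref{eq:cfg_i-disjExpr},~\eqref{eq:cfg_i-binCondExpr} and the table entries coincide), and $\cqmk E_2 \ccol E_3$ is handled the same way but with three regions $E_1$, $E_2$, $E_3$, so the route-combination bookkeeping is correspondingly larger: one must enumerate, for the ordered pair, which of the two paths goes through $E_2$ and which through $E_3$ and with which outcomes, and check that the resulting sum collapses to the grouped expressions in Tables~\ref{tab:algorithm-expression-cases-tt}--\ref{tab:algorithm-expression-cases-pp}. That last enumeration is where I expect the main effort to lie; the conceptual crux, however, is the once-and-for-all justification of the factorization property and of the reading of $\dapn$ as an ordered-pair count, after which every case is a finite, mechanical verification against the tables, just as in the proof of Lemma~\ref{lm:expr-count}.
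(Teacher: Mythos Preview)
Your proposal is correct and follows essentially the same route as the paper's proof: structural induction on $E$, verifying each grammar case against Tables~\ref{tab:algorithm-expression-cases-tt}--\ref{tab:algorithm-expression-cases-pp} via the factorization of arc-disjoint path pairs across the arc-disjoint component subgraphs. Two minor remarks: you have omitted the comma-operator case $E_1,\,E_2$ from your enumeration (it is the easiest of the compound cases, factoring as $\ppp{i}(E_1)$ times the corresponding count for $E_2$); and your upfront observation that $\dapn(G,s_1,s_2)=\dapn(G,s_2,s_1)$ from the ordered-pair reading is a small economy over the paper, which instead re-derives both $\dapn(\cdot,t,f)$ and $\dapn(\cdot,f,t)$ separately in every case.
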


\begin{proof}
  The proof is by structural induction on $E$.

\proofsec{Variables}
By Definition~\ref{def:tt-ff-pp},
$\ttp{i}(E) = 0$,
$\tfp{i}(E) = 1$,
$\ffp{i}(E) = 0$,
$\ppp{i}(E) = 0$.
By~\eqref{eq:cfg_i-variables} and Definition~\ref{def:dapn}, we have
\begin{align*}
  \dapn\bigl(\gcfg{i}{E}(t, f, m), s_1, s_2\bigr)
  &= \dapn\Bigl(\big(\{ m,t,f \}, \{ (m,t),(m,f) \}, m\bigr), s_1, s_2\Bigr) \\
  &= \dapn\bigl(m, m, \{ (m,t), (m,f) \}, s_1, s_2\bigr) \\
  &= \dapn\bigl(m, s_1, \{ (m,t), (m,f) \} \setminus \{ (m, s_1) \}, s_1, s_2\bigr) \\
  &= \begin{cases}
      \dapn\bigl(m, t, \{ (m,f) \}, t, s_2\bigr)
        = \napn\bigl(m, \{ (m,f) \}, s_2\bigr),
          &\quad\textrm{if $s_1 = t$,} \\
      \dapn\bigl(m, f, \{ (m,t) \}, f, s_2\bigr)
        = \napn\bigl(m, \{ (m,t) \}, s_2\bigr),
          &\quad\textrm{if $s_1 = f$,}
    \end{cases}
\end{align*}
which evaluates to~$0$ if $s_1 = s_2$ and $1$ otherwise.
Similarly,
\begin{align*}
  \dapn\bigl(\gcfg{i}{E}(t, t, m), t, t\bigr)
  &= \dapn\Bigl(\big(\{ m,t \}, \{ (m,t) \}, m\bigr), t, t\Bigr) \\
  &= \dapn\bigl(m, \{ (m,t) \}, t, t\bigr) \\
  &= \napn(m, \emptyset, t) = 0.
\end{align*}

\proofsec{Constants}
There are three cases:
\begin{description}
\item[$\tv{i}(E) = \true$:]
By Definition~\ref{def:tt-ff-pp},
$\ttp{i}(E) = 1$,
$\tfp{i}(E) = 0$,
$\ffp{i}(E) = 0$,
$\ppp{i}(E) = 1$.
By~\eqref{eq:cfg_i-constants},
  $\gcfg{i}{E}(t, f, m) = \bigl(\{t\}, \emptyset, t\bigr)$ so that
\begin{align*}
  \dapn\bigl(\gcfg{i}{E}(t, f, m), s_1, s_2\bigr)
  &= \dapn\Bigl(\bigl(\{t\}, \emptyset, t\bigr), s_1, s_2\Bigr) \\
  &= \napn\bigl(t, \emptyset, s_2\bigr),
\end{align*}
which evaluates to~$1$ if $s_2 = t$ and $0$ otherwise.
Similarly,
\begin{align*}
  \dapn\bigl(\gcfg{i}{E}(t, t, m), t, t\bigr)
  &= \dapn\Bigl(\bigl(\{t\}, \emptyset, t\bigr), t, t\Bigr) \\
  &= \napn\bigl(t, \emptyset, t\bigr) = 1.
\end{align*}
\item[$\tv{i}(E) = \false$:]
  the proof is similar to the previous case;
\item[$\tv{i}(E) = \maybe$:]
  the proof is similar to the case $\tv{i}(E) = \maybe$ for variables.
\end{description}

\proofsec{Logical negation} By~\eqref{eq:cfg_i-negExpr},
$\gcfg{i}{\cnot E}(t, f, m) = \gcfg{i}{E}(f, t, m)$; hence, by
Definition~\ref{def:tt-ff-pp} and the inductive hypothesis,
\begin{align*}
  \ffp{i}(\cnot E) &= \ttp{i}(E) \\
            &= \dapn\bigl(\gcfg{i}{E}(f, t, m), f, f\bigr) \\
            &= \dapn\bigl(\gcfg{i}{!E}(t, f, m), f, f\bigr), \\
  \tfp{i}(\cnot E) &= \tfp{i}(E)  \\
            &= \dapn\bigl(\gcfg{i}{E}(f, t, m), t, f\bigr) \\
            &= \dapn\bigl(\gcfg{i}{!E}(t, f, m), t, f\bigr), \\
  \tfp{i}(\cnot E) &= \tfp{i}(E) \\
            &= \dapn\bigl(\gcfg{i}{E}(f, t, m), f, t\bigr) \\
            &= \dapn\bigl(\gcfg{i}{!E}(t, f, m), f, t\bigr), \\
  \ttp{i}(\cnot E) &= \ffp{i}(E) \\
            &= \dapn\bigl(\gcfg{i}{E}(f, t, m), t, t\bigr) \\
            &= \dapn\bigl(\gcfg{i}{!E}(t, f, m), t, t\bigr), \\
  \ppp{i}(\cnot E) &= \ppp{i}(E) \\
            &= \dapn\bigl(\gcfg{i}{E}(t, t, m), t, t\bigr) \\
            &= \dapn\bigl(\gcfg{i}{!E}(t, t, m), t, t\bigr).
\end{align*}

\proofsec{Unary plus, unary minus, parentheses and cast operators}
By~\eqref{eq:cfg_i-specUnaryExpr}, we have
$\gcfg{i}{E}(t, f, m) = \gcfg{i}{E_1}(t, f, m)$; hence, by
Definition~\ref{def:tt-ff-pp} and the inductive hypothesis,
\begin{alignat*}{3}
  \ttp{i}(E) &= \ttp{i}(E_1)
            &&= \dapn\bigl(\gcfg{i}{E_1}(t, f, m), t, t\bigr)
            &&= \dapn\bigl(\gcfg{i}{E}(t, f, m), t, t\bigr), \\
  \tfp{i}(E) &= \tfp{i}(E_1)
            &&= \dapn\bigl(\gcfg{i}{E_1}(t, f, m), t, f\bigr)
            &&= \dapn\bigl(\gcfg{i}{E}(t, f, m), t, f\bigr), \\
  \tfp{i}(E) &= \tfp{i}(E_1)
            &&= \dapn\bigl(\gcfg{i}{E_1}(t, f, m), f, t\bigr)
            &&= \dapn\bigl(\gcfg{i}{E}(t, f, m), f, t\bigr), \\
  \ffp{i}(E) &= \ffp{i}(E_1)
            &&= \dapn\bigl(\gcfg{i}{E_1}(t, f, m), f, f\bigr)
            &&= \dapn\bigl(\gcfg{i}{E}(t, f, m), f, f\bigr), \\
  \ppp{i}(E) &= \ppp{i}(E_1)
            &&= \dapn\bigl(\gcfg{i}{E_1}(t, t, m), t, t\bigr)
            &&= \dapn\bigl(\gcfg{i}{E}(t, t, m), t, t\bigr).
\end{alignat*}

\proofsec{Other unary operators}
Let $\gcfg{i}{\uop E_1}(t, f, m)$ be as defined
in~\eqref{eq:cfg_i-specUnaryExpr}.
For each pair of paths to~$m$ in $\gcfg{i}{E_1}(m, m, m+1)$
there is $1$ pair of paths, one to~$t$ and the other to
$f$, and $0$ pairs of paths both
to~$t$ (resp., $f$)
in $\gcfg{i}{\uop E_1}(t, f, m)$,
and also $0$ pairs of paths both to an exit node
in $\gcfg{i}{E_1}(t, t, m)$.
Hence, by
Definition~\ref{def:tt-ff-pp} and the inductive hypothesis, we have:
\begin{alignat*}{3}
  \ttp{i}(\uop E_1) &= 0 \\
                    &= \dapn\bigl(\gcfg{i}{\uop E_1}(t, f, m), t, t\bigr), \\
  \tfp{i}(\uop E_1) &= \ppp{i}(E_1) \\
                    &= \dapn\bigl(\gcfg{i}{E_1}(t, t, m), t, t\bigr) \\
                    &= \dapn\bigl(\gcfg{i}{\uop E_1}(t, f, m), t, f\bigr), \\
  \tfp{i}(\uop E_1) &= \ppp{i}(E_1) \\
                    &= \dapn\bigl(\gcfg{i}{E_1}(t, t, m), t, t\bigr) \\
                    &= \dapn\bigl(\gcfg{i}{\uop E_1}(t, f, m), f, t\bigr), \\
  \ffp{i}(\uop E_1) &= 0 \\
                    &= \dapn\bigl(\gcfg{i}{\uop E_1}(t, f, m), f, f\bigr), \\
  \ppp{i}(\uop E_1) &= 0 \\
                    &= \dapn\bigl(\gcfg{i}{\uop E_1}(t, t, m), t, t\bigr).
\end{alignat*}

\proofsec{Logical conjunction}
There are three cases:
\begin{description}
\item [$\tv{i}(E_1) = \false$:]
  then, by~\eqref{eq:cfg_i-conjExpr},
  $\gcfg{i}{E_1 \cand E_2}(t, f, m) = \bigl(\{f\}, \emptyset, f\bigr)$,
  so $\dapn$ in this case is similar to that of the false
  constant case.
  By~\eqref{eq:cfg_i-constants},
  $\ffp{i}(E_1) = 1$ and $\ttp{i}(E_1) = \tfp{i}(E_1) = 0$.
  Hence, by Definition~\ref{def:tt-ff-pp},
\begin{align*}
  \ttp{i}(E_1 \cand E_2)
    &= \ttp{i}(E_1) \ttp{i}(E_2) \\
    &= 0 \\
    &= \dapn\bigl(\gcfg{i}{E_1 \cand E_2}(t, f, m), t, t\bigr), \\
  \tfp{i}(E_1 \cand E_2)
    &= \tfp{i}(E_1) \tp{i}(E_2) + \ttp{i}(E_1) \tfp{i}(E_2) \\
    &= 0 \\
    &= \dapn\bigl(\gcfg{i}{E_1 \cand E_2}(t, f, m), t, f\bigr), \\
  \tfp{i}(E_1 \cand E_2)
    &= \tfp{i}(E_1) \tp{i}(E_2) + \ttp{i}(E_1) \tfp{i}(E_2) \\
    &= 0 \\
    &= \dapn\bigl(\gcfg{i}{E_1 \cand E_2}(t, f, m), f, t\bigr), \\
  \ffp{i}(E_1 \cand E_2)
    &= \ffp{i}(E_1) + 2 \tfp{i}(E_1) \fp{i}(E_2) + \ttp{i}(E_1) \ffp{i}(E_2) \\
    &= 1 + 0 + 0 \\
    &= \dapn\bigl(\gcfg{i}{E_1 \cand E_2}(t, f, m), f, f\bigr), \\
  \ppp{i}(E_1 \cand E_2)
    &= \ffp{i}(E_1) + 2 \tfp{i}(E_1) \pp{i}(E_2) + \ttp{i}(E_1) \ppp{i}(E_2) \\
    &= 1 + 0 + 0 \\
    &= \dapn\bigl(\gcfg{i}{E_1 \cand E_2}(t, t, m), t, t\bigr).
\end{align*}
\item [$\tv{i}(E_1) = \true$:]
  then $\gcfg{i}{E_1 \cand E_2}(t, f, m) = \gcfg{i}{E_2}(t, f, m)$ and,
  by~\eqref{eq:cfg_i-constants}, we have
  $\ttp{i}(E_1) = 1$ and $\tfp{i}(E_1) = \ffp{i}(E_1) = 0$.
  Hence, by Definition~\ref{def:tt-ff-pp} and the inductive hypothesis:
\begin{align*}
  \ttp{i}(E_1 \cand E_2) &= \ttp{i}(E_1) \ttp{i}(E_2)  \\
                        &= \ttp{i}(E_2) \\
                        &= \dapn\bigl(\gcfg{i}{E_2}(t, f, m), t, t\bigr), \\
                        &= \dapn\bigl(\gcfg{i}{E_1 \cand E_2}(t, f, m), t, t\bigr), \\
  \tfp{i}(E_1 \cand E_2) &= \tfp{i}(E_1) \tp{i}(E_2) +
                            \ttp{i}(E_1) \tfp{i}(E_2) \\
                        &= \tfp{i}(E_2) \\
                        &= \dapn\bigl(\gcfg{i}{E_2}(t, f, m), t, f\bigr), \\
                        &= \dapn\bigl(\gcfg{i}{E_1 \cand E_2}(t, f, m), t, f\bigr), \\
  \tfp{i}(E_1 \cand E_2) &= \tfp{i}(E_1) \tp{i}(E_2) +
                            \ttp{i}(E_1) \tfp{i}(E_2) \\
                        &= \tfp{i}(E_2) \\
                        &= \dapn\bigl(\gcfg{i}{E_2}(t, f, m), f, t\bigr), \\
                        &= \dapn\bigl(\gcfg{i}{E_1 \cand E_2}(t, f, m), f, t\bigr), \\
  \ffp{i}(E_1 \cand E_2) &= \ffp{i}(E_1) + 2 \tfp{i}(E_1) \fp{i}(E_2) +
                            \ttp{i}(E_1) \ffp{i}(E_2) \\
                        &= \ffp{i}(E_2) \\
                        &= \dapn\bigl(\gcfg{i}{E_2}(t, f, m), f, f\bigr), \\
                        &= \dapn\bigl(\gcfg{i}{E_1 \cand E_2}(t, f, m), f, f\bigr), \\
  \ppp{i}(E_1 \cand E_2) &= \ffp{i}(E_1) + 2 \tfp{i}(E_1) \pp{i}(E_2) +
                            \ttp{i}(E_1) \ppp{i}(E_2) \\
                        &= \ppp{i}(E_2) \\
                        &= \dapn\bigl(\gcfg{i}{E_2}(t, t, m), t, t\bigr), \\
                        &= \dapn\bigl(\gcfg{i}{E_1 \cand E_2}(t, t, m), t, t\bigr). \\
\end{align*}
\item [$\tv{i}(E_1) = \maybe$:]
  let $\gcfg{i}{E_1 \cand E_2}(t, f, m)$ be defined as
  in~\eqref{eq:cfg_i-conjExpr}.

  Since $\gcfg{i}{E_1}(s_2, f, m_1)$
  has the exit node for true evaluation equal to the entry node of
  $\gcfg{i}{E_2}(t, f, m)$, for each
  pair of paths in $\gcfg{i}{E_1}(s_2, f, m_1)$ both to~$s_2$,
  there are in $\gcfg{i}{E_1 \cand E_2}(t, f, m)$:
  \begin{itemize}
    \item
      $\dapn\bigl(\gcfg{i}{E_2}(t, f, m), t, t\bigr)$ pairs of paths
      both to~$t$,
    \item
      $\dapn\bigl(\gcfg{i}{E_2}(t, f, m), t, f\bigr)$ pairs of paths
      the first to~$t$ and the second to~$f$,
    \item
      $\dapn\bigl(\gcfg{i}{E_2}(t, f, m), f, t\bigr)$ pairs of paths
      the first to~$f$ and the second to~$t$,
    \item
      $\dapn\bigl(\gcfg{i}{E_2}(t, f, m), f, f\bigr)$ pairs of paths
      both to~$f$,
  \end{itemize}
  and in $\gcfg{i}{E_1 \cand E_2}(t, t, m)$,
  \begin{itemize}
    \item
      $\dapn\bigl(\gcfg{i}{E_2}(t, t, m), t, t\bigr)$ pairs of
      paths both to an exit node.
  \end{itemize}

  For each pair of paths, in $\gcfg{i}{E_1}(s_2, f, m_1)$, where the
  first is to~$s_2$ and the second one to~$f$:
  there are in $\gcfg{i}{E_1 \cand E_2}(t, f, m)$:
  \begin{itemize}
    \item
      $\napn\bigl(\gcfg{i}{E_2}(t, f, m), t\bigr)$ pairs of paths,
      the first to~$t$ and the second to~$f$,
    \item
      $\napn\bigl(\gcfg{i}{E_2}(t, f, m), f\bigr)$ pairs of paths
      both to~$f$ in $\gcfg{i}{E_1 \cand E_2}(t, f, m)$,
  \end{itemize}
  and in $\gcfg{i}{E_1 \cand E_2}(t, t, m)$,
  \begin{itemize}
    \item
      $\apn\bigl(\gcfg{i}{E_2}(t, t, m)\bigr)$ pairs of paths both
      to an exit node.
  \end{itemize}

  For each pair of paths in $\gcfg{i}{E_1}(s_2, f, m_1)$, where the
  first is to~$f$ and the second to~$s_2$,
  there are in $\gcfg{i}{E_1 \cand E_2}(t, f, m)$:
  \begin{itemize}
    \item
      $\napn\bigl(\gcfg{i}{E_2}(t, f, m), t\bigr)$ pairs of paths,
      the first to~$f$ and the second to~$t$,
    \item
      $\napn\bigl(\gcfg{i}{E_2}(t, f, m), f\bigr)$ pairs of paths
      both to~$f$ in $\gcfg{i}{E_1 \cand E_2}(t, f, m)$,
  \end{itemize}
  and in $\gcfg{i}{E_1 \cand E_2}(t, t, m)$,
  \begin{itemize}
    \item
      $\apn\bigl(\gcfg{i}{E_2}(t, t, m)\bigr)$ pairs of paths
      both to an exit node.
  \end{itemize}

  In addition, there are $\dapn\bigl(\gcfg{i}{E_1}(s_2, f, m), f, f\bigr)$
  pairs of paths in $\gcfg{i}{E_1}(s_2, f, m_1)$ both to~$f$.

  Hence, by Definition~\ref{def:tt-ff-pp} and the
  inductive hypothesis:
\begin{align*}
  \ttp{i}(E_1 \cand E_2) &= \ttp{i}(E_1) \ttp{i}(E_2)  \\
                         &= \dapn\bigl(\gcfg{i}{E_1}(s_2, f, m), s_2, s_2\bigr)
                           \dapn\bigl(\gcfg{i}{E_2}(t, f, m), t, t\bigr) \\
                         &= \dapn\bigl(\gcfg{i}{E_1 \cand E_2}(t, f, m), t, t\bigr), \\
  \tfp{i}(E_1 \cand E_2) &= \tfp{i}(E_1) \tp{i}(E_2) +
                            \ttp{i}(E_1) \tfp{i}(E_2) \\
                         &= \dapn\bigl(\gcfg{i}{E_1}(s_2, f, m), s_2, f\bigr)
                           \napn\bigl(\gcfg{i}{E_2}(t, f, m), t\bigr) \\
                         & \quad \mathord{}
                           + \dapn\bigl(\gcfg{i}{E_1}(s_2, f, m), s_2, s_2\bigr)
                           \dapn\bigl(\gcfg{i}{E_2}(t, f, m), t, f\bigr) \\
                         &= \dapn\bigl(\gcfg{i}{E_1 \cand E_2}(t, f, m), t, f\bigr), \\
  \tfp{i}(E_1 \cand E_2) &= \tfp{i}(E_1) \tp{i}(E_2) +
                            \ttp{i}(E_1) \tfp{i}(E_2) \\
                         &= \dapn\bigl(\gcfg{i}{E_1}(s_2, f, m), s_2, f\bigr)
                           \napn\bigl(\gcfg{i}{E_2}(t, f, m), t\bigr) \\
                         & \quad \mathord{}
                           + \dapn\bigl(\gcfg{i}{E_1}(s_2, f, m), s_2, s_2\bigr)
                           \dapn\bigl(\gcfg{i}{E_2}(t, f, m), f, t\bigr) \\
                         &= \dapn\bigl(\gcfg{i}{E_1 \cand E_2}(t, f, m), f, t\bigr), \\
  \ffp{i}(E_1 \cand E_2) &= \ffp{i}(E_1) + 2 \tfp{i}(E_1) \fp{i}(E_2) +
                            \ttp{i}(E_1) \ffp{i}(E_2) \\
                         &= \dapn\bigl(\gcfg{i}{E_1}(s_2, f, m), f, f\bigr) \\
                         & \quad \mathord{}
                           + \dapn\bigl(\gcfg{i}{E_1}(s_2, f, m), s_2, f\bigr)
                           \napn\bigl(\gcfg{i}{E_2}(t, f, m), t\bigr) \\
                         & \quad \mathord{}
                           + \dapn\bigl(\gcfg{i}{E_1}(s_2, f, m), f, s_2\bigr)
                           \napn\bigl(\gcfg{i}{E_2}(t, f, m), f\bigr) \\
                         & \quad \mathord{}
                           + \dapn\bigl(\gcfg{i}{E_1}(s_2, f, m), s_2, s_2\bigr)
                            \dapn\bigl(\gcfg{i}{E_2}(t, f, m), f,  f\bigr) \\
                         &= \dapn\bigl(\gcfg{i}{E_1 \cand E_2}(t, f, m), f, f\bigr), \\
  \ppp{i}(E_1 \cand E_2) &= \ffp{i}(E_1) + 2 \tfp{i}(E_1) \pp{i}(E_2) +
                            \ttp{i}(E_1) \ppp{i}(E_2) \\
                         &= \dapn\bigl(\gcfg{i}{E_1}(s_2, f, m), f, f\bigr) \\
                         & \quad \mathord{}
                            + \dapn\bigl(\gcfg{i}{E_1}(s_2, f, m), s_2, f\bigr)
                           \napn\bigl(\gcfg{i}{E_2}(t, t, m), t\bigr) \\
                         & \quad \mathord{}
                            + \dapn\bigl(\gcfg{i}{E_1}(s_2, f, m), f, s_2\bigr)
                           \napn\bigl(\gcfg{i}{E_2}(t, t, m), t\bigr) \\
                         & \quad \mathord{}
                           + \dapn\bigl(\gcfg{i}{E_1}(s_2, s_2, m), s_2, s_2\bigr)
                           \dapn\bigl(\gcfg{i}{E_2}(t, t, m), t, t\bigr) \\
                         &= \dapn\bigl(\gcfg{i}{E_1 \cand E_2}(t, t, m), t, t\bigr). \\
\end{align*}
\end{description}

\proofsec{Logical disjunction}
Dual to the case of logical conjunction.

\proofsec{Comma operator}
Let $\gcfg{i}{E_1, E_2}(t, f, m)$ be defined
as in~\eqref{eq:cfg_i-commaExpr}.
As the exit node for $\gcfg{i}{E_1}(s_2, s_2, m_1)$ and the entry node of $\gcfg{i}{E_2}(t, f, m)$ must be the same,
for each pair of paths to~$s_2$ in $\gcfg{i}{E_1}(s_2, s_2, m_1)$,
there are in $\gcfg{i}{E_1, E_2}(t, f, m)$:
\begin{itemize}
\item
  $\dapn\bigl(\gcfg{i}{E_2}(t, f, m), t, t\bigr)$ pairs of paths both to~$t$,
\item
  $\dapn\bigl(\gcfg{i}{E_2}(t, f, m), t, f\bigr)$ pairs of paths, the first to~$t$ and the second to~$f$,
\item
  $\dapn\bigl(\gcfg{i}{E_2}(t, f, m), f, t\bigr)$ pairs of paths, the first to~$f$ and the second to~$t$,
\item
  $\dapn\bigl(\gcfg{i}{E_2}(t, f, m), f, f\bigr)$ pairs of paths both to~$f$,
\end{itemize}
and in $\gcfg{i}{E_1, E_2}(t, t, m)$,
\begin{itemize}
\item
  $\dapn\bigl(\gcfg{i}{E_2}(t, t, m), t, t\bigr)$ pairs of paths
  both to an exit node.
\end{itemize}
Hence, by Definition~\ref{def:tt-ff-pp} and the inductive hypothesis:
\begin{align*}
  \ttp{i}(E_1, E_2)
    &= \ppp{i}(E_1) \ttp{i}(E_2) \\
    &= \dapn\bigl(\gcfg{i}{E_1}(s_2, s_2, m_1), s_2, s_2\bigr) \dapn\bigl(\gcfg{i}{E_2}(t, f, m), t, t\bigr) \\
    &= \dapn\bigl(\gcfg{i}{E_1, E_2}(t, f, m), t, t\bigr), \\
  \tfp{i}(E_1, E_2) &= \ppp{i}(E_1) \tfp{i}(E_2) \\
    &= \dapn\bigl(\gcfg{i}{E_1}(s_2, s_2, m_1), s_2, s_2\bigr) \dapn\bigl(\gcfg{i}{E_2}(t, f, m), t, f\bigr) \\
    &= \dapn\bigl(\gcfg{i}{E_1, E_2}(t, f, m), t, f\bigr), \\
  \tfp{i}(E_1, E_2) &= \ppp{i}(E_1) \tfp{i}(E_2) \\
    &= \dapn\bigl(\gcfg{i}{E_1}(s_2, s_2, m_1), s_2, s_2\bigr) \dapn\bigl(\gcfg{i}{E_2}(t, f, m), f, t\bigr) \\
    &= \dapn\bigl(\gcfg{i}{E_1, E_2}(t, f, m), f, t\bigr), \\
  \ffp{i}(E_1, E_2) &= \ppp{i}(E_1) \ffp{i}(E_2) \\
    &= \dapn\bigl(\gcfg{i}{E_1}(s_2, s_2, m_1), s_2, s_2\bigr) \dapn\bigl(\gcfg{i}{E_2}(t, f, m), f, f\bigr) \\
    &= \dapn\bigl(\gcfg{i}{E_1, E_2}(t, f, m), f, f\bigr), \\
  \ppp{i}(E_1, E_2) &= \ppp{i}(E_1) \ppp{i}(E_2) \\
    &= \dapn\bigl(\gcfg{i}{E_1}(s_2, s_2, m_1), s_2, s_2\bigr) \napn\bigl(\gcfg{i}{E_2}(t, t, m), t, t\bigr) \\
    &= \dapn\bigl(\gcfg{i}{E_1, E_2}(t, t, m), t, t\bigr).
\end{align*}

\proofsec{Binary conditional operator}
this case is identical to the case of logical disjunction,
as the CFGs are the same and functions $\tp{i}, \fp{i}$ and $\pp{i}$
are defined the same.

\proofsec{Other binary operators} let $\gcfg{i}{E_1 \bop E_2}(t, f,
m)$ be defined as in~\eqref{eq:cfg_i-unaryExpr}.  As the exit node
for $\gcfg{i}{E_1}(s_2, s_2, m_1)$ and the entry node for
$\gcfg{i}{E_2}(m, m, m+1)$ must be the same, for each pair of paths
both to~$s_2$, there are in $\gcfg{i}{E_1}(s_2, s_2, m_1)$,
\begin{itemize}
\item
$\dapn\bigl(\gcfg{i}{E_2}(m, m, m+1), m, m\bigr)$ pairs of paths both to~$m$.
\end{itemize}
Moreover, since $m$ is the second-last node through which each path
directed to an exit node passes in $\gcfg{i}{E_1 \bop E_2}(t, f, m)$
and the only arcs that exit from $m$ are $(m, t\bigr),
(m, f\bigr) \in A$, there are in $\gcfg{i}{E_1 \bop E_2}(t, f, m)$:
\begin{itemize}
\item
  $0$ pairs of paths both to~$t$,
\item
  $1$ pair of paths to~$t$ in the first traversal and $f$ in the second one,
\item
  $1$ pair of paths to~$f$ in the first traversal and $t$ in the second one,
\item
  $0$ pairs of paths both to~$f$,
\end{itemize}
and in $\gcfg{i}{E_1 \bop E_2}(t, t, m)$,
\begin{itemize}
\item
  $0$ pairs of paths both to the exit node~$t$.
\end{itemize}
Hence, by Definition~\ref{def:tt-ff-pp} and the inductive hypothesis:
\begin{align*}
  \ttp{i}(E_1 \bop E_2) &= 0 \\
    &= \dapn\bigl(\gcfg{i}{E_1 \bop E_2}(t, f, m), t, t\bigr), \\
  \tfp{i}(E_1 \bop E_2) &= \ppp{i}(E_1) \ppp{i}(E_2) \\
    &= \dapn\bigl(\gcfg{i}{E_1}(s_2, s_2, m_1), s_2, s_2\bigr)
       \dapn\bigl(\gcfg{i}{E_2}(m, m, m+1), m, m\bigr) \\
    &= \dapn\bigl(\gcfg{i}{E_1 \bop E_2}(t, f, m), t, f\bigr), \\
  \tfp{i}(E_1 \bop E_2) &= \ppp{i}(E_1) \ppp{i}(E_2) \\
    &= \dapn\bigl(\gcfg{i}{E_1}(s_2, s_2, m_1), s_2, s_2\bigr)
       \dapn\bigl(\gcfg{i}{E_2}(m, m, m+1), m, m\bigr) \\
    &= \dapn\bigl(\gcfg{i}{E_1 \bop E_2}(t, f, m), f, t\bigr), \\
  \ffp{i}(E_1 \bop E_2) &= 0 \\
    &= \dapn\bigl(\gcfg{i}{E_1 \bop E_2}(t, f, m), f, f\bigr), \\
  \ppp{i}(E_1 \bop E_2) &= 0 \\
    &= \dapn\bigl(\gcfg{i}{E_1 \bop E_2}(t, t, m), t, t\bigr).
\end{align*}

\proofsec{Conditional operator}
Then $E = E_1 \cqmk E_2 \ccol E_3$.
There are three cases:
\begin{description}
\item[$\tv{i}(E_1) = \true$:]
  then, by~\eqref{eq:cfg_i-condExpr} and~\eqref{eq:cfg_i-conjExpr},
  \[
  \gcfg{i}{E_1 \cqmk E_2 \ccol E_3}(t, f, m) = \gcfg{i}{E_2}(t, f, m)
  = \gcfg{i}{E_1 \cand E_2}(t, f, m).
  \]
  Also, by Definition~\ref{def:tt-ff-pp},
\begin{align*}
  \ttp{i}(E_1 \cqmk E_2 \ccol E_3)
  &= \ttp{i}(E_2)
  = \ttp{i}(E_1 \cand E_2), \\
  \tfp{i}(E_1 \cqmk E_2 \ccol E_3)
  &= \tfp{i}(E_2)
  = \tfp{i}(E_1 \cand E_2), \\
  \ffp{i}(E_1 \cqmk E_2 \ccol E_3)
  &= \ffp{i}(E_2)
  = \ffp{i}(E_1 \cand E_2).
\end{align*}
   Therefore this case is equivalent to the same case for logical conjunction.
\item [$\tv{i}(E_1) = 0$:] symmetric to the previous case.
\item [$\tv{i}(E_1) = \maybe$:]
  let $\gcfg{i}{E}(t,f,m) = \gcfg{i}{E_1 \cqmk E_2 \ccol E_3}(t, f, m)$
  be as defined in~\eqref{eq:cfg_i-condExpr}.
  Since $\gcfg{i}{E_1}(s_2, s_3, m_2)$ has the exit node for true evaluation
  equal to the entry node of $\gcfg{i}{E_2}(t, f, m)$,
  for each pair of paths both to~$s_2$ in $\gcfg{i}{E_1}(s_2, s_3, m_2)$
  there are in $\gcfg{i}{E}(t, f, m)$:
  \begin{itemize}
    \item
      $\dapn\bigl(\gcfg{i}{E_2}(t, f, m), t, t\bigr)$ pairs of paths both to~$t$,
    \item
      $\dapn\bigl(\gcfg{i}{E_2}(t, f, m), t, f\bigr)$ pairs of paths, the first to~$t$ and the second to~$f$,
    \item
      $\dapn\bigl(\gcfg{i}{E_2}(t, f, m), f, t\bigr)$ pairs of paths, the first to~$f$ and the second to~$t$,
    \item
      $\dapn\bigl(\gcfg{i}{E_2}(t, f, m), f, f\bigr)$ pairs of paths both to~$f$;
  \end{itemize}
  and in $\gcfg{i}{E}(t, t, m)$,
  \begin{itemize}
    \item
      $\dapn\bigl(\gcfg{i}{E_2}(t, f, m), t, t\bigr)$ pairs of paths both to an exit node.
  \end{itemize}

  Similarly, as $\gcfg{i}{E_1}(s_2, s_3, m_2)$ has the exit node for
  false evaluation equal to the entry node of
  $\gcfg{i}{E_3}(t, f, m)$, for each pair of paths both to~$s_3$ in
  $\gcfg{i}{E_1}(s_2, s_3, m_2)$ there are in $\gcfg{i}{E}(t, f, m)$:
  \begin{itemize}
    \item
      $\dapn\bigl(\gcfg{i}{E_3}(t, f, m), t, t\bigr)$ pairs of paths
      both to~$t$,
    \item
      $\dapn\bigl(\gcfg{i}{E_3}(t, f, m), t, f\bigr)$ pairs of paths,
      the first to~$t$ and the second to~$f$,
    \item
      $\dapn\bigl(\gcfg{i}{E_3}(t, f, m), f, t\bigr)$ pairs of paths,
      the first to~$f$ and the second to~$t$,
    \item
      $\dapn\bigl(\gcfg{i}{E_3}(t, f, m), f, f\bigr)$ pairs of paths
      both to~$f$;
  \end{itemize}
  and in $\gcfg{i}{E}(t, t, m)$,
  \begin{itemize}
     \item
       $\dapn\bigl(\gcfg{i}{E_3}(t, t, m), t, t\bigr)$ pairs of paths
       both to an exit node.
   \end{itemize}

  In addition, for each of the $2 \cdot \dapn\bigl(\gcfg{i}{E_1}(s_2,
  s_3, m_2), s_2, s_3\bigr)$ pairs of paths, where the two paths
  evaluate to different truth values, there are in $\cfg{i}{E}(t, f, m)$:
    \begin{itemize}
    \item
      $\napn\bigl(\gcfg{i}{E_2}(t, f, m), t\bigr) \napn\bigl(\gcfg{i}{E_3}(t, f, m), t\bigr)$
      pairs of paths both
      to~$t$,
    \item
      $\napn\bigl(\gcfg{i}{E_2}(t, f, m), t\bigr)
      \napn\bigl(\gcfg{i}{E_3}(t, f, m), f\bigr) +
      \napn\bigl(\gcfg{i}{E_2}(t, f, m), f\bigr)
      \napn\bigl(\gcfg{i}{E_3}(t, f, m), t\bigr)$ pairs of paths, the
      first to~$t$ and the second to~$f$,
    \item
      $\napn\bigl(\gcfg{i}{E_2}(t, f, m), f\bigr)
      \napn\bigl(\gcfg{i}{E_3}(t, f, m), t\bigr) +
      \napn\bigl(\gcfg{i}{E_2}(t, f, m), t\bigr)
      \napn\bigl(\gcfg{i}{E_3}(t, f, m), f\bigr)$ pairs of paths, the
      first to~$f$ and the second to~$t$,
    \item
      $\napn\bigl(\gcfg{i}{E_2}(t, f, m), f\bigr) \cdot
      \napn\bigl(\gcfg{i}{E_3}(t, f, m), f\bigr)$ both to~$f$;
  \end{itemize}
  and in $\gcfg{i}{E}(t, t, m)$,
  \begin{itemize}
    \item
      $\apn\bigl(\gcfg{i}{E_2}(t, f, m)\bigr)
      \apn\bigl(\gcfg{i}{E_3}(t, f, m)\bigr)$ pairs of paths both to
      exit node~$t$.
  \end{itemize}
  Hence, by Definition~\ref{def:tt-ff-pp} and the inductive hypothesis:
\begin{align*}
  \ttp{i}(E_1 \cqmk E_2 \ccol E_3)
    &= \ttp{i}(E_1) \ttp{i}(E_2)  \\
        & \quad \mathord{}
        + 2 \tfp{i}(E_1) \tp{i}(E_2) \tp{i}(E_3) + \ffp{i}(E_1) \ttp{i}(E_3) \\
    &= \dapn\bigl(\gcfg{i}{E_1}(s_2, s_3, m_2), s_2, s_2\bigr) \dapn\bigl(\gcfg{i}{E_2}(t, f, m), t, t\bigr) \\
    & \quad \mathord{}
      + \dapn\bigl(\gcfg{i}{E_1}(s_2,s_3, m_2), s_2, s_3\bigr) \\
          &\qquad\cdot
    \napn\bigl(\gcfg{i}{E_2}(t, f, m), t\bigr)
    \napn\bigl(\gcfg{i}{E_3}(t, f, m), t\bigr) \\
    & \quad \mathord{}
      + \dapn\bigl(\gcfg{i}{E_1}(s_2,s_3, m_2), s_3, s_2\bigr) \\
          &\qquad\cdot
    \napn\bigl(\gcfg{i}{E_2}(t, f, m), t\bigr)
    \napn\bigl(\gcfg{i}{E_3}(t, f, m), t\bigr) \\
    & \quad \mathord{}
      + \dapn\bigl(\gcfg{i}{E_1}(s_2, s_3, m_2), s_3, s_3\bigr) \dapn\bigl(\gcfg{i}{E_3}(t, f, m), t, t\bigr) \\
    &= \dapn\bigl(\gcfg{i}{E_1 \cqmk E_2 \ccol E_3}(t, f, m), t, t\bigr), \\
  \tfp{i}(E_1 \cqmk E_2 \ccol E_3)
    &= \ttp{i}(E_1) \tfp{i}(E_2)  \\
        & \quad \mathord{}
          + \tfp{i}(E_1)\bigl(\tp{i}(E_2) \fp{i}(E_3) + \fp{i}(E_2)+ \tp{i}(E_2)\bigr) \\
        & \quad \mathord{}  + \ffp{i}(E_1) \tfp{i}(E_3) \\
    &= \dapn\bigl(\gcfg{i}{E_1}(s_2, s_3, m_2), s_2, s_2\bigr) \dapn\bigl(\gcfg{i}{E_2}(t, f, m), t, f\bigr) \\
    & \quad \mathord{}
      + \dapn\bigl(\gcfg{i}{E_1}(s_2,s_3, m_2), s_2, s_3\bigr) \\
          &\qquad\cdot
    \napn\bigl(\gcfg{i}{E_2}(t, f, m), t\bigr)
    \napn\bigl(\gcfg{i}{E_3}(t, f, m), f\bigr) \\
    & \quad \mathord{}
      + \dapn\bigl(\gcfg{i}{E_1}(s_2,s_3, m_2), s_3, s_2\bigr) \\
          &\qquad\cdot
    \napn\bigl(\gcfg{i}{E_2}(t, f, m), f\bigr)
    \napn\bigl(\gcfg{i}{E_3}(t, f, m), t\bigr) \\
    & \quad \mathord{}
      + \dapn\bigl(\gcfg{i}{E_1}(s_2, s_3, m_2), s_3, s_3\bigr) \dapn\bigl(\gcfg{i}{E_3}(t, f, m), t, f\bigr) \\
    &= \dapn\bigl(\gcfg{i}{E_1 \cqmk E_2 \ccol E_3}(t, f, m), t, f\bigr), \\
  \tfp{i}(E_1 \cqmk E_2 \ccol E_3)
    &= \ttp{i}(E_1) \tfp{i}(E_2) \\
        & \quad \mathord{}
        + \tfp{i}(E_1)\bigl(\tp{i}(E_2) \fp{i}(E_3) + \fp{i}(E_2) + \tp{i}(E_2)\bigr)  \\
        & \quad \mathord{} + \ffp{i}(E_1) \tfp{i}(E_3) \\
    &= \dapn\bigl(\gcfg{i}{E_1}(s_2, s_3, m_2), s_2, s_2\bigr) \dapn\bigl(\gcfg{i}{E_2}(t, f, m), f, t\bigr) \\
    & \quad \mathord{}
      + \dapn\bigl(\gcfg{i}{E_1}(s_2,s_3, m_2), s_2, s_3\bigr) \\
          &\qquad\cdot
    \napn\bigl(\gcfg{i}{E_2}(t, f, m), f\bigr)
    \napn\bigl(\gcfg{i}{E_3}(t, f, m), t\bigr) \\
    & \quad \mathord{}
      + \dapn\bigl(\gcfg{i}{E_1}(s_2,s_3, m_2), s_3, s_2\bigr) \\
          &\qquad\cdot
    \napn\bigl(\gcfg{i}{E_2}(t, f, m), t\bigr)
    \napn\bigl(\gcfg{i}{E_3}(t, f, m), f\bigr) \\
    & \quad \mathord{}
      + \dapn\bigl(\gcfg{i}{E_1}(s_2, s_3, m_2), s_3, s_3\bigr) \dapn\bigl(\gcfg{i}{E_3}(t, f, m), f, t\bigr) \\
    &= \dapn\bigl(\gcfg{i}{E_1 \cqmk E_2 \ccol E_3}(t, f, m), f, t\bigr), \\
  \ffp{i}(E_1 \cqmk E_2 \ccol E_3)
    &= \ttp{i}(E_1) \ffp{i}(E_2)  \\
        & \quad \mathord{}
        + 2 \tfp{i}(E_1) \fp{i}(E_2) \fp{i}(E_3) + \ffp{i}(E_1) \ffp{i}(E_3) \\
    &= \dapn\bigl(\gcfg{i}{E_1}(s_2, s_3, m_2), s_2, s_2\bigr) \dapn\bigl(\gcfg{i}{E_2}(t, f, m), f, f\bigr) \\
    & \quad \mathord{}
      + \dapn\bigl(\gcfg{i}{E_1}(s_2,s_3, m_2), s_2, s_3\bigr) \\
          &\qquad\cdot
    \napn\bigl(\gcfg{i}{E_2}(t, f, m), f\bigr)
    \napn\bigl(\gcfg{i}{E_3}(t, f, m), f\bigr) \\
    & \quad \mathord{}
      + \dapn\bigl(\gcfg{i}{E_1}(s_2,s_3, m_2), s_3, s_2\bigr) \\
          &\qquad\cdot
    \napn\bigl(\gcfg{i}{E_2}(t, f, m), f\bigr)
    \napn\bigl(\gcfg{i}{E_3}(t, f, m), f\bigr) \\
    & \quad \mathord{}
      + \dapn\bigl(\gcfg{i}{E_1}(s_2, s_3, m_2), s_3, s_3\bigr) \dapn\bigl(\gcfg{i}{E_3}(t, f, m), t, t\bigr) \\
    &= \dapn\bigl(\gcfg{i}{E_1 \cqmk E_2 \ccol E_3}(t, f, m), f, f\bigr), \\
  \ppp{i}(E_1 \cqmk E_2 \ccol E_3)
    &= \ttp{i}(E_1) \ppp{i}(E_2)  \\
        & \quad \mathord{}
        + 2 \tfp{i}(E_1) \pp{i}(E_2) \pp{i}(E_3) + \ffp{i}(E_1) \pp{i}(E_3) \\
    &= \dapn\bigl(\gcfg{i}{E_1}(s_2, s_3, m_2), s_2, s_2\bigr) \dapn\bigl(\gcfg{i}{E_2}(t, t, m), t, t\bigr) \\
    & \quad \mathord{}
      + \dapn\bigl(\gcfg{i}{E_1}(s_2,s_3, m_2), s_2, s_3\bigr) \\
          &\qquad\cdot
    \apn\bigl(\gcfg{i}{E_2}(t, t, m)\bigr)
    \apn\bigl(\gcfg{i}{E_3}(t, t, m)\bigr) \\
    & \quad \mathord{}
      + \dapn\bigl(\gcfg{i}{E_1}(s_2,s_3, m_2), s_3, s_2\bigr) \\
          &\qquad\cdot
    \apn\bigl(\gcfg{i}{E_2}(t, t, m)\bigr)
    \apn\bigl(\gcfg{i}{E_3}(t, t, m)\bigr) \\
    & \quad \mathord{}
      + \dapn\bigl(\gcfg{i}{E_1}(s_2, s_3, m_2), s_3, s_3\bigr) \dapn\bigl(\gcfg{i}{E_3}(t, t, m), t, t\bigr) \\
    &= \dapn\bigl(\gcfg{i}{E_1 \cqmk E_2 \ccol E_3}(t, f, m), t, t\bigr). \qed
\end{align*}
\end{description}
\end{proof}

In the sequel, we will assume that any CFG $(N, A, s)$ constructed by
$\cfg{i}{}$ has $N \subset \Nset$; and that $B \in \Stm$ denotes a function
body and $\bodycfg{i}{B} = (N_B, A_B, s_B)$.
First we provide some additional notation.
\begin{itemize}
  \item
    $\orig(m) \in \Stm$ denotes a label or statement that generates
    the node $m$.  Note that, by Definition~\ref{def:reference-cfg},
    it follows that, if $m > n$, then $\orig(m)$ occurs before
    $\orig(n)$ in a function.
  \item
    For each $\stm \in \{\mathtt{switch}, \mathtt{while}, \mathtt{do-while}\}$
    in $B$, we insert two synthetic statements
    $\stm \; \mathtt{enter}$ (resp., $\stm \; \mathtt{exit}$) before (resp., after) the body of $\stm$.
    Therefore, if $n_1, n_2 \in N_B$ and $\orig(n_1) = \mathtt{\stm \; enter}$ and
    $\orig(n_2) = \mathtt{\stm \; exit}$, then $n_1 > n_2$.
  \item
    $\samebody(t_1, t_2, \stm)$,\footnote{$\samebody$ is a mnemonic for
    \emph{same body}.} where
    \(
    t_1, t_2 \in \Nset$ and $\stm \in \{\mathtt{switch}, \mathtt{while}, \mathtt{do-while}\},
    \)
    denotes a predicate that
    results in true if and only if $t_1$ and $t_2$ are the two nodes
    that enclose the body of a statement $\stm$, namely:
    \begin{multline}
      \label{eq:addnotate-sb}
      \samebody(t_1, t_2, \stm) \defeq
      \Bigl(
         \orig(t_1) = \stm \; \mathtt{enter}
         \land \orig(t_2) = \stm \; \mathtt{exit} \\
      \land \bigl|\bigl\{\, c \in [t_2,t_1] \bigm| \orig(c) = \stm \; \mathtt{enter} \,\bigr\}\bigr|
       = \bigl|\bigl\{\, c \in [t_2, t_1] \bigm| \orig(c) = \stm \; \mathtt{exit} \,\bigr\}\bigr|
      \Bigr).
    \end{multline}
\end{itemize}
Furthermore, for each label $L$ or statement $S$ in $B$,
where $\cfg{i}{L}(t, m) = (N, A, s)$
and $\cfg{i}{S}(t, \tb, \tc, m) = (N, A, s)$,
$t,\tb, \tc, m \in \Nset$ are such that $\tb \leq m \leq t \leq \tc$, and,
finally,
$N \sseq \{ t \} \union [m, m'-1]$:
\begin{itemize}
  \item
    $l(S)$ denotes the set of label identifiers contained in $S$:
    \begin{equation}
      \label{eq:addnotate-l}
      l(S) \defeq
      \bigl\{\, \id \in \Id \bigm| \exists m \in N \st \orig(m) = \id \,\bigr\}.
    \end{equation}
  \item
    $s(S)$ denotes the set of $\mathtt{case}$
    or $\mathtt{default}$ nodes contained in $S$
    where the controlling $\mathtt{switch}$ is external to $S$:
    \begin{equation}
      \label{eq:addnotate-s}
      s(S) \defeq
        \sset{
          n \in N
          }{
          \begin{aligned}
             &\orig(n) \in \{\mathtt{case}, \mathtt{default}\} \\
             & \land \nexists m_1, m_2 \in N \st \samebody(m_1, m_2, \mathrm{switch})
             \land n \in [m_1, m_2]
          \end{aligned}
          }.
    \end{equation}
  \item
    $t(S, B)$ denotes
    the set of $\mathtt{switch}$ nodes in $B$ that may pass the control to
    a $\mathtt{case}$ or $\mathtt{default}$ node in $S$:
    \begin{equation}
      \label{eq:addnotate-t}
      t(S, B)
      \defeq
        \sset{
          m \in N_B
        }{
          \begin{aligned}
            &\orig(m) = \mathtt{switch \; enter} \\
            &\land \exists m_2 \in N_B \st \samebody(m, m_2, \mathrm{switch}) \\
             &\land \exists n \in N \st n \in [m_2, m] \cap [t, s]
          \end{aligned}
          }.
    \end{equation}
  \item
    $b(s, t)$ denotes the set of $\mathtt{break}$ nodes between $s$
    and $t$ that are external to any inner \texttt{switch} or loop statements.
    Formally,
    if $\stm \in \{\mathtt{switch}, \mathtt{while}, \mathtt{do-while}\}$,
    \begin{equation}
      \label{eq:addnotate-b}
      b(s, t)
      \defeq
        \sset{
          m \in [s, t]
        }{
          \begin{aligned}
          &\orig(m) = \mathtt{break} \\
          &\land \nexists m_1, m_2 \in [s,t] \st
                  \bigl(\samebody(m_1, m_2, \stm)
          \land m \in [m_2, m_1] \bigr)
          \end{aligned}
        }.
    \end{equation}
  \item
    $c(s, t)$ denotes the set of $\mathtt{continue}$ nodes between $s$
    and $t$ that are external to any inner loop statements.
    Formally, if $\stm \in \{\mathtt{while}, \mathtt{do-while}\}$,
    \begin{equation}
      \label{eq:addnotate-c}
      c(s, t) \defeq
        \sset{
          m \in [s, t]
        }{
          \begin{aligned}
          &\orig(m) = \mathtt{continue} \\
          &\land \nexists m_1, m_2 \in [s,t] \st \bigl(\samebody(m_1, m_2, \stm)
          \land m \in [m_2, m_1] \bigr)
          \end{aligned}
        }.
    \end{equation}
  \item
    $r(s, t)$ denotes the set of $\mathtt{return}$ nodes between $s$ and $t$.
    Formally,
    \begin{equation}
      \label{eq:addnotate-r}
      r(s, t) \defeq \bigl\{\, m \in [s,t] \bigm| \orig(m) = \mathtt{return} \,\bigr\}.
    \end{equation}
  \item
    $\stapn(S, B)$ denotes the number of acyclic paths in $B$ to
    a node in $t(S, B)$:
    \begin{equation}
      \stapn(S, B)
      \defeq
      \sum\limits_{n \in t(S, B)}\napn\bigl(s, A, n\bigr).
    \end{equation}
  \item
    $g(\id, S, B)$ denotes the set of
    $\mathtt{goto} \; \id$ labeled statements in $S$:
    \begin{equation}
      \label{eq:addnotate-g}
      g(\id, S, B) \defeq \bigl\{\, g \in N \bigm| \orig(g) = \mathtt{goto} \; \id \,\bigr\}.
    \end{equation}
  \item
    $\pg(\id, S, B)$ denotes the set of
    $\mathtt{goto} \; \id$ labeled statements in $B$ and before $S$:
    \begin{equation}
      \pg(\id, S, B) \defeq \bigl\{\, g \in N_B, g > s \bigm| \orig(g) = \mathtt{goto} \; \id \,\bigr\}.
    \end{equation}
  \item
    $\gapn(\id, S, B)$ denotes the number of acyclic paths in $B$
    to nodes in $g(\id, S, B)$:
    \begin{equation}
      \gapn(\id, S, B)
      \defeq
      \sum\limits_{n \in g(\id, S, B)} \napn(s, A, n\bigr).
    \end{equation}
  \item
    $\pgapn(\id, S, B)$ denotes the number of acyclic paths in $B$
    to nodes in $\pg(\id, S, B)$:
    \begin{equation}
      \pgapn(\id, S, B)
      \defeq
      \sum\limits_{n \in \pg(\id, S, B)} \napn(s, A_B, n\bigr).
    \end{equation}
  \item
    $\ftapn(c, S, B)$, where $c \in N$, denotes the number of acyclic
    paths that ``fall through $c$ from above'':
    \begin{equation}
    \label{eq:ftapn}
      \ftapn(c, S, B)
      \defeq
      \begin{cases}
        \napn(s, A_1, c),
        &\text{if $\orig(c) = \id$,} \\
        \napn(s, A_2, c),
        &\text{if $\orig(c) \in \{\mathtt{case}, \mathtt{default}\}$,} \\
        \napn(s, A_f, c),
        &\text{otherwise,}
      \end{cases}
    \end{equation}
    where
    $A_1 = \bigl\{\, (c_1, c) \in A_B \bigm| c_1 \notin \pg(\id, S, B) \,\bigr\}$,
    $A_2 = \bigl\{\, (c_1, c) \in A_B \bigm| c_1 \notin t(S, B) \,\bigr\}$ and
    $A_f = \bigl\{\, (c_1, c) \in A_B \bigm| c_1 > c \,\bigr\}$.
  \item
    $\bapn(S, B)$ denotes the number of acyclic paths in $B$ to nodes in $b(s,t)$:
    \begin{equation}
      \bapn(S, B)
      \defeq
      \sum\limits_{n \in b(s, t)} \napn(s, A, n\bigr).
    \end{equation}
  \item
    $\capn(S)$ denotes the number of acyclic paths in $B$ to nodes in $c(s,t)$:
    \begin{equation}
      \capn(S)
      \defeq
      \sum\limits_{n \in c(s, t)} \napn(s, A, n\bigr).
    \end{equation}
  \item
    $\rapn(S)$ denotes the number of acyclic paths in $B$ to nodes in $r(s,t)$:
    \begin{equation}
      \rapn(S)
      \defeq
      \sum\limits_{n \in r(s, t)} \napn(s, A, n\bigr).
    \end{equation}
\end{itemize}

In Section~\ref{sec:ACPATH}, we gave an informal definition of a
controlled function body
(Definition~\ref{defn:controlled-function-body-informal}), we now
provide a formal version.

\begin{definition} \summary{(Controlled function body.)}
\label{def:controlled-function-body-formal}
Let $B \in \Stm$ be a full C function body.
We call $B$ a \emph{controlled function body} if it satisfies
the following properties:
\begin{enumerate}
\item
  \label{enum:cfb-backjump}
  for each $t_1, t_2 \in N$,
  if $\orig(t_1) = \id$ and $\orig(t_2) = \mathtt{goto} \; \id$,
  then $t_1 \leq t_2$;
\item
  \label{enum:cfb-intoloop}
  if $\stm \in \{\mathtt{while}, \mathtt{do-while}\}$,
  $w_1, w_2 \in N_B$ are such that
  $\orig(w_1) = \mathtt{\stm\; enter}$ and
  $\orig(w_2) = \mathtt{\stm \; exit}$,
  and one of the following holds:
  \begin{align*}
    \exists n \in N &\st n \in [w_2, w_1] \land \orig(n) \in \{\mathtt{break}, \mathtt{return}\} \land n\in b(w_1, w_2), \\
    \exists n_1,n_2 \in N &\st n_1 < w_2 \land n_2 \in [w_2, w_1] \land \orig(n_1) =\id \land \orig(n_2) = \mathtt{goto} \; \id,
  \end{align*}
  then neither of the following hold:
    \begin{align*}
      \exists t_1, t_2 \in N &\st t_1 \in [w_2, w_1] \\
      & \quad \land t_2 < w_2 \land \orig(t_2) \in \{\mathtt{case, default}\} \land \orig(t_1) = \mathtt{switch \; enter}, \\
      \exists t_1, t_2 \in N &\st t_1 \in [w_2 ,w_1] \\
      & \quad \land t_2 > w_1 \land \orig(t_1) =\id_w \land \orig(t_2) = \mathtt{goto} \; \id_w.
  \end{align*}
\end{enumerate}
\end{definition}

Note that it follows from condition~\ref{enum:cfb-backjump}
of Definition~\ref{def:controlled-function-body-formal}
that if $B \in \Stm$
is a controlled function body and $\bodycfg{i}{B} = (N_B, A_B, s_B)$,
for each $(m,n) \in A_B$ such that
$\orig(m) \notin \{\mathtt{while \; exit},  \mathtt{do-while \; exit}\}$,
the statement $\orig(m)$ will be before the statement $\orig(n)$ in $B$.

\begin{lemma}
\label{lm:controlled-function-body}
Let:
\begin{itemize}
  \item
  $i \in \{ 0, 1, 2 \}$ be an optimization level;
  \item
  $B \in \Stm$ a controlled function body;
  \item
  $\bodycfg{i}{B} = (N_B, A_B, s_B)$;
  \item
  $t \in N_B$ a target node in $\bodycfg{i}{B}$;
  \item
  $\stm \in \{\mathtt{while}, \mathtt{do-while} \}$;
  \item
  $w_1, w_2 \in N_B$ such that
  $\orig(w_1) = \stm \; \mathtt{enter}$ and
  $\orig(w_2) = \stm \; \mathtt{exit}$;
  \item
  $P = s_B, \ldots,  a, \ldots, t$
    an acylic path in $\bodycfg{i}{B}$ where $a \in [w_2, w_1]$.
\end{itemize}
  Then
  $P$ contains a subsequence $a_0, \ldots, a_j = a, \ldots a_k$
  where, for each $i \in [1,k-1]$, $a_i \in [w_2, w_1]$ and either
  $\orig(a_1) = \mathtt{\stm \; enter}$ or $\orig(a_{k-1}) = \mathtt{\stm \; exit}$.
\end{lemma}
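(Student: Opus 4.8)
The plan is to follow the excursions of $P$ into the ``loop region'' $R \defeq [w_2, w_1]$ which, by the $\mathtt{while}$ and $\mathtt{do-while}$ clauses of Definition~\ref{def:reference-cfg} together with the convention that the synthetic $\stm\;\mathtt{enter}$ and $\stm\;\mathtt{exit}$ statements sit immediately before and after the loop body, is exactly the set consisting of $w_1$, of $w_2$, and of the body nodes (all of which lie strictly between $w_2$ and $w_1$). Fix the relevant occurrence of $a$ on $P$ and let $a_1, \dots, a_{k-1}$ be the maximal block of consecutive nodes of $P$ that contains it and lies entirely in $R$, so that $a = a_j$ for some $j \in [1, k-1]$; let $a_0$ and $a_k$ be, respectively, the node of $P$ immediately before $a_1$ and the one immediately after $a_{k-1}$, when such a node exists. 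Since every interior node of this block already lies in $R$, it suffices to show that $\orig(a_1) = \stm\;\mathtt{enter}$ or $\orig(a_{k-1}) = \stm\;\mathtt{exit}$.

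The key step, and the one that carries the real weight, is a description of the arcs of $\bodycfg{i}{B}$ that cross the boundary of $R$. Inspecting Definition~\ref{def:reference-cfg} and the $\mathtt{goto}$-closure used to assemble $\bodycfg{i}{B}$, I would establish: (i) any arc $(u, v) \in A_B$ with $u \notin R$ and $v \in R$ either has $v = w_1$ (a ``normal'' entry: the first iteration, or the arc leaving the guard when it evaluates to true), or has $v$ a body node reached by a $\mathtt{goto}$ or a $\mathtt{switch}$ dispatch, i.e.\ is a jump into the loop from the outside; and, dually, (ii) any arc $(u, v) \in A_B$ with $u \in R$ and $v \notin R$ has $u = w_2$ (the normal exit: body fall-through, and $\mathtt{continue}$, both routed through $w_2$ on their way back to the guard), or has $u$ a body node that terminates the loop via $\mathtt{break}$ or $\mathtt{goto}$; moreover, the only nodes of $R$ with no successor at all are the $\mathtt{return}$ nodes of the body. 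Establishing (i)--(ii) requires locating $w_1$ and $w_2$ precisely inside the interleaved label ranges produced by the loop constructions, and it uses condition~\ref{enum:cfb-backjump} of Definition~\ref{def:controlled-function-body-formal}: absence of backjumps forces the source of any $\mathtt{goto}$ whose target lies in $R$ to occur \emph{before} the loop in the source text, which is what pins down its position relative to $R$.

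Granting (i)--(ii), the rest is short. If $a_1 = s_B$ then $s_B \in R$; since $s_B$ has no predecessor it cannot be a jump-in target, so by (i) $s_B = w_1$ and $\orig(a_1) = \stm\;\mathtt{enter}$ --- this degenerate situation arises only when the first executed statement is a $\mathtt{do-while}$, whose enter node is then $s_B$. Otherwise the arc $(a_0, a_1)$ exists; if it is a normal entry then $a_1 = w_1$ and again $\orig(a_1) = \stm\;\mathtt{enter}$. In the remaining case $(a_0, a_1)$ is a jump into the loop, so it witnesses a configuration that, by condition~\ref{enum:cfb-intoloop} of Definition~\ref{def:controlled-function-body-formal} and the hypothesis that $B$ is controlled, is incompatible with the loop having any $\mathtt{break}$, $\mathtt{return}$, or $\mathtt{goto}$ that terminates it; hence it has none. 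By (ii), every arc leaving $R$ then issues from $w_2$, and no node of $R$ visited by $P$ is a dead end (those would be $\mathtt{return}$ nodes of the body, which we have just excluded; note also that the function's terminal node has label $0 < w_2$), so the last node $t$ of $P$ lies outside $R$. Therefore $P$ crosses out of $R$ at some point past the occurrence of $a$; the first such crossing is the arc $(a_{k-1}, a_k)$, which by (ii) forces $a_{k-1} = w_2$, i.e.\ $\orig(a_{k-1}) = \stm\;\mathtt{exit}$.

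The main obstacle is the boundary-arc analysis (i)--(ii). The loop constructions of Definition~\ref{def:reference-cfg} interleave the body, the guard, and the ``plumbing'' nodes across several label blocks, and one has to check that the synthetic enter and exit markers can be placed so that \emph{every} ordinary entry into the body passes through the enter node and \emph{every} return of control to the guard --- whether by falling through the body or by executing a $\mathtt{continue}$ --- passes through the exit node, while all jump-ins and all abrupt exits bypass them. Matching each kind of jump-in to the precise existential appearing in condition~\ref{enum:cfb-intoloop} is similarly delicate. Once this bookkeeping is in hand, the case analysis above and the appeals to Definition~\ref{def:controlled-function-body-formal} are routine.
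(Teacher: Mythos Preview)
Your approach is essentially the paper's: take the maximal run of $P$ inside $R = [w_2, w_1]$, classify the boundary-crossing arcs, and invoke condition~\ref{enum:cfb-intoloop} of Definition~\ref{def:controlled-function-body-formal}. The paper compresses this into a single contradiction---assume $\orig(a_1) \neq \stm\;\mathtt{enter}$ and $\orig(a_{k-1}) \neq \stm\;\mathtt{exit}$, so the entering arc $(a_0,a_1)$ witnesses a jump-in and the leaving arc $(a_{k-1},a_k)$ witnesses a $\mathtt{break}$/$\mathtt{return}$/$\mathtt{goto}$-out, directly violating condition~\ref{enum:cfb-intoloop}---whereas you unwind the contrapositive and do a case split on the entry arc, but the substance is identical, and your explicit classification (i)--(ii) is precisely what the paper leaves unstated. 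One caution: the step ``no node of $R$ visited by $P$ is a dead end, so $t \notin R$'' does not follow as written (a path may terminate at a non-sink node), though the paper's proof makes the same tacit assumption that $a_k$ exists.
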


\begin{proof}
  Suppose $a_0,a_1,\ldots, a_j=a, \ldots, a_k$ is the
  maximal subsequence of $P$ such that
  for each ${i \in [1,k-1]}$, $a_i \in [w_2,w_1]$.
  Suppose also that $\orig(a_1) \neq \mathtt{\stm \; enter}$ and
  $\orig(a_{k-1}) \neq \mathtt{\stm \; exit}$.
  Then we derive a contradiction.

  Since $\orig(a_1) \neq \mathtt{\stm \; enter}$,
  either $\orig(a_0) = \mathtt{goto \; \id_1}$ and $\orig(a_1) = \id_1$
  or $\orig(a_0) = \mathtt{switch \; enter}$
  and $\orig(a_1) \in \{\mathtt{case}, \mathtt{default}\}$;
  and also since $\orig(a_{k-1}) \neq \mathtt{\stm \; exit}$, either
  $\orig(a_k) = \id_k$ and $\orig(a_{k-1}) = \mathtt{goto} \; \id_k$
  or $\orig(a_k) \in \{\mathtt{return}, \mathtt{break}\}$.
  Therefore, condition~\ref{enum:cfb-intoloop} of
  Definition~\ref{def:controlled-function-body-formal}
  does not hold, contradicting
  the hypothesis that $B$ is a controlled function body.
  \qed
\end{proof}

\begin{lemma}
  \label{lm:label-paths}
  Let:
    $i \in \{ 0, 1, 2 \}$ be an optimization level;
    $B \in \Stm$ a controlled function body;
    $L \in \Lab$ a label in $B$;
    $t, m \in \Nset$ where $t < m$;
    $\bodycfg{i}{B} = (N_B, A_B, s_B)$;
    $\gcfg{i}{L}(t, m) = (N, A, s)$;
    $\Ft = \ftapn(s, L, B)$;
    $\St = \stapn(L, B)$
    and
    $\fund{\Gt}{\Id} {\Nset}$ be such that,
    for each $\id \in l(B)$, $\Gt(id) = \pgapn(\id, L, B)$.
  Then
  $\apc{i}{L}(\Ft, \St, \Gt) = \ftapn(t, L, B)$.
\end{lemma}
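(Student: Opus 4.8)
The plan is a case analysis on the three forms of $L$, namely $\mathtt{case}\; n$, $\mathtt{default}$, and an identifier label $\id$. In every case \eqref{eq:cfg_i-labels} gives $\gcfg{i}{L}(t, m) = \bigl(\{m, t\}, \{(m, t)\}, m\bigr)$, so the entry node is $s = m$; and by \eqref{eq:cfg_i-labeledStat} together with Definition~\ref{def:reference-cfg} the ``label node'' $m$ has exactly one outgoing arc in $\bodycfg{i}{B}$, namely $(m, t)$, where $t = s_S$ is the entry of the body $S$ of the labeled statement. On the algorithmic side, \eqref{eq:apc_i-caseLabel}--\eqref{eq:apc_i-idLabel} give $\apc{i}{L}(\Ft, \St, \Gt) = \Ft + \St$ when $L \in \{\mathtt{case}\;n, \mathtt{default}\}$ and $\apc{i}{L}(\Ft, \St, \Gt) = \Ft + \Gt(\id)$ when $L = \id$. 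So it suffices to prove $\ftapn(t, L, B) = \ftapn(s, L, B) + \stapn(L, B)$ in the first two cases and $\ftapn(t, L, B) = \ftapn(s, L, B) + \pgapn(\id, L, B)$ in the third.

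First I would reduce the left-hand side to a count of paths at the label node, by showing $\ftapn(t, L, B) = \napn(s_B, A_B, m)$. The arcs of $A_B$ entering $t = s_S$ are $(m, t)$, possibly one loop-back arc (if $S$ is itself a \texttt{while} or \texttt{do-while} with body entry $s_S$), which comes from a node of index smaller than $t$, and possibly \texttt{goto}/\texttt{switch} arcs, which occur only if $S$ begins with a label; in every subcase the clause of \eqref{eq:ftapn} selected by $\orig(t)$ restricts the incoming arcs of $t$ to precisely $\{(m,t)\}$. Hence each path counted by $\ftapn(t, L, B)$ ends with the arc $(m, t)$ and its prefix is an acyclic path from $s_B$ to $m$; I would argue this correspondence is a bijection onto the acyclic paths to $m$, the only nontrivial point being that no acyclic path reaches $m$ after having visited $t$. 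This is where the controlled-body hypothesis enters: it follows from the absence of backjumps (condition~\ref{enum:cfb-backjump} of Definition~\ref{def:controlled-function-body-formal} and the ``forward arc'' remark following it) and, when the label lies inside a loop, from Lemma~\ref{lm:controlled-function-body}.

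The second step is to split $\napn(s_B, A_B, m)$ according to the last arc of the path, i.e.\ according to the incoming arcs of $m$. The fall-through arcs, which come from nodes of larger index, contribute exactly $\ftapn(s, L, B) = \Ft$, since this is what the first clause (for $L = \id$) or the second clause (for $L \in \{\mathtt{case}\;n, \mathtt{default}\}$) of \eqref{eq:ftapn} computes once the \texttt{goto}, resp.\ \texttt{switch} dispatch, arcs into $m$ are removed. The remaining incoming arcs of $m$ are: for $L = \id$, the arcs from the $\mathtt{goto}\;\id$ nodes, whose contribution is $\sum_{g \in \pg(\id, L, B)}\napn(s_B, A_B, g) = \pgapn(\id, L, B) = \Gt(\id)$ --- using that no-backjump forces every $\mathtt{goto}\;\id$ statement to precede the label and hence to lie in $\pg(\id, L, B)$; and, for $L \in \{\mathtt{case}\;n, \mathtt{default}\}$, the dispatch arcs from the enclosing \texttt{switch}, whose contribution is $\stapn(L, B) = \St$ because the paths exercising those arcs are exactly the paths reaching the $\mathtt{switch\;enter}$ nodes collected in $t(L, B)$. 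Adding up gives $\napn(s_B, A_B, m) = \Ft + \Gt(\id)$, resp.\ $\Ft + \St$, which with the first step and the recalled values of $\apc{i}{L}$ completes the proof.

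The main obstacle is the structural bookkeeping underlying both steps: that an acyclic path reaches $t$ only in the pattern $\cdots \to m \to t$ and never returns to $m$, and that a single acyclic path cannot traverse two distinct incoming arcs of $m$, so that the path counts genuinely add. Both facts are consequences of $B$ being controlled, and this is the only point at which the hypotheses of Definition~\ref{def:controlled-function-body-formal} (and Lemma~\ref{lm:controlled-function-body}) are needed; everything else is a routine unfolding of \eqref{eq:cfg_i-labels}, \eqref{eq:cfg_i-labeledStat}, \eqref{eq:ftapn} and \eqref{eq:apc_i-caseLabel}--\eqref{eq:apc_i-idLabel}.
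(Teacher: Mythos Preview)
Your proposal is correct and follows essentially the same route as the paper: a case split on the three label forms, followed by the observation that every path counted by $\ftapn(t,L,B)$ must arrive at $t$ through the single arc $(m,t)$, and a decomposition of the paths reaching the label node $m$ into the ``fall-through'' ones (contributing $\Ft$) and the ``jump-in'' ones (contributing $\St$ for $\mathtt{case}$/$\mathtt{default}$, or $\Gt(\id)$ for an identifier label).

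The main organizational difference is that you factor explicitly through the intermediate quantity $\napn(s_B,A_B,m)$ and then split it, whereas the paper works directly with the two summands and argues that together they give $\ftapn(t,L,B)$; the content is the same. You are also more explicit than the paper about where the controlled-body hypothesis is actually used---namely in ensuring that an acyclic path cannot revisit the label node $m$ (so that the correspondence ``path to $t$ ending in $(m,t)$'' $\leftrightarrow$ ``path to $m$'' really is a bijection) and that the splitting by last incoming arc is additive. The paper's proof leaves this bookkeeping implicit. In particular, the paper does not invoke Lemma~\ref{lm:controlled-function-body} here; if you want to match its level of detail you can simply rely on the forward-arc property recorded after Definition~\ref{def:controlled-function-body-formal}, but your more careful justification is certainly not wrong.
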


\begin{proof}
  We prove each kind of label separately.

\proofsec{Case label}
  if $L = \mathtt{case \; z}$ then, by~\eqref{eq:cfg_i-labels},
  $\gcfg{i}{L}(t, m) = \bigl(\{m, t\},\bigl\{(m, t)\bigr\}, m\bigr)$
  and, by~\eqref{eq:cfg_i-switchStat},
  $\exists c \in t(L, B) \st (c, m) \in A_B$.
  By hypothesis,
  \[
    \Ft = \napn\bigl(s_B, A_f \setminus \{(c, m)\}, m\bigr)
    \]
  and
  $\St = \napn\bigl(\bodycfg{i}{B}, c\bigr)$.
  The number of paths that reach the entry node of label $L$ without using an arc in $A_1$ is $\napn\bigl(s_B, A_f \setminus \bigl\{(c, m)\bigr\}, m\bigr)$.
  The number of paths passing through the arc $(c, m)$ is $\napn\bigl(\bodycfg{i}{B}, c, m\bigr)$.
  Each of these paths will reach $t$ via the arc $(m, t)$ and
  will not pass through any arcs in
  $\bigl\{\, (g, t) \bigm| g \in \pg(\id, L, B) \,\bigr\}$
  when $\orig(t) = \id$
  or through an arc $(c_1, t) \in t(L, B)$ if
  $\orig(t) \in \{\mathtt{case}, \mathtt{default}\}$.
  Concluding,
  \begin{align*}
     \apc{i}{L}(\Ft, \St, \Gt) &= \Ft + \St \\
           &= \napn\bigl(s_B, A_f \setminus \bigl\{(c, s)\bigr\} , s\bigr) + \napn\bigl(\bodycfg{i}{B}, c\bigr) \\
           &= \ftapn(t, L, B).
  \end{align*}
\proofsec{Default label}
   the proof is similar to the previous case.
\proofsec{Identifier label}
  by~\eqref{eq:cfg_i-labels},
  $\gcfg{i}{\id}(t, m) = \bigl(\{m, t\},\bigl\{(m, t)\bigr\}, m\bigr)$
  and, by hypothesis,
  \(
  \Ft
    =
    \napn\bigl(
      N_B,
      A_f \setminus \bigl\{\, (g, m) \bigm| g \in \pg(\id, L, B) \,\bigr\},
      m
    \bigl).
    \)
  The number of paths that reach the entry node of label $L$
  without using any of the arcs in
  \[
    A_g = \bigl\{\, (g, m) \bigm| g \in \pg(\id, L, B) \,\bigr\}\bigl)
  \]
  is
  \(
    \napn\bigl(
      s_B,
      A_f \setminus A_g,
      m
    \bigr)
  \).
  The number of paths passing through the arcs in $A_g$ is $\pgapn(\id, L, B)$.
  Each of these paths will reach $t$ via $(m, t)$ and
  will not pass through any arcs in
  \[
    \bigl\{\, (g, t) \bigm| g \in \pg(\id, L, B) \,\bigr\}
  \]
  if $\orig(t) = \id$ or through
  an arc $(c, t) \st c \in t(L, B)$
  if $\orig(t) \in \{\mathtt{case}, \mathtt{default}\}$.
  Hence, using the hypothesis:
  \begin{align*}
    \apc{i}{L}(\Ft, \St, \Gt)
      &= \Ft + \Gt(\id) \\
      &= \napn\bigl(N_B,
                    A_f
                      \setminus
                        \bigl\{\, (g, s) \bigm| g \in \pg(\id, L, B) \,\bigr\},
                    s\bigl) + \pgapn(\id, L, B)  \\
      &= \ftapn(t, L, B).
  \end{align*}
\qed
\end{proof}

\begin{lemma}
\label{lm:stat-paths}
Let:
$i \in \{ 0, 1, 2 \}$ be an optimization level;
$B \in \Stm$ be a controlled function body;
$S \in \Stm$ be a statement in $B$;
$t,\tb, \tc, c \in \Nset$ be such that $\tb \leq c < t \leq \tc$;
$\bodycfg{i}{B} = (N_B, A_B, s_B)$;
$\gcfg{i}{S}(t, \tb, \tc, m) = (N, A, s)$;
$\Ft = \ftapn(s, S, B)$;
$\St = \stapn(S, B)$
and
$\fund{\Gt}{\Id}{\Nset}$ be such that,
for each $\id \in l(B)$, $\Gt(id) = \pgapn(\id, S, B)$
Then
\[
  \apc{i}{S}(\Ft, \St, \Gt) = (\Ftout, \Bp, \Cp, \Rp, \Gtout),
\]
where
  \begin{align*}
    \Ftout &= \ftapn(t, S, B) \\
    \Bp &= \bapn(S, B), \\
    \Cp &= \capn(S, B), \\
    \Rp &= \rapn(S, B), \\
    \Gtout &= \Gt + \lambda \id \in l(B) \st \gapn(\id, S, B).
  \end{align*}
\end{lemma}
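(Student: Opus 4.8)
The plan is a structural induction on $S$ relative to a fixed controlled function body $B$ with $\bodycfg{i}{B} = (N_B, A_B, s_B)$; the inductive hypothesis is the lemma for every proper sub-statement of $S$. One keeps the running invariant that the inputs record the history of $B$ before $S$ --- $\Ft = \ftapn(s, S, B)$, $\St = \stapn(S, B)$, $\Gt(\mathrm{id}) = \pgapn(\mathrm{id}, S, B)$ --- and that the five outputs must come out as $\ftapn(t, S, B)$, $\bapn(S,B)$, $\capn(S,B)$, $\rapn(S,B)$ and $\Gt + \lambda \mathrm{id} \in l(B) \st \gapn(\mathrm{id}, S, B)$. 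Gluing is legitimate because, by Definition~\ref{def:reference-cfg}, each sub-phrase's subgraph occupies a fresh range of node labels and meets its context only at designated boundary nodes, so acyclic paths concatenate at those boundaries without arc clashes. The base cases --- $E;$, $\mathtt{return}$, $\mathtt{return}\;E$, $\mathtt{break}$, $\mathtt{continue}$, $\mathtt{goto}$ and $\mathrm{stm}$ --- are discharged by unfolding the relevant clause of Definition~\ref{def:apc_i-Stat} and, when an expression is present, citing Lemma~\ref{lm:expr-count}: e.g.\ for $E;$ the subgraph is $\gcfg{i}{E}(t,t,m)$ entered by $\Ft$ disjoint prefixes, so $\pp{i}(E)\Ft$ acyclic continuations reach $t$; for $\mathtt{return}$ the node has no successor, so $\Ftout = 0$ and $\Rp = \Ft$; for $\mathtt{goto}\;\mathrm{id}$ the arc to the label is outside the subgraph, so only the $\mathrm{id}$-entry of $\Gt$ grows, by $\Ft$.

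For sequential composition and compound statements one threads the hypothesis: apply it to $S_1$ with $(\Ft,\St,\Gt)$, pass the resulting fall-through and updated $\Gt$ into $S_2$, and add $\Bp$, $\Cp$, $\Rp$; the only subtlety is that the glue node carries the same $\ftapn$ value seen from either side, which holds since $\ftapn(c,\cdot,B)$ is independent of its statement argument for a non-label $c$. For $L \ccol S$ one first computes the fall-through into $S$ by Lemma~\ref{lm:label-paths}, which already fuses the ``fall to'', ``switch to'' and ``go to'' contributions, then applies the hypothesis to $S$. For the conditionals, Lemma~\ref{lm:expr-count} says the guard feeds $\tp{i}(E)\Ft$ paths into $S_1$ and $\fp{i}(E)\Ft$ into $S_2$ (one-armed: $\fp{i}(E)\Ft$ straight to the join); the hypothesis on the branches and merging at the join give $\Ftout = \Ft_1 + \Ft_2$ and additive accumulators, while the degenerate cases $\tv{i}(E) \in \{\true,\false\}$ collapse the CFG to a single branch. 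The $\mathtt{switch}$ case is analogous: the guard delivers $\pp{i}(E)\Ft$ paths to the dispatch node, passed to the body as the ``switch-to'' budget with zero fall-into-first-statement; the body's $\Bp_S$ becomes fall-through out of the switch, and with no $\mathtt{default}$ the unmatched $\pp{i}(E)\Ft$ paths fall through too.

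The loops are the crux. Since $\mathtt{for}$ unfolds to $\mathtt{while}$ in the reference CFG, only $\mathtt{while}\;(E)\;S$ and $\mathtt{do}\;S\;\mathtt{while}\;(E)$ remain. The driving fact is that the loop-back arc occurs at most once in an acyclic path, so the controlling expression is traversed at most twice --- once on the branch entering the body, once on the branch leaving it --- with the two traversals arc-disjoint. By Lemma~\ref{lm:expr-double-count} the number of such arc-disjoint true-then-false pairs of traversals of $E$ is exactly $\tfp{i}(E)$, yielding the term $(\Ft_S + \Cp_S)\,\tfp{i}(E)$ of the $\mathtt{while}$ clause (after the body falls through or executes $\mathtt{continue}$, control returns to the guard and is re-evaluated to false), while a path that enters the body and leaves by $\mathtt{break}$ traverses the guard only once, contributing $\Bp_S\,\tp{i}(E)$, and a path that skips the body contributes $\fp{i}(E)\Ft$. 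What makes the count compositional is Lemma~\ref{lm:controlled-function-body}: since $B$ is controlled, every acyclic subpath visiting the loop-body region must enter through the loop-enter node and leave through the loop-exit node, so no $\mathtt{goto}$ or $\mathtt{switch}$ can route control into the middle of the body, which licenses analysing the body with $\apc{i}{S}$ in isolation and wrapping it with guard factors. I would carry this out by fixing the reference-CFG picture and naming the enter, exit and back nodes; invoking Lemma~\ref{lm:controlled-function-body} to cut an arbitrary acyclic path to the target into its (at most two) body excursions and its guard traversals; reading each fragment's count from the hypothesis (body) and Lemmas~\ref{lm:expr-count} and~\ref{lm:expr-double-count} (guard); and summing to match Definition~\ref{def:apc_i-Stat}. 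The $\mathtt{do-while}$ case runs the same way but with the body first, so the guard contributes $\fp{i}(E)$ after the body, giving $\Ftout = \fp{i}(E)\Ft_S + \Bp_S$.

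The $\Gt$-bookkeeping in every case is routine once one maintains that $\Gt(\mathrm{id})$ counts the acyclic paths of $B$ reaching each $\mathtt{goto}\;\mathrm{id}$ textually preceding the current point: each $\mathtt{goto}$ clause raises exactly its own entry by the current fall-through, and the no-backjumps condition of Definition~\ref{def:controlled-function-body-formal} ensures those $\mathtt{goto}$ statements are already processed when their target label is reached. The main obstacle is precisely the loop analysis: rigorously showing that an acyclic path through a $\mathtt{while}$ or $\mathtt{do-while}$ decomposes into independent body excursions glued by guard traversals, and that the arc-disjointness of the (at most two) guard traversals is captured exactly by $\tfp{i}$ --- yet is irrelevant for the single traversal preceding a $\mathtt{break}$. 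Orchestrating Lemma~\ref{lm:controlled-function-body} (to make the decomposition well founded) with Lemma~\ref{lm:expr-double-count} (to count the arc-disjoint guard pairs) so that the bilinear combinations of Definition~\ref{def:apc_i-Stat} drop out is where the work lies; everything else is essentially mechanical unfolding.
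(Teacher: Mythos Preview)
Your overall plan matches the paper's proof: structural induction on $S$, base cases discharged by unfolding Definition~\ref{def:apc_i-Stat} together with Lemma~\ref{lm:expr-count}; sequential composition, compound and labeled statements by threading the hypothesis (the last via Lemma~\ref{lm:label-paths}); conditionals and $\mathtt{switch}$ by splitting on the guard with Lemma~\ref{lm:expr-count}; and loops by combining Lemmas~\ref{lm:expr-double-count} and~\ref{lm:controlled-function-body}. The case split and the ingredients you name are exactly those the paper uses.

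There is, however, a genuine gap in your loop analysis. You assert that Lemma~\ref{lm:controlled-function-body} shows ``every acyclic subpath visiting the loop-body region must enter through the loop-enter node \emph{and} leave through the loop-exit node, so no $\mathtt{goto}$ or $\mathtt{switch}$ can route control into the middle of the body''. The lemma gives only the \emph{disjunction}: the maximal in-body segment either begins at the enter node \emph{or} ends at the exit node. Definition~\ref{def:controlled-function-body-formal} does permit a $\mathtt{goto}$ or $\mathtt{switch}$ to jump into a loop body, provided that loop contains no $\mathtt{break}$, $\mathtt{return}$, or outward $\mathtt{goto}$; the paper's own $\mathtt{while}$ case explicitly lists ``paths that jump into $S$ from a $\mathtt{goto}\;\mathrm{id}$ or $\mathtt{switch}$ node'' among the contributions to $\Ftout$. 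Consequently your rationale for the terms $\Bp_S\,\tp{i}(E)$ and $(\Ft_S+\Cp_S)\,\tfp{i}(E)$ --- premised on every body excursion being bracketed by a guard traversal --- does not cover the abnormal-entry case, where the guard is traversed only once. Note too that in Definition~\ref{def:apc_i-Stat} the body is invoked as $\apc{i}{S}(\Ft,\St,\Gt)$ with the \emph{while's} own inputs rather than $\tp{i}(E)\Ft$; the controlled condition is precisely what makes this calling convention exact, by forbidding the \emph{conjunction} of abnormal entry and abnormal exit (so that whenever $\Bp_S$ or $\Rp_S$ can be nonzero there are no jump-ins, and conversely). You will need the correct disjunctive reading of Lemma~\ref{lm:controlled-function-body} to make the $\mathtt{while}$ and $\mathtt{do}$--$\mathtt{while}$ bookkeeping go through.
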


\begin{proof}
The proof is by induction on $S$ where each kind of statement
is considered separately.

\proofsec{Expression statement}
by~\eqref{eq:cfg_i-exprStat}, we have
$\gcfg{i}{E;}(t, \tb, \tc, m) = \gcfg{i}{E}(t, t, m)$,
and, by Lemma~\ref{lm:expr-count},
$\apn\bigl(\gcfg{i}{E}(t, t, m)\bigr) = \pp{i}(E)$.
Moreover, as there are no $\mathtt{break}$, $\mathtt{continue}$,
$\mathtt{return}$ or $\mathtt{goto}$ statements inside $S$,
$\bapn(S, B) = \capn(S, B) = \rapn(S, B) = 0$
and, for each $\id \in l(B)$, $\gapn(\id, S, B) = 0$.
Hence, by~\eqref{eq:apc_i-exprStat},
\begin{align*}
  \Ftout &= \Ft \pp{i}(E) \\
         &= \napn(s_B, A_f, s) \apn\bigl(\gcfg{i}{E}(t, t, m)\bigr) \\
         &= \ftapn(t, S, B), \\
     \Bp &= 0 \\
         &= \bapn(S, B), \\
     \Cp &= 0 \\
         &= \capn(S, B), \\
     \Rp &= 0 \\
         &= \rapn(S, B), \\
  \Gtout &= \Gt  \\
         &= \Gt + \lambda \id \in l(B) \st 0 \\
         &= \Gt + \lambda \id \in l(B) \st \gapn(\id, S, B).
  \end{align*}

\proofsec{Return statement}
by~\eqref{eq:cfg_i-returnStat}, $\gcfg{i}{\mathtt{return}}(t, \tb, \tc, m) = \bigl(\{m\}, \emptyset, m\bigl)$.
Hence, for each path in $(s_B, A_B, s)$ to $s = m$ there are $0$ paths to $t$ and $1$ path to the $\mathtt{return}$ node.
Moreover, as there are no $\mathtt{break}$, $\mathtt{continue}$ or $\mathtt{goto}$ statements inside $S$, $\bapn(S, B) = \capn(S, B) = 0$
and, for each $\id \in l(B)$, $\gapn(\id, S, B) = 0$.
Hence, by~\eqref{eq:apc_i-returnStat},
  \begin{align*}
    \Ftout &= 0 \\
             & = \napn\bigl((\{m\}, \emptyset, m), t\bigr) \\
             &= \ftapn(t, S, B), \\
    \Bp &= 0 \\
       &= \bapn(S), \\
    \Cp &= 0 \\
       &= \capn(S), \\
    \Rp &= \Ft \\
       &= \napn(s_B, A_B, s) \\
       &= \rapn(S), \\
    \Gtout &= \Gt  \\
              &= \Gt + \lambda \id \in l(B) \st 0 \\
              &= \Gt + \lambda \id \in l(B) \st \gapn(\id, S, B).
  \end{align*}

\proofsec{Return with expression statement}
  Let $\gcfg{i}{\mathtt{return} \; E}(t, \tb, \tc, m)$
  be defined as in~\eqref{eq:cfg_i-returnExprStat}.
  By Lemma~\ref{lm:expr-count},
  $\pp{i}(E) = \apn\bigl(\gcfg{i}{E}(m, m, m+1)\bigr)$.
  Then, for each path in $(s_B, A_B, s)$ to $s$, there are $0$ paths to $t$
  and $\pp{i}(E)$ paths to a return node.
  Moreover, as there are no $\mathtt{break}$, $\mathtt{continue}$ or
  $\mathtt{goto}$ statements inside $S$, $\bapn(S, B) = \capn(S, B) = 0$ and for each $\id \in l(B)$, $\gapn(\id, S, B) = 0$.
  Hence, by~\eqref{eq:apc_i-returnExprStat},
    Hence,
    \begin{align*}
      \Ftout &= 0 \\
               &= \ftapn(t, S, B), \\
      \Bp &= 0 \\
         &= \bapn(S), \\
      \Cp &= 0 \\
         &= \capn(S), \\
      \Rp &= \Ft \pp{i}(E) \\
         &= \napn(s_B, A_B, s) \napn\bigl(\gcfg{i}{E}(m, m, m+1), m\bigr) \\
         &= \rapn(S), \\
    \Gtout &= \Gt  \\
              &= \Gt + \lambda \id \in l(B) \st 0 \\
              &= \Gt + \lambda \id \in l(B) \st \gapn(\id, S, B).
    \end{align*}

\proofsec{Break statement} By~\eqref{eq:cfg_i-breakStat},
  $\gcfg{i}{\mathtt{break}}(t, \tb, \tc, m) = (\{m, \tb\},\bigl\{(m,\tb)\bigr\}, m)$.
  Then $s = m$ and, for each path in $(s_B, A_B, s)$ to $s$ there are $0$
  acyclic paths to $t$ and $1$ path to the $\mathtt{break}$ node.
  Moreover, as there are no $\mathtt{continue}$, $\mathtt{return}$ or
  $\mathtt{goto}$ statements inside $S$,
  $\capn(S, B) = \rapn(S, B) = 0$ and, for each $\id \in l(B)$,
  $\gapn(\id, S, B) = 0$.
  Hence, by~\eqref{eq:apc_i-breakStat},
  \begin{align*}
    \Ftout &= 0 \\
              &= \ftapn(t, S, B), \\
    \Bp &= \Ft \\
        &= \napn(s_B, A_B, s) \\
        &= \bapn(S), \\
    \Cp &= 0 \\
        &= \capn(S), \\
    \Rp &= 0 \\
        &= \rapn(S), \\
    \Gtout &= \Gt  \\
              &= \Gt + \lambda \id \in l(B) \st 0 \\
              &= \Gt + \lambda \id \in l(B) \st \gapn(\id, S, B).
  \end{align*}

\proofsec{Continue statement}
  By~\eqref{eq:cfg_i-continueStat},
  $\gcfg{i}{\mathtt{continue}}(t, \tb, \tc, m) = (\{m, \tb\},\bigl\{(m,\tc)\bigr\}, m)$.
  Then $s = m$ and, for each path in $(s_B, A_B, s)$ to $s$,
  there are $0$ paths to $t$ and $1$ path to the $\mathtt{continue}$ node.
  Moreover, as there are no $\mathtt{break}$, $\mathtt{return}$ or
  $\mathtt{goto}$ statements inside $S$,
  $\bapn(S, B) = \rapn(S, B) = 0$ and, for each $\id \in l(B)$,
  $\gapn(\id, S, B) = 0$.
  Hence, by~\eqref{eq:apc_i-continueStat},
  \begin{align*}
    \Ftout &= 0 \\
              &= \ftapn(t, S, B) \\
    \Bp &= 0 \\
        &= \bapn(S), \\
    \Cp &= \Ft \\
        &= \napn(s_B, A_B, s) \\
        &= \capn(S), \\
    \Rp &= 0 \\
        &= \rapn(S), \\
    \Gtout &= \Gt  \\
              &= \Gt + \lambda \id \in l(B) \st 0 \\
              &= \Gt + \lambda \id \in l(B) \st \gapn(\id, S, B).
  \end{align*}

\proofsec{Goto statement}
  Let $\gcfg{i}{\mathtt{goto} \; \id}(t, \tb, \tc, m)$
  be defined as in~\eqref{eq:cfg_i-gotoStat}.
  Then, for each path in $(s_B, A_B, s)$ to $s$, there are $0$ paths to $t$
  and $1$ path to $m$ where  $\orig(m) = \id$;
  hence $\gapn(\id, S, B) = \napn(s_B, A_B, s)$.
  Moreover, as there are no $\mathtt{break}$, $\mathtt{continue}$ or
  $\mathtt{return}$ statements inside $S$,
  $\bapn(S, B) = \capn(S, B) = \rapn(S, B) = 0$ and also
  for each $\id_1 \in l(B) \neq \id$, $\gapn(\id_1, S) = 0$.
  Hence, by~\eqref{eq:apc_i-gotoStat},
  \begin{align*}
    \Ftout &= 0 \\
             &= \ftapn(t, S, B), \\
    \Bp &= 0 \\
       &= \bapn(S) \Ft, \\
    \Cp &= 0 \\
       &= \capn(S) \Ft, \\
    \Rp &= 0 \\
       &= \rapn(S) \Ft, \\
    \Gtout &= \Gt\bigl[\bigl(\Gt(\id) + \Ft\bigr)/\id\bigr] \\
              &= \Gt + \lambda \id_1 \in l(B) \st
                    \begin{cases}
                      \Ft,    &\text{if $\id_1 = \id$,} \\
                      0, &\text{otherwise;}
                    \end{cases} \\
              &= \Gt + \lambda \id \in l(B) \st \gapn(\id, S, B).
  \end{align*}

\proofsec{Other statements}
By~\eqref{eq:cfg_i-otherStat},
$\gcfg{i}{\mathrm{\stm}}(t, \tb, \tc, m) = (\{t\},\emptyset, t)$ and $s = t$.
  Moreover, as there are no $\mathtt{break}$, $\mathtt{continue}$,
  $\mathtt{return}$ or
  $\mathtt{goto}$ statements inside $S$,
  $\bapn(S, B) = \capn(S, B) = \rapn(S, B) = 0$ and,
  for each $\id \in l(B)$, $\gapn(\id, S) = 0$.
  Hence, by~\eqref{eq:apc_i-otherStat},
  \begin{align*}
    \Ftout &= \Ft \\
              &= \ftapn(s, S, B) \\
              &= \ftapn(t, S, B), \\
    \Bp &= 0 \\
        &= \bapn(S), \\
    \Cp &= 0 \\
        &= \capn(S), \\
    \Rp &= 0 \\
        &= \rapn(S), \\
    \Gtout &= \Gt  \\
              &= \Gt + \lambda \id \in l(B) \st 0 \\
              &= \Gt + \lambda \id \in l(B) \st \gapn(\id, S, B).
  \end{align*}

\proofsec{Sequential composition}
  Let $\gcfg{i}{S_1 \; S_2}(t, \tb, \tc, m) = (N, A, s)$ be as defined in~\eqref{eq:cfg_i-seqStat}.
  Since $\Gt_1 = \Gt + \lambda \id \in l(B) \st \gapn(\id, S_1, B) = \lambda \id \in l(B) \st \pgapn(\id, S_2, B)$, using the inductive hypothesis on $S_1$, we can use the inductive hypothesis on $S_2$.
  Moreover, because we have $b(s_1, m_1) \cap b(s_2, m) = c(s_1, m_1) \cap c(s_2, m) = r(s_1, m_1) \cap r(s_2, m) = \emptyset$ the paths to $\mathtt{break}$, $\mathtt{continue}$ or $\mathtt{return}$ statements are, respectively,
  $\bapn(S_1 \; S_2, B) = \bapn(S_1) + \bapn(S_2)$,
  $\capn(S_1 \; S_2, B) = \capn(S_1) + \capn(S_2)$, $\rapn(S_1 \; S_2, B) = \rapn(S_1) + \rapn(S_2)$
  and, for each $\id \in l(B)$, since $g(\id, S_1, B) \cap g(\id, S_2, B) = \emptyset$, $\gapn(\id, S_1 \; S_2, B) = \gapn(\id, S_1, B) + \gapn(\id, S_2, B)$.
  Hence, by~\eqref{eq:apc_i-seqStat},
  \begin{align*}
    \Ftout &= \Ft_2 \\
              &= \ftapn(t, S_2, B) \\
              &= \ftapn(t, S_1 \; S_2, B), \\
    \Bp &= \Bp_1 + \Bp_2 \\
        &= \bapn(S_1) + \bapn(S_2) \\
        &= \bapn(S_1 \; S_2), \\
    \Cp &= \Cp_1 + \Cp_2 \\
        &= \capn(S_1) \Ft + \capn(S_2) \Ft_1 \\
        &= \capn(S_1 \; S_2), \\
    \Bp &= \Bp_1 + \Bp_2 \\
        &= \rapn(S_1) \Ft + \rapn(S_2) \Ft_1 \\
        &= \rapn(S_1 \; S_2), \\
    \Gtout &= \Gt_2 \\
              &= \Gt_1 + \lambda \id \in l(B) \st \gapn(\id, S_2, B) \\
              &= \Gt + \lambda \id \in l(B) \st \gapn(\id, S_1, B) + \lambda \id \in l(B) \st \gapn(\id, S_2, B) \\
              &= \Gt + \lambda \id \in l(B) \st \gapn(\id, S_1 \; S_2, B).
  \end{align*}

\proofsec{Conditional statement}
There are three cases:

\begin{description}
\item[$\tv{i}(E) = \true \land \Ms_2 = \emptyset$:]
  then $\tp{i}(E) = 1$ and $\fp{i}(E) = 0$.
  By~\eqref{eq:cfg_i-condStat},
  \[
    \gcfg{i}{\mathtt{if} \; (E) \; S_1 \; \mathtt{else} \; S_2}(t, \tb, \tc, m) = \gcfg{i}{S_1}(t, \tb, \tc, m).
  \]
  Hence $\fp{i}(E) \Ft = 0$ so that $\Ft_2 = 0$.
  Since $S_2$ does not contain any labeled statements,
  it is not possible to jump into $S_2$ so that $\Bp_2 = \Cp_2 = \Rp_2 = 0$
  and, also for each $\id \in l(B)$, $\gapn(\id, S_2, B) = 0$.
  Hence, by~\eqref{eq:apc_i-condStat} and the inductive hypothesis on $S_1$,
   \begin{align*}
    \Ftout &= \Ft_1 + \Ft_2 \\
         &= \Ft_1 \\
         &= \ftapn(t, S_1, B), \\
    \Bp &= \Bp_1 + \Bp_2 \\
       &= \Bp_1 \\
       &= \bapn(S_1) \\
       &= \bapn\bigl(\mathtt{if} \; (E) \; S_1 \; \mathtt{else} \; S_2\bigr), \\
    \Cp &= \Cp_1 + \Cp_2 \\
       &= \Cp_1 \\
       &= \capn(S_1) \tp{i}(E) \\
       &= \capn\bigl(\mathtt{if} \; (E) \; S_1 \; \mathtt{else} \; S_2\bigr), \\
    \Rp &= \Rp_1 + \Rp_2 \\
       &= \Rp_1 \\
       &= \rapn(S_1) \tp{i}(E) \\
       &= \rapn\bigl(\mathtt{if} \; (E) \; S_1 \; \mathtt{else} \; S_2\bigr), \\
    \Gtout &= \Gt_2 \\
          &= \Gt_1 + \lambda \id \in l(B) \st \gapn(\id, S_2, B) \\
          &= \Gt_1 + \lambda \id \in l(B) \st 0 \\
          &= \Gt_1 \\
          &= \Gt + \lambda \id \in l(B) \st \gapn(\id, S_1, B) \\
          &= \Gt + \lambda \id \in l(B) \st \gapn(\id, S, B).
  \end{align*}
\item[$\tv{i}(E) = \false \land \Ms_1 = \emptyset$:]
  the proof is similar to the previous case;
\item[otherwise:]
  Let $\gcfg{i}{\mathtt{if} \; (E) \; S_1 \; \mathtt{else} \; S_2}(t, \tb, \tc, m)$ be as defined in~\eqref{eq:cfg_i-condStat}.
  For each path to $m$ or $m_1$ there is $1$ path to $t$, using the arcs $(m, t)$ or $(m_1, t)$.
  Furthermore, since $\tp{i}(E) \Ft = \ftapn(s_1, S_1, B)$, we can use the inductive hypothesis on $S_1$, and since
  $\fp{i}(E) \Ft = \ftapn(s_2, S_2, B)$ and
  $\Gt_1 = \Gt + \lambda \id \in l(B) \st \gapn(\id, S_1, B) = \lambda \id \in l(B) \st \pgapn(\id, S_2, B)$ we can also use the inductive hypothesis on $S_2$.
  Moreover, as $b(s_1, m_1) \cap b(s_2, m) = c(s_1, m_1) \cap c(s_2, m) = r(s_1, m_1) \cap r(s_2, m) = \emptyset$ the paths to $\mathtt{break}$, $\mathtt{continue}$ or $\mathtt{return}$ statements are
  $\bapn(S_1 \; S_2, B) = \bapn(S_1) + \bapn(S_2)$,
  $\capn(S_1 \; S_2, B) = \capn(S_1) + \capn(S_2)$, $\rapn(S_1 \; S_2, B) = \rapn(S_1) + \rapn(S_2)$
  and for each $\id \in l(B)$, since $g(\id, S_1, B) \cap g(\id, S_2, B) = \emptyset$, $\gapn(\id, S_1 \; S_2) = \gapn(\id, S_1, B) + \gapn(\id, S_2, B)$.
  Hence, by~\eqref{eq:apc_i-condStat} and the inductive hypothesis on $S_1$,
  \begin{align*}
    \Ftout &= \Ft_1 + \Ft_2 \\
         &= \ftapn(m, S_1, B) + \ftapn(m_1, S_2, B) \\
         &= \napn\bigl(N_B, A_f, m\bigr) +  \napn\bigl(N_B, A_f, m_1\bigr) \\
         &= \ftapn(t, B), \\
    \Bp &= \Bp_1 + \Bp_2 \\
        &= \bapn(S_1) + \bapn(S_2) \\
        &= \bapn\bigl(\mathtt{if} \; (E) \; S_1 \; \mathtt{else} \; S_2\bigr), \\
    \Cp &= \Cp_1 + \Cp_2 \\
        &= \capn(S_1) + \capn(S_2) \\
        &= \capn\bigl(\mathtt{if} \; (E) \; S_1 \; \mathtt{else} \; S_2\bigr), \\
    \Bp &= \Bp_1 + \Bp_2 \\
        &= \rapn(S_1) + \rapn(S_2) \\
        &= \rapn\bigl(\mathtt{if} \; (E) \; S_1 \; \mathtt{else} \; S_2\bigr), \\
    \Gtout &= \Gt_2 \\
              &= \Gt_1 + \lambda \id \in l(B) \st \gapn(\id, S_2, B) \\
              &= \Gt + \lambda \id \in l(B) \st \gapn(\id, S_1, B) + \lambda \id \in l(B) \st \gapn(\id, S_2, B) \\
              &= \Gt + \lambda \id \in l(B) \st \gapn\bigl(\id, \mathtt{if} \; (E) \; S_1 \; \mathtt{else} \; S_2, B\bigr).
  \end{align*}
\end{description}

\proofsec{One-armed conditional statement}
There are three cases:

\begin{description}
\item[$\tv{i}(E) = \true$:]
  then $\tp{i}(E) = 1$ and $\fp{i}(E) = 0$.
  By~\eqref{eq:cfg_i-one-armedCondStat},
  \[
    \gcfg{i}{\mathtt{if} \; (E) \; S_1}(t, \tb, \tc, m) = \gcfg{i}{S_1}(t, \tb, \tc, m).
   \]
   For each path in $(N_B, A_B, m)$ to $m$ there is $1$ path to $t$ using the arc $(m, t)$.
  Hence, by~\eqref{eq:apc_i-one-armedCondStat} and the inductive hypothesis on $S_1$,
   \begin{align*}
    \Ftout &= \Ft_1 + \fp{i}(E) \Ft \\
         &= \Ft_1 \\
         &= \ftapn(t, S, B), \\
    \Bp &= \Bp_1 \\
       &= \bapn(S_1) \\
       &= \bapn\bigl(\mathtt{if} \; (E) \; S_1\bigr), \\
    \Cp &= \Cp_1 \\
       &= \capn(S_1) \\
       &= \capn\bigl(\mathtt{if} \; (E) \; S_1\bigr), \\
    \Rp &= \Rp_1 \\
       &= \rapn(S_1) \\
       &= \rapn\bigl(\mathtt{if} \; (E) \; S_1\bigr), \\
    \Gtout &= \Gt_1 \\
          &= \Gt + \lambda \id \in l(B) \st \gapn(\id, S_1, B) \\
          &= \Gt + \lambda \id \in l(B) \st \gapn(\id, S, B).
  \end{align*}
 \item[$\tv{i}(E) = \false \land \Ms_1 = \emptyset$:]
   Then $\tp{i}(E) = 0$ and $\fp{i}(E) = 1$.
   By~\eqref{eq:cfg_i-one-armedCondStat},
   \[
     \gcfg{i}{\mathtt{if} \; (E) \; S_1}(t, \tb, \tc, m) = (\{ t \}, \emptyset, t).
   \]
   Hence $\Ft_1 = 0$.
  Since $S_1$ does not contain any labeled statements,
  it is not possible to jump into $S_1$ so that $\Bp_1 = \Cp_1 = \Rp_1 = 0$
  and, also, for each $\id \in l(B)$, $\gapn(\id, S_1, B) = 0$.
  Hence, by~\eqref{eq:apc_i-one-armedCondStat},
  \begin{align*}
    \Ftout &= \Ft_1 + \fp{i}(E) \Ft \\
         &= \Ft \\
         &= \ftapn(t, S, B), \\
    \Bp &= \Bp_1 \\
       &= \bapn(S_1) \\
       &= 0 \\
       &= \bapn\bigl(\mathtt{if} \; (E) \; S_1\bigr), \\
    \Cp &= \Cp_1 \\
       &= \capn(S_1) \\
       &= 0 \\
       &= \capn\bigl(\mathtt{if} \; (E) \; S_1\bigr), \\
    \Rp &= \Rp_1 \\
       &= \rapn(S_1) \\
       &= 0 \\
       &= \rapn\bigl(\mathtt{if} \; (E) \; S_1\bigr), \\
    \Gtout &= \Gt_1 \\
          &= \Gt + \lambda \id \in l(B) \st \gapn(\id, S_1, B) \\
          &= \Gt + \lambda \id \in l(B) \st 0 \\
          &= \Gt + \lambda \id \in l(B) \st \gapn(\id, S, B).
  \end{align*}
\item[otherwise:]
  Let $\gcfg{i}{\mathtt{if} \; (E) \; S_1}(t, \tb, \tc, m)$ be as defined in~\eqref{eq:cfg_i-one-armedCondStat}.
  By Lemma~\ref{lm:expr-count},
  $\tp{i}(E) = \napn\bigl(\gcfg{i}{E}(s_1, t, m), s_1\bigr)$ and
  $\fp{i}(E) = \napn\bigl(\gcfg{i}{E}(s_1, t, m), t\bigr)$.
  Moreover, for each path from $s$ to $s_1$ in $(\gcfg{i}{E}(s_1, t, m)$,
  there are $\napn\bigl(\gcfg{i}{S_1}(m, \tb, \tc, m+1), m\bigr)$ paths to $t$.
  Since $\tp{i}(E) \Ft = \ftapn(s_1, B)$,
  we can use the inductive hypothesis on $S_1$.
  Hence, by~\eqref{eq:apc_i-one-armedCondStat},
  \begin{align*}
    \Ftout &= \Ft_1 + \fp{i}(E) \Ft \\
         &= \ftapn(m, S_1, B) + \fp{i}(E) \ftapn(s, B) \\
         &= \ftapn(t, S, B), \\
    \Bp &= \Bp_1 \\
        &= \bapn(S_1) \\
        &= \bapn\bigl(\mathtt{if} \; (E) \; S_1\bigr), \\
    \Cp &= \Cp_1 \\
        &= \capn(S_1) \\
        &= \capn\bigl(\mathtt{if} \; (E) \; S_1\bigr), \\
    \Rp &= \Rp_1 \\
        &= \rapn(S_1) \\
        &= \rapn\bigl(\mathtt{if} \; (E) \; S_1\bigr), \\
    \Gtout &= \Gt_1 \\
              &= \Gt + \lambda \id \in l(B) \st \gapn(\id, S_1, B) \\
              &= \Gt + \lambda \id \in l(B) \st \gapn\bigl(\id, \mathtt{if} \; (E) \; S_1, B\bigr).
  \end{align*}
\end{description}

\proofsec{Switch statement}
There are two cases:
\begin{description}
\item[$(\df, n) \in \Ms_1$:]
  Let $\gcfg{i}{\mathtt{switch} \; (E) \; S_1}(t, \tb, \tc, m)$ be as defined in~\eqref{eq:cfg_i-switchStat}.
  By Lemma~\ref{lm:expr-count}, there are $\pp{i}(E)$ paths from $s$ to $m_1$.
  Moreover, since arc $(m_1, s_S) \in A$ only if $\orig(s_S) \in \{\mathtt{case, default}\}$, $\ftapn(s_S, B) = 0$ so that $m_1 \in t(S, B)$.
  In addition to the $\ftapn(m, B)$ paths that fall through to $m$, there are $\bapn(S_1)$ paths that exit from $\mathtt{switch}$ via $\mathtt{break}$ nodes.
  Furthermore, by Definition~\eqref{eq:addnotate-b},
  $b(s, t) = \emptyset$.
  Also $\pp{i}(E) \Ft = \napn(N_B, A_f, m_1) = \stapn(S_1, B)$.
  Hence, by~\eqref{eq:apc_i-switchStat}
  and applying the inductive hypothesis to $S_1$,
  \begin{align*}
    \Ftout &= \Ft_S + \Bp_S \\
         &= \ftapn(m, S, B) + \bapn(S_1, B) \\
         &= \ftapn(t, S, B), \\
    \Bp &= 0 \\
       &= \bapn\bigl(\mathtt{switch} \; (E) \; S_1\bigr), \\
    \Cp &= \Cp_S \\
       &= \capn(S_1) \\
       &= \capn\bigl(\mathtt{switch} \; (E) \; S_1\bigr), \\
    \Rp &= \Rp_S \\
       &= \rapn(S_1) \\
       &= \rapn\bigl(\mathtt{switch} \; (E) \; S_1\bigr), \\
    \Gtout &= \Gt_S \\
          &= \Gt + \lambda \id \in l(B) \st \gapn(\id, S_1, B) \\
          &= \Gt + \lambda \id \in l(B) \st \gapn\bigl(\id, \mathtt{switch} \; (E) \; S_1, B\bigr).
  \end{align*}
\item[$(\df, n) \notin \Ms_1$:]
  Let $\gcfg{i}{\mathtt{switch} \; (E) \; S_1}(t, \tb, \tc, m)$ be as defined in~\eqref{eq:cfg_i-switchStat}.
  Then there is $1$ path from $m_1$ to $m$ using the arc $(m_1, m)$.
  By Lemma~\ref{lm:expr-count}, there are $\pp{i}(E)$ paths to $m_1$.
  Moreover, since arc $(m_1, s_S) \in A$ only if $\orig(s_S) \in \{\mathtt{case, default}\}$, $\ftapn(s_S, B) = 0$.
  In addition to the $\ftapn(m, B)$ paths that fall through to $m$, there are $\bapn(S_1)$ nodes that exit from $\mathtt{switch}$ via $\mathtt{break}$ nodes and $\pp{i}(E) \Ft$ paths that reach $m$ via $(m_1, m)$.
  Furthermore, by Definition~\eqref{eq:addnotate-b},
  $b(s, t) = \emptyset$.
  Also, since $0 = \ftapn(s_S, B)$, $\pp{i}(E) \Ft = \napn(N_B, A_f, m_1) = \stapn(S_1, B)$.
  Hence, by~\eqref{eq:apc_i-switchStat}
  and applying the inductive hypothesis to $S_1$,
  \begin{align*}
    \Ftout &= \Ft_S + \Bp_S + \pp{i}(E) \Ft \\
         &= \ftapn(m, S, B) + \bapn(S_1) + \pp{i}(E) \Ft \\
         &= \ftapn(t, S, B), \\
    \Bp &= 0 \\
       &= \bapn\bigl(\mathtt{switch} \; (E) \; S_1\bigr), \\
    \Cp &= \Cp_1 \\
       &= \capn(S_1) \\
       &= \capn\bigl(\mathtt{switch} \; (E) \; S_1\bigr), \\
    \Rp &= \Rp_1 \\
       &= \rapn(S_1) \\
       &= \rapn\bigl(\mathtt{switch} \; (E) \; S_1\bigr), \\
    \Gtout &= \Gt_1 \\
          &= \Gt + \lambda \id \in l(B) \st \gapn(\id, S_1, B) \\
          &= \Gt + \lambda \id \in l(B) \st \gapn\bigl(\id, \mathtt{switch} \; (E) \; S_1, B\bigr).
  \end{align*}
\end{description}

\proofsec{While statement}
  let $\gcfg{i}{\mathtt{while} \; (E) \; S_1}(t, \tb, \tc, m) = (N, A, s)$ be as defined in~\eqref{eq:cfg_i-whileStat}.
  Then $\orig(m_3) = \mathtt{while \; enter}$ and $orig(m) = \mathtt{while \; exit}$
  so that $b(s, t) = c(s, t) = \emptyset$.
  As $B$ is a controlled function body, by
  Lemma~\ref{lm:controlled-function-body},
  each acyclic path in $(N, A, s)$ to $t$ will
  include the node $m_3$ or $m$.
  Therefore the number of paths to $t$ is the sum of:
  \begin{itemize}
  \item
    paths that go to $t$ directly from $E$ evaluating~$\false$;
  \item
    paths that fall to an entry node for $\cfg(i)(E)(t,f,m)$,
    $E$ evaluating~$\true$, and then exit from $S_1$ via a $\mathtt{break}$ node;
  \item
    paths that fall to $S$ to an entry node for $\cfg(i)(E)(t,f,m)$,
    $E$ evaluating~$\true$ and exit via $m$,
    (possibly via a $\mathtt{continue}$) node) and then
    $E$ evaluating~$\false$;
  \item
    paths that jump into $S$ from a $\mathtt{goto \; \id}$
    or $\mathtt{switch}$ node, exit via $m$,
    (possibly via a $\mathtt{continue}$ node) and then
    $E$ evaluating~$\false$.
  \end{itemize}
  Since $\tp{i}(E) \Ft = \ftapn(m_1, B)$,
  we can use the inductive hypothesis on $S_1$.
  Hence, by~\eqref{eq:apc_i-whileStat},
  letting $T \defeq \tfp{i}(E) / \tp{i}(E)$ if
  $\tp{i}(E) \neq 0$ and $T = 0$ otherwise:
  \begin{align*}
    \Ftout &= \fp{i}(E) \Ft + \Bp_S\tp{i}(E) + (\Ft_S + \Cp_S) \tfp{i}(E) \\
         &= \fp{i}(E) \Ft + \bapn(S_1)\tp{i}(E) + \bigr(\ftapn(m_1, S, B) + \capn(S_1)\bigl)\tfp{i}(E) \\
         &= \fp{i}(E) \Ft + \bapn(S_1)\tp{i}(E) \\
         &\quad \mathord{}
         + \bigr(\ftapn(m_1, S_1, B) + \Ft \sum\limits_{n \in c(s, t)} \napn(m_1, A, n) \bigr) \tfp{i}(E) \\
         &= \fp{i}(E) \Ft + \bapn(S_1)\tp{i}(E) \\
         & \quad \mathord{}
          + \bigr(\ftapn(m_1, S_1, B)
          + \Ft \sum\limits_{n \in c(s, t)} \napn(m_1, A, n\bigr) \bigl) \tfp{i}(E) \\
         &= \ftapn(t, S, B), \\
    \Bp &= 0 \\
       &= \bapn\bigl(\mathtt{while} \; (E) \; S_1\bigr), \\
    \Cp &= 0 \\
       &= \capn\bigl(\mathtt{while} \; (E) \; S_1\bigr), \\
    \Rp &= \Rp_1 \\
       &= \rapn(S_1) \\
       &= \rapn\bigl(\mathtt{while} \; (E) \; S_1\bigr), \\
    \Gtout &= \Gt_1 \\
           &= \Gt + \lambda \id \in l(B) \st \gapn(\id, S_1, B) \\
           &= \Gt + \lambda \id \in l(B) \st \gapn\bigl(\id, \mathtt{while} \; (E) \; S_1\bigr).
  \end{align*}

\proofsec{Do while statement}
  let $\gcfg{i}{\mathtt{do} \; S_1 \; \mathtt{while} \; (E)}(t, \tb, \tc, m)$
  be as defined in~\eqref{eq:cfg_i-doWhileStat}.
  Then $\orig(m_1) = \mathtt{do-while \; enter}$ and
  $\orig(m) = \mathtt{do-while \; exit}$ so that $b(s, t) = c(s, t) = \emptyset$.
  As $B$ is a controlled function body, by
  Lemma~\ref{lm:controlled-function-body},
  each acyclic path in $(N, A, s)$ to $t$ will
  include the nodes $m$ or $m_1$.
  Therefore the only paths to $t$ through $S$ are:
  \begin{itemize}
  \item
    those that fall to $m_1$ and go to $t$ via a $\mathtt{break}$ node; and
  \item
    those that fall through to $m$,
    (possibly via a $\mathtt{continue}$ node) and then $E$ evaluating~$\false$.
  \end{itemize}
  Since $\Ft = \ftapn(m_1, S, B)$, we can use the inductive hypothesis on $S_1$.
  Hence, by~\eqref{eq:apc_i-doWhileStat},
  \begin{align*}
    \Ftout &= \fp{i}(E) \Ft_S + \Bp_S \\
         &= \ftapn(m, S, B) \fp{i}(E) + \bapn(S_1, B) \\
         &= \ftapn(t, S, B), \\
    \Bp &= 0 \\
       &= \bapn\bigl(\mathtt{do} \; S_1 \; \mathtt{while} \; (E)\bigr), \\
    \Cp &= 0 \\
       &= \capn\bigl(\mathtt{do} \; S_1 \; \mathtt{while} \; (E)\bigr), \\
    \Rp &= \Rp_1 \\
       &= \rapn(S_1) \\
       &= \rapn\bigl(\mathtt{do} \; S_1 \; \mathtt{while} \; (E)\bigr), \\
    \Gtout &= \Gt_1 \\
           &= \Gt + \lambda \id \in l(B) \st \gapn(\id, S_1, B) \\
           &= \Gt + \lambda \id \in l(B) \st \gapn\bigl(\id, \mathtt{do} \; S_1 \; \mathtt{while} \; (E)\bigr).
  \end{align*}
 
\proofsec{For statement}
the lemma is true by inductive hypothesis for sequential composition and while statement.

\proofsec{Labeled statement}
let $\gcfg{i}{L \ccol S_1}(t, \tb, \tc, m) = (N, A, s)$ be as defined in~\eqref{eq:cfg_i-labeledStat}.
By Lemma~\ref{lm:label-paths},
$\Ft_L = \ftapn(s_S, S, B)$ so that, by~\eqref{eq:apc_i-labelStat},
lemma is true by the inductive hypothesis on $S_1$.

\proofsec{Compound statement}
by~\eqref{eq:cfg_i-compundStat},
$\bcfg{i}{\{S_1\}}(t, \tb, \tc, m) = \cfg{i}{S_1}(t, \tb, \tc, m)$ so that,
by~\eqref{eq:apc_i-compStat},
the lemma is true by inductive hypothesis on $S_1$.
\qed
\end{proof}

\begin{corollary}
\label{cor:body-paths}
Let $B \in \Stm$ be a full C controlled function body,
if $\apc{i}{B}(1, 0, \lambda \id \in l(B) \st0) = (\Ftout, \Bp, \Cp, \Rp, \Gtout)$ then
$\apn(\bodycfg{i}{B}) = \Ftout + \Rp$
\end{corollary}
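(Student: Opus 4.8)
The plan is to derive the corollary as essentially a single instantiation of Lemma~\ref{lm:stat-paths} applied to the whole body $S = B$, after (a) checking that the initial arguments $(1, 0, \lambda \id \in l(B) \st 0)$ of $\apc{i}{B}$ are precisely the values demanded by that lemma's hypotheses, and (b) identifying the exit nodes of $\bodycfg{i}{B}$ as the normal‑termination node together with the $\mathtt{return}$ nodes, so that the sum defining $\apn$ splits exactly into the $\ftapn$‑ and $\rapn$‑counts returned by the lemma.

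First I would fix notation, writing $\bodycfg{i}{B} = (N_B, A_B, s_B)$ and recalling from Definition~\ref{def:reference-cfg} that this graph is obtained from $\cfg{i}{B}(0, \bot, \bot, 1)$ by adjoining the $\mathtt{goto}$ arcs; under the standing convention that $A_B$ already denotes the full arc set, the normal‑termination target is node $0$ and $0$ is the least node label. Then I would verify the three hypotheses of Lemma~\ref{lm:stat-paths} for $S = B$: (i) $\Ft = \ftapn(s_B, B, B) = 1$, because $s_B$ is the entry node and has no predecessor, so $\ftapn(s_B, B, B) = \napn(s_B, A_f, s_B) = 1$ regardless of $A_f$; (ii) $\St = \stapn(B, B) = 0$, since no $\mathtt{switch}$ lies outside the entire body, so $t(B, B) = \emptyset$ (and in any case $\St$ is never consulted, $B$ not being a $\mathtt{case}$/$\mathtt{default}$‑labelled statement); (iii) $\Gt(\id) = \pgapn(\id, B, B) = 0$ for each $\id \in l(B)$, since no $\mathtt{goto} \; \id$ statement occurs before the whole body. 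The hypotheses $\tb = \tc = \bot$ are harmless here, since a controlled function body contains no $\mathtt{break}$ or $\mathtt{continue}$ outside loops and $\mathtt{switch}$ statements. Lemma~\ref{lm:stat-paths} then gives $\apc{i}{B}(1, 0, \lambda \id \in l(B) \st 0) = (\Ftout, \Bp, \Cp, \Rp, \Gtout)$ with $\Ftout = \ftapn(0, B, B)$ and $\Rp = \rapn(B, B)$.

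Next I would characterise $e(\bodycfg{i}{B})$. Inspecting the clauses of Definition~\ref{def:reference-cfg}, the only fresh nodes created without an outgoing arc are the $\mathtt{return}$ nodes (clauses~\eqref{eq:cfg_i-returnStat} and~\eqref{eq:cfg_i-returnExprStat}), while every other construct either is transparent on its target or routes that target onward; adding the $\mathtt{goto}$ arcs only adds arcs, never removes them. Hence the only sink nodes of $\bodycfg{i}{B}$ are node $0$ and the members of $r(s_B, 0)$, i.e.\ $e(\bodycfg{i}{B}) = \{0\} \union r(s_B, 0)$ (unreachable $\mathtt{return}$ nodes are included on both sides and contribute $0$). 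Since $0$ is the least node label, the restricted arc set $A_f$ of~\eqref{eq:ftapn} contains every arc into $0$, so $\ftapn(0, B, B) = \napn(s_B, A_B, 0) = \napn(\bodycfg{i}{B}, 0)$; and $\rapn(B, B) = \sum_{n \in r(s_B, 0)} \napn(\bodycfg{i}{B}, n)$ by definition. Therefore
\begin{align*}
  \apn\bigl(\bodycfg{i}{B}\bigr)
    &= \sum_{n \in e(\bodycfg{i}{B})} \napn\bigl(\bodycfg{i}{B}, n\bigr) \\
    &= \napn\bigl(\bodycfg{i}{B}, 0\bigr) + \sum_{n \in r(s_B, 0)} \napn\bigl(\bodycfg{i}{B}, n\bigr) \\
    &= \ftapn(0, B, B) + \rapn(B, B) = \Ftout + \Rp.
\end{align*}

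The substantive work is entirely inside Lemma~\ref{lm:stat-paths}, which is assumed; the remaining obstacles in this corollary are the two reconciliations above, both of which are bookkeeping against Definition~\ref{def:reference-cfg} and the auxiliary notation rather than requiring new ideas. Concretely, the main point to get right is that the exit‑node set of $\bodycfg{i}{B}$ is exactly $\{0\}$ together with the $\mathtt{return}$ nodes --- in particular that the synthetic loop/$\mathtt{switch}$ enter–exit nodes, the $\mathtt{break}$/$\mathtt{continue}$ nodes, and the (spurious) forward arc leaving each $\mathtt{goto}$ node never create an additional sink --- and that $\ftapn(0, B, B)$ and $\rapn(B, B)$, whose definitions use restricted arc sets and the $A_B$‑convention, genuinely coincide with the plain $\napn$‑counts in the assembled CFG with its $\mathtt{goto}$ arcs.
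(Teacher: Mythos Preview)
Your proposal is correct and follows essentially the same approach as the paper: both instantiate Lemma~\ref{lm:stat-paths} with $S = B$, verify that the initial arguments $(1,0,\lambda\id.0)$ match $\ftapn(s_B,B,B)$, $\stapn(B,B)$, and $\pgapn(\id,B,B)$, and then identify $\Ftout + \Rp$ with the total acyclic path count. Your write-up is more explicit than the paper's about characterising the exit-node set and about why $\ftapn(0,B,B)$ coincides with $\napn(\bodycfg{i}{B},0)$ (the paper simply asserts this and the final sum), but the underlying argument is the same.
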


\begin{proof}
Let $\bodycfg{i}{B} = (N, A, s)$;
then $\ftapn(s, B, B) = \napn(s, A, s) = 1$,
$t(B, B) = \emptyset$, and, for all $\id \in l(B)$, $\pgapn(\id, B, B) = 0$.
Concluding, by Lemma~\ref{lm:stat-paths},
$\Ftout = \ftapn(0, B, B) = \apn(N, A, 0)$
as $\orig(0) \notin \{\mathtt{case, default, goto \; \id}\}$
and $\Rp = \rapn(B,B)$.
Then the number of paths leading to an exit node are $\Ftout + \Rp$.
\qed
\end{proof}

\apniscorrect*
\begin{proof}
  By Definition~\ref{def:apc_i-Fullbody} and Corollary~\ref{cor:body-paths}:
  \begin{align*}
    \bodyapc{i}{B}
    &= \Ftout + \Rp \\
    &= \apn\bigl(\bodycfg{i}{B}\bigr).
  \end{align*}
  \qed
\end{proof}

\section{Example Reference CFGs}
\label{app:example-reference-cfgs}

We now illustrate the CFGs built according to
Definition~\ref{def:reference-cfg} by means of examples.
In the examples, all expressions represented by `$\cdot$'
are assumed to be non-constants and to have a trivial control flow.
Similarly, all statements represented by `$\mathord{-}$'
are assumed to result into a basic block, i.e., a single node in the CFG.
Moreover, the CFGs have been simplified by removing all nodes and arcs
that are unreachable from the entry node, which is represented by
a diamond-shaped box.  Exit nodes are emphasized by being enclosed
into double circles.
All the drawings have been obtained automatically from an executable
version of Definition~\ref{def:reference-cfg}.
\begin{figure}
  \centering

\begin{tikzpicture}[>=latex,line join=bevel,]
\node (11) at (163.0bp,242.0bp) [draw,ellipse] {11};
  \node (10) at (171.0bp,170.0bp) [draw,ellipse] {10};
  \node (13) at (97.0bp,386.0bp) [draw,diamond] {13};
  \node (12) at (125.0bp,314.0bp) [draw,ellipse] {12};
  \node (3) at (27.0bp,98.0bp) [draw,ellipse] {3};
  \node (2) at (63.0bp,22.0bp) [draw,circle, double] {2};
  \node (5) at (99.0bp,170.0bp) [draw,ellipse] {5};
  \node (4) at (99.0bp,98.0bp) [draw,ellipse] {4};
  \node (7) at (207.0bp,22.0bp) [draw,circle, double] {7};
  \node (9) at (243.0bp,98.0bp) [draw,ellipse] {9};
  \node (8) at (171.0bp,98.0bp) [draw,ellipse] {8};
  \draw [->] (11) ..controls (165.86bp,215.98bp) and (166.92bp,206.71bp)  .. (10);
  \draw [->] (10) ..controls (195.75bp,144.94bp) and (209.52bp,131.55bp)  .. (9);
  \draw [->] (10) ..controls (171.0bp,143.98bp) and (171.0bp,134.71bp)  .. (8);
  \draw [->] (13) ..controls (105.84bp,362.89bp) and (110.42bp,351.45bp)  .. (12);
  \draw [->] (11) ..controls (140.69bp,216.6bp) and (129.17bp,203.99bp)  .. (5);
  \draw [->] (5) ..controls (99.0bp,143.98bp) and (99.0bp,134.71bp)  .. (4);
  \draw [->] (12) ..controls (117.43bp,271.67bp) and (109.29bp,227.21bp)  .. (5);
  \draw [->] (13) ..controls (92.211bp,359.21bp) and (89.961bp,344.81bp)  .. (89.0bp,332.0bp) .. controls (85.46bp,284.83bp) and (90.945bp,229.82bp)  .. (5);
  \draw [->] (12) ..controls (138.42bp,288.28bp) and (144.16bp,277.71bp)  .. (11);
  \draw [->] (3) ..controls (39.197bp,71.929bp) and (44.473bp,61.083bp)  .. (2);
  \draw [->] (9) ..controls (230.8bp,71.929bp) and (225.53bp,61.083bp)  .. (7);
  \draw [->] (5) ..controls (74.25bp,144.94bp) and (60.476bp,131.55bp)  .. (3);
  \draw [->] (8) ..controls (183.2bp,71.929bp) and (188.47bp,61.083bp)  .. (7);
  \draw [->] (4) ..controls (86.803bp,71.929bp) and (81.527bp,61.083bp)  .. (2);
\end{tikzpicture}
\caption{CFG for $\bigl\{\, \mathtt{if} \; (\cdot \cand \cdot \cand \cdot) \; \mathtt{return} \; (\cdot \cqmk 0 \ccol 1); \; \mathtt{else} \; \mathtt{return} \; (\cdot \cqmk 0 \ccol 1); \,\bigr\} $: $8$~acyclic paths}
\label{fig:cfg4}

\end{figure}
Figure~\ref{fig:cfg4} shows the CFG generated by a command containing
\texttt{return} statements and branching expressions.
\begin{figure}
  \centering

\begin{tikzpicture}[>=latex,line join=bevel,]
\node (0) at (22.0bp,22.0bp) [draw,circle, double] {0};
  \node (4) at (77.0bp,386.0bp) [draw,ellipse] {4};
  \node (7) at (50.0bp,170.0bp) [draw,ellipse] {7};
  \node (6) at (50.0bp,98.0bp) [draw,ellipse] {6};
  \node (9) at (50.0bp,314.0bp) [draw,diamond] {9};
  \node (8) at (50.0bp,242.0bp) [draw,ellipse] {8};
  \draw [->] (7) ..controls (50.0bp,143.98bp) and (50.0bp,134.71bp)  .. (6);
  \draw [->] (6) ..controls (40.533bp,71.98bp) and (36.691bp,61.827bp)  .. (0);
  \draw [->] (9) ..controls (31.04bp,292.23bp) and (19.035bp,276.37bp)  .. (14.0bp,260.0bp) .. controls (-7.9472bp,188.63bp) and (5.2197bp,99.426bp)  .. (0);
  \draw [->] (9) ..controls (50.0bp,287.98bp) and (50.0bp,278.71bp)  .. (8);
  \draw [->] (4) ..controls (67.129bp,359.41bp) and (62.781bp,348.14bp)  .. (9);
  \draw [->] (8) ..controls (50.0bp,215.98bp) and (50.0bp,206.71bp)  .. (7);
  \draw [->] (7) ..controls (71.961bp,195.9bp) and (81.619bp,209.75bp)  .. (86.0bp,224.0bp) .. controls (99.972bp,269.44bp) and (91.449bp,325.43bp)  .. (4);
\end{tikzpicture}
\caption{CFG for $\bigl\{\, \mathtt{while} \; (\cdot) \; \mathtt{if} \; (\cdot) \; \mathtt{break}; \; \mathtt{else} \; \mathtt{continue}; \,\bigr\} $: $3$~acyclic paths}
\label{fig:cfg15}

\end{figure}
Figure~\ref{fig:cfg15} shows the effect of $\mathtt{break}$
and $\mathtt{continue}$ in a $\mathtt{while}$ statement.
\begin{figure}
  \centering

\begin{tikzpicture}[>=latex,line join=bevel,]
\node (1) at (116.0bp,386.0bp) [draw,ellipse] {1};
  \node (0) at (22.0bp,22.0bp) [draw,circle, double] {0};
  \node (3) at (110.0bp,22.0bp) [draw,ellipse] {3};
  \node (5) at (74.0bp,170.0bp) [draw,ellipse] {5};
  \node (4) at (55.0bp,98.0bp) [draw,ellipse] {4};
  \node (7) at (84.0bp,314.0bp) [draw,diamond] {7};
  \node (6) at (29.0bp,242.0bp) [draw,ellipse] {6};
  \draw [->] (7) ..controls (81.17bp,272.81bp) and (78.009bp,227.92bp)  .. (5);
  \draw [->] (7) ..controls (67.735bp,292.3bp) and (56.867bp,278.47bp)  .. (6);
  \draw [->] (3) ..controls (128.37bp,66.468bp) and (148.0bp,120.81bp)  .. (148.0bp,169.0bp) .. controls (148.0bp,243.0bp) and (148.0bp,243.0bp)  .. (148.0bp,243.0bp) .. controls (148.0bp,284.19bp) and (135.03bp,330.49bp)  .. (1);
  \draw [->] (5) ..controls (67.254bp,144.14bp) and (64.647bp,134.54bp)  .. (4);
  \draw [->] (6) ..controls (24.729bp,199.52bp) and (20.538bp,154.5bp)  .. (19.0bp,116.0bp) .. controls (18.362bp,100.01bp) and (18.579bp,95.994bp)  .. (19.0bp,80.0bp) .. controls (19.22bp,71.642bp) and (19.621bp,62.601bp)  .. (0);
  \draw [->] (6) ..controls (44.85bp,216.34bp) and (51.973bp,205.26bp)  .. (5);
  \draw [->] (1) ..controls (104.2bp,359.19bp) and (98.9bp,347.59bp)  .. (7);
  \draw [->] (4) ..controls (43.888bp,72.082bp) and (39.154bp,61.466bp)  .. (0);
  \draw [->] (5) ..controls (83.485bp,141.63bp) and (87.82bp,128.15bp)  .. (91.0bp,116.0bp) .. controls (96.781bp,93.908bp) and (101.87bp,68.483bp)  .. (3);
  \draw [->] (4) ..controls (73.868bp,71.614bp) and (83.789bp,58.266bp)  .. (3);
\end{tikzpicture}
\caption{CFG for $\bigl\{\, \mathtt{while} \; ((\cdot \cor \cdot) \cand (\cdot \cor \cdot)) \; \mathord{-} \,\bigr\} $: $7$~acyclic paths}
\label{fig:cfg18}

\end{figure}
\begin{figure}
  \centering

\begin{tikzpicture}[>=latex,line join=bevel,]
\node (0) at (158.0bp,22.0bp) [draw,circle, double] {0};
  \node (3) at (57.0bp,22.0bp) [draw,ellipse] {3};
  \node (2) at (74.0bp,386.0bp) [draw,ellipse] {2};
  \node (5) at (93.0bp,170.0bp) [draw,ellipse] {5};
  \node (4) at (131.0bp,98.0bp) [draw,ellipse] {4};
  \node (7) at (74.0bp,314.0bp) [draw,diamond] {7};
  \node (6) at (93.0bp,242.0bp) [draw,ellipse] {6};
  \draw [->] (3) ..controls (51.471bp,79.282bp) and (44.248bp,178.49bp)  .. (57.0bp,260.0bp) .. controls (58.563bp,269.99bp) and (61.688bp,280.67bp)  .. (7);
  \draw [->] (7) ..controls (80.196bp,290.17bp) and (83.102bp,279.46bp)  .. (6);
  \draw [->] (5) ..controls (106.42bp,144.28bp) and (112.16bp,133.71bp)  .. (4);
  \draw [->] (2) ..controls (124.95bp,351.46bp) and (186.0bp,302.34bp)  .. (186.0bp,243.0bp) .. controls (186.0bp,243.0bp) and (186.0bp,243.0bp)  .. (186.0bp,169.0bp) .. controls (186.0bp,128.14bp) and (175.29bp,81.971bp)  .. (0);
  \draw [->] (6) ..controls (93.0bp,215.98bp) and (93.0bp,206.71bp)  .. (5);
  \draw [->] (3) ..controls (29.767bp,64.832bp) and (0.0000bp,118.66bp)  .. (0.0bp,169.0bp) .. controls (0.0bp,243.0bp) and (0.0bp,243.0bp)  .. (0.0bp,243.0bp) .. controls (0.0bp,289.41bp) and (31.466bp,335.67bp)  .. (2);
  \draw [->] (2) ..controls (74.0bp,359.98bp) and (74.0bp,350.71bp)  .. (7);
  \draw [->] (5) ..controls (82.595bp,126.8bp) and (70.921bp,79.458bp)  .. (3);
  \draw [->] (4) ..controls (140.13bp,71.98bp) and (143.83bp,61.827bp)  .. (0);
  \draw [->] (4) ..controls (105.57bp,71.567bp) and (90.669bp,56.669bp)  .. (3);
\end{tikzpicture}
\caption{CFG for $\bigl\{\, \mathtt{do} \;\mathord{-}\; \mathtt{while} \; ((\cdot \cor \cdot) \cand (\cdot \cor \cdot)) \,\bigr\} $: $3$~acyclic paths}
\label{fig:cfg28}
\end{figure}
Figures~\ref{fig:cfg18} and~\ref{fig:cfg28} show the difference between
$\mathtt{while}$ and $\mathtt{do-while}$ statements:
for $\mathtt{do-while}$ the backward arc that generates the loop cannot be crossed
in an acyclic path, whereas this is allowed in $\mathtt{while}$ statements as the guard
expression can be evaluated twice also in acyclic paths.
\begin{figure}
  \centering

\begin{tikzpicture}[>=latex,line join=bevel,]
\node (11) at (53.029bp,530.0bp) [draw,diamond] {11};
  \node (10) at (53.029bp,458.0bp) [draw,ellipse] {10};
  \node (1) at (40.029bp,98.0bp) [draw,ellipse] {1};
  \node (0) at (103.03bp,22.0bp) [draw,circle, double] {0};
  \node (2) at (31.029bp,170.0bp) [draw,ellipse] {2};
  \node (5) at (48.029bp,242.0bp) [draw,ellipse] {5};
  \node (4) at (103.03bp,170.0bp) [draw,ellipse] {4};
  \node (7) at (53.029bp,386.0bp) [draw,ellipse] {7};
  \node (6) at (53.029bp,314.0bp) [draw,ellipse] {6};
  \node (9) at (144.03bp,386.0bp) [draw,ellipse] {9};
  \node (8) at (158.03bp,242.0bp) [draw,ellipse] {8};
  \draw [->] (1) ..controls (61.366bp,71.937bp) and (72.879bp,58.414bp)  .. (0);
  \draw [->] (9) ..controls (148.1bp,343.67bp) and (152.49bp,299.21bp)  .. (8);
  \draw [->] (10) ..controls (83.872bp,433.27bp) and (103.34bp,418.3bp)  .. (9);
  \draw [->] (7) ..controls (53.029bp,359.98bp) and (53.029bp,350.71bp)  .. (6);
  \draw [->] (8) ..controls (151.13bp,205.89bp) and (145.14bp,176.85bp)  .. (139.03bp,152.0bp) .. controls (130.64bp,117.92bp) and (119.63bp,79.172bp)  .. (0);
  \draw [->] (11) ..controls (53.029bp,503.98bp) and (53.029bp,494.71bp)  .. (10);
  \draw [->] (2) ..controls (34.199bp,144.35bp) and (35.397bp,135.03bp)  .. (1);
  \draw [->] (10) ..controls (31.303bp,432.03bp) and (21.656bp,418.17bp)  .. (17.029bp,404.0bp) .. controls (-7.8182bp,327.92bp) and (-1.6774bp,302.85bp)  .. (12.029bp,224.0bp) .. controls (13.617bp,214.86bp) and (16.638bp,205.22bp)  .. (2);
  \draw [->] (6) ..controls (51.242bp,287.98bp) and (50.58bp,278.71bp)  .. (5);
  \draw [->] (6) ..controls (70.615bp,287.22bp) and (78.866bp,273.32bp)  .. (84.029bp,260.0bp) .. controls (91.773bp,240.02bp) and (96.644bp,216.23bp)  .. (4);
  \draw [->] (5) ..controls (42.021bp,216.26bp) and (39.728bp,206.82bp)  .. (2);
  \draw [->] (4) ..controls (103.03bp,128.12bp) and (103.03bp,84.348bp)  .. (0);
  \draw [->] (10) ..controls (53.029bp,431.98bp) and (53.029bp,422.71bp)  .. (7);
\end{tikzpicture}
\caption{CFG for $\bigl\{\, \mathtt{switch} \; (\cdot) \; \bigl\{\, \mathtt{case} \, 1\mathord{:} \, \bigl\{\, \mathord{-} \; \mathtt{break}; \,\bigr\}  \; \mathtt{case} \, 2\mathord{:} \, \mathtt{if} \; (\cdot) \; \mathord{-} \; \mathtt{else} \; \bigl\{\, \mathord{-} \; \mathtt{break}; \,\bigr\}  \; \mathtt{default}\mathord{:} \; \mathord{-} \,\bigr\}  \,\bigr\} $: $4$~acyclic paths}
\label{fig:cfg29}
\end{figure}
\begin{figure}
  \centering

\begin{tikzpicture}[>=latex,line join=bevel,]
\node (11) at (110.0bp,530.0bp) [draw,diamond] {11};
  \node (10) at (110.0bp,458.0bp) [draw,ellipse] {10};
  \node (1) at (165.0bp,170.0bp) [draw,ellipse] {1};
  \node (0) at (144.0bp,94.0bp) [draw,circle, double] {0};
  \node (3) at (165.0bp,242.0bp) [draw,ellipse] {3};
  \node (5) at (165.0bp,386.0bp) [draw,ellipse] {5};
  \node (4) at (165.0bp,314.0bp) [draw,ellipse] {4};
  \node (7) at (55.0bp,94.0bp) [draw,ellipse] {7};
  \node (6) at (220.0bp,18.0bp) [draw,ellipse] {6};
  \node (9) at (55.0bp,314.0bp) [draw,ellipse] {9};
  \node (8) at (55.0bp,170.0bp) [draw,ellipse] {8};
  \draw [->] (1) ..controls (157.96bp,144.19bp) and (155.17bp,134.35bp)  .. (0);
  \draw [->] (9) ..controls (55.0bp,271.67bp) and (55.0bp,227.21bp)  .. (8);
  \draw [->] (10) ..controls (94.154bp,416.09bp) and (76.459bp,370.4bp)  .. (9);
  \draw [->] (7) ..controls (105.59bp,70.311bp) and (156.04bp,47.685bp)  .. (6);
  \draw [->] (3) ..controls (128.33bp,217.67bp) and (102.24bp,201.06bp)  .. (8);
  \draw [->] (10) ..controls (173.88bp,429.53bp) and (258.0bp,383.72bp)  .. (258.0bp,315.0bp) .. controls (258.0bp,315.0bp) and (258.0bp,315.0bp)  .. (258.0bp,169.0bp) .. controls (258.0bp,124.54bp) and (242.12bp,74.881bp)  .. (6);
  \draw [->] (5) ..controls (165.0bp,359.98bp) and (165.0bp,350.71bp)  .. (4);
  \draw [->] (8) ..controls (55.0bp,143.06bp) and (55.0bp,132.16bp)  .. (7);
  \draw [->] (10) ..controls (110.0bp,413.29bp) and (110.0bp,360.11bp)  .. (110.0bp,315.0bp) .. controls (110.0bp,315.0bp) and (110.0bp,315.0bp)  .. (110.0bp,241.0bp) .. controls (110.0bp,199.79bp) and (123.01bp,153.71bp)  .. (0);
  \draw [->] (6) ..controls (220.0bp,63.942bp) and (220.0bp,120.82bp)  .. (220.0bp,169.0bp) .. controls (220.0bp,243.0bp) and (220.0bp,243.0bp)  .. (220.0bp,243.0bp) .. controls (220.0bp,283.45bp) and (216.88bp,294.8bp)  .. (201.0bp,332.0bp) .. controls (196.61bp,342.28bp) and (190.12bp,352.66bp)  .. (5);
  \draw [->] (3) ..controls (165.0bp,215.98bp) and (165.0bp,206.71bp)  .. (1);
  \draw [->] (11) ..controls (110.0bp,503.98bp) and (110.0bp,494.71bp)  .. (10);
  \draw [->] (10) ..controls (59.324bp,422.73bp) and (0.0bp,373.63bp)  .. (0.0bp,315.0bp) .. controls (0.0bp,315.0bp) and (0.0bp,315.0bp)  .. (0.0bp,241.0bp) .. controls (0.0bp,200.55bp) and (3.7416bp,189.46bp)  .. (19.0bp,152.0bp) .. controls (23.665bp,140.55bp) and (30.612bp,128.84bp)  .. (7);
  \draw [->] (4) ..controls (165.0bp,287.98bp) and (165.0bp,278.71bp)  .. (3);
  \draw [->] (10) ..controls (129.21bp,432.55bp) and (138.66bp,420.52bp)  .. (5);
\end{tikzpicture}
\caption{CFG for $\bigl\{\, \mathtt{switch} \; (\cdot) \; \mathtt{case} \, 0\mathord{:} \, \mathtt{do} \;\bigl\{\, \mathord{-} \; \mathtt{case} \, 1\mathord{:} \, \mathord{-} \; \mathtt{case} \, 2\mathord{:} \, \mathord{-} \; \mathtt{case} \, 3\mathord{:} \, \mathord{-} \,\bigr\} \; \mathtt{while} \; (\cdot) \,\bigr\} $: $5$~acyclic paths}
\label{fig:cfg27}
\end{figure}
Figures~\ref{fig:cfg29} and~\ref{fig:cfg27} illustrates CFGs generated from
\texttt{switch} statements: the former shows the effect of
\texttt{break} statements
in switches, the latter is a reduced version of Duff's device.
\begin{figure}
  \centering

\begin{tikzpicture}[>=latex,line join=bevel,]
\node (11) at (27.575bp,386.0bp) [draw,diamond] {11};
  \node (10) at (55.575bp,314.0bp) [draw,ellipse] {10};
  \node (1) at (27.575bp,98.0bp) [draw,ellipse] {1};
  \node (0) at (27.575bp,22.0bp) [draw,circle, double] {0};
  \node (7) at (27.575bp,170.0bp) [draw,ellipse] {7};
  \node (8) at (27.575bp,242.0bp) [draw,ellipse] {8};
  \draw [->] (1) ..controls (27.575bp,72.165bp) and (27.575bp,62.878bp)  .. (0);
  \draw [->] (11) ..controls (36.418bp,362.89bp) and (40.996bp,351.45bp)  .. (10);
  \draw [->] (10) ..controls (45.661bp,288.22bp) and (41.631bp,278.14bp)  .. (8);
  \draw [->] (11) ..controls (22.785bp,359.21bp) and (20.536bp,344.81bp)  .. (19.575bp,332.0bp) .. controls (18.377bp,316.04bp) and (18.377bp,311.96bp)  .. (19.575bp,296.0bp) .. controls (20.211bp,287.52bp) and (21.413bp,278.34bp)  .. (8);
  \draw [->] (7) ..controls (27.575bp,143.98bp) and (27.575bp,134.71bp)  .. (1);
  \draw [->] (8) ..controls (27.575bp,215.98bp) and (27.575bp,206.71bp)  .. (7);
\end{tikzpicture}
\caption{CFG for $\bigl\{\, \mathtt{if} \; (\cdot) \; \mathtt{goto} \; \mathtt{l1}; \; \mathtt{else} \; \mathtt{l1}\mathord{:} \;\mathtt{goto} \; \mathtt{l2}; \; \mathtt{while} \; (\cdot) \; \mathord{-} \; \mathtt{l2}\mathord{:} \;\mathord{-} \,\bigr\} $: $2$~acyclic paths}
\label{fig:cfg24}
\end{figure}
Figure~\ref{fig:cfg24} shows the CFG genereted for a program containing
a nasty use of \texttt{goto} statements, which is perfectly legal in C:
jumping from one of the branches to the other in if-then-else statements.


\begin{thebibliography}{10}

\bibitem{BangAB15}
L.~Bang, A.~Aydin, and T.~Bultan.
\newblock Automatically computing path complexity of programs.
\newblock In {\em Proceedings of the 2015 10th Joint Meeting on Foundations of
  Software Engineering}, ESEC/FSE 2015, pages 61--72, Bergamo, Italy, 2015.
  ACM.

\bibitem{CurtisSMBL79}
B.~Curtis, S.~B. Sheppard, P.~Milliman, M.~A. Borst, and T.~Love.
\newblock Measuring the psychological complexity of software maintenance tasks
  with the {Halstead} and {McCabe} metrics.
\newblock {\em IEEE Transactions on Software Engineering}, 5(2):96--104, 1979.

\bibitem{Evangelist84b}
W.~M. Evangelist.
\newblock An analysis of control flow complexity.
\newblock In {\em Proceedings of the IEEE Computer Society's Eighth
  International Computer Software \& Applications Conference (COMPSAC 84)},
  pages 388--396, Americana Congress Hotel, Chicago, IL, USA, 1984. IEEE
  Computer Society Presss.

\bibitem{GillK90}
G.~K. Gill and C.~F. Kemerer.
\newblock Cyclomatic complexity metrics revisited: An empirical study of
  software development and maintenance.
\newblock Technical Report CISR WP No~216, Sloan WP No~3222-90, Center for
  Information Systems Research, Sloan School of Management, Massachusetts
  Institute of Technology, Cambridge, MA, USA, 1990.

\bibitem{GillK91}
G.~K. Gill and C.~F. Kemerer.
\newblock Cyclomatic complexity density and software maintenance productivity.
\newblock {\em IEEE Transactions on Software Engineering}, 17(12):1284--1288,
  1991.

\bibitem{HailpernS02}
B.~Hailpern and P.~Santhanam.
\newblock Software debugging, testing, and verification.
\newblock {\em IBM Systems Journal}, 41(1):4--12, 2002.

\bibitem{Harrison84}
W.~A. Harrison.
\newblock Applying {Mccabe's} complexity measure to multiple-exit programs.
\newblock {\em Software: Practice and Experience}, 14(10):1004--1007, 1984.

\bibitem{Johnson75}
D.~B. Johnson.
\newblock Finding all the elementary circuits of a directed graph.
\newblock {\em SIAM Journal on Computing}, 4(1):77--84, 1975.

\bibitem{HIS-SCM-2008}
H.~Kuder.
\newblock {HIS} source code metrics.
\newblock Technical Report HIS-SC-Metriken.1.3.1-e, Herstellerinitiative
  Software, April 2008.
\newblock Version 1.3.1.

\bibitem{McCabe76}
T.~J. McCabe.
\newblock A complexity measure.
\newblock {\em IEEE Transactions on Software Engineering}, 2(4):308--320, 1976.

\bibitem{MISRA-C-2023}
{MISRA}.
\newblock {\em {MISRA C:2023} --- Guidelines for the use of the {C} language
  critical systems}.
\newblock The MISRA Consortium Limited, Norwich, Norfolk, NR3 1RU, UK, April
  2023.
\newblock Third edition, Second revision.

\bibitem{Myers79}
G.~J. Myers.
\newblock {\em Art of Software Testing}.
\newblock John Wiley \& Sons, Inc., New York, NY, USA, 1979.

\bibitem{Nejmeh88}
B.~A. Nejmeh.
\newblock {NPATH}: A measure of execution path complexity and its applications.
\newblock {\em Communications of the ACM}, 31(2):188--200, 1988.

\bibitem{Shepperd88}
M.~Shepperd.
\newblock A critique of cyclomatic complexity as a software metric.
\newblock {\em Software Engineering Journal}, 3(2):30--36, 1988.

\bibitem{Valiant79}
L.~G. Valiant.
\newblock The complexity of enumeration and reliability problems.
\newblock {\em SIAM Journal on Computing}, 8(3):410--421, 1979.

\end{thebibliography}
\end{document}